\documentclass[11pt]{article}
\usepackage[margin=1in]{geometry}

\newcommand{\ignore}[1]{{}}

\usepackage{amsmath,amssymb,amsthm}
\usepackage{dsfont}
\usepackage{natbib}
\usepackage{graphicx} 
\usepackage[backref=page,linktocpage=true,breaklinks,colorlinks,citecolor=blue,linkcolor=BrickRed]{hyperref}
\usepackage{cleveref}
\crefname{algocf}{algorithm}{algorithms}
\Crefname{algocf}{Algorithm}{Algorithms}
\usepackage{enumitem} 
\usepackage{float}
\usepackage{bm}
\usepackage{tikz}
\usetikzlibrary{shapes,positioning}
\usepackage{thmtools,thm-restate}
\usepackage[ruled,linesnumbered]{algorithm2e}
\usepackage[dvipsnames]{xcolor}

\usepackage{decorated}
\usepackage{tocloft}
\usepackage{notation}
\usepackage{mathtools}
\usepackage{tcolorbox}
\usepackage{nicefrac}
\newtcolorbox{lpbox}{
  colframe=black,
  boxrule=0.8pt,
  arc=2pt,
  left=6pt,
  right=6pt,
  top=6pt,
  bottom=6pt
}

\newcommand{\snote}[1]{{\color{blue}{SAHIL: {#1}} }}

\newcommand{\rev}{{\rm Rev}}

\newcommand{\rem}{\lambda}
\newcommand{\good}{Scaling\;}
\newcommand{\f}{\mathsf{s}}

\newcommand{\vc}[1]{\mathbf{#1}}
\newcommand{\bx}{\vc{x}}
\newcommand{\by}{\vc{y}}
\newcommand{\bz}{\vc{z}}

\def\E{{\mathbb{E}}}		

\def\supp{\operatorname{supp}}
\def\1{\mathds{1}}             
\allowdisplaybreaks[4] 

\newtheorem{theorem}{Theorem}[section]
\newtheorem{claim}[theorem]{Claim}
\newtheorem{proposition}[theorem]{Proposition}
\newtheorem{lemma}[theorem]{Lemma}

\theoremstyle{definition}

\newtheorem{definition}[theorem]{Definition}

\newcommand{\hide}[1]{}

\definecolor{toc}{RGB}{13,55,174}	
\usepackage{hyperref}				
\hypersetup{
colorlinks=true,
citecolor=toc,
filecolor=black,
linkcolor=toc,
urlcolor=toc
}

\usepackage{cleveref}
\crefname{algocf}{algorithm}{algorithms}
\Crefname{algocf}{Algorithm}{Algorithms}

\usepackage{xspace}

\newcommand{\OWDI}{\ensuremath{\mathsf{OWDI}}\xspace}
\newcommand{\IGNORE}[1]{}
\renewcommand{\emph}[1]{\textit{#1}}
\title{Philosopher and Prophet Inequalities for Divisible Items}

\author{}

\begin{document}

	\author{Thiago Oliveira\thanks{
         (toliveira9@gatech.edu)
         H. Milton Stewart School of Industrial and Systems Engineering,
        Georgia Tech. Supported in part by NSF awards CCF-2440113, 2504994, 2106444 and the 2026 Spring ARC-ACO Fellowship.
    }
	\and Mohit Singh\thanks{
         (mohit.singh@isye.gatech.edu)
         H. Milton Stewart School of Industrial and Systems Engineering,
        Georgia Tech.
        Supported in part by NSF awards CCF-2504994, 2106444.
    }
	\and Sahil Singla\thanks{
        (ssingla@gatech.edu)
        School of Computer Science,
        Georgia Tech.
        Supported in part by NSF awards CCF-2327010 and CCF-2440113.
        }
}

\maketitle


\begin{abstract}


We study online welfare maximization with divisible resources. A sequence of $n$ players arrive one by one; upon arrival, each player draws a valuation function over $m$ divisible items from a known distribution, reveals this valuation, and must be allocated an irrevocable fractional bundle subject to unit supply constraints. While online welfare maximization has been extensively studied for indivisible items and combinatorial valuations, much less is known when the resources are divisible and players have multi-dimensional concave valuations.

\smallskip
We give approximation algorithms for monotone concave valuations satisfying
diminishing returns. Our main result is a \(2/3\)-approximation to the optimal
online policy, also known as the philosopher benchmark. The algorithm is guided
by a low-dimensional concave relaxation of the online benchmark and rounds it
via a new single-item capped online contention resolution scheme. This
\car problem allocates to each realized type no more than its prescribed
fractional bundle while preserving a \(2/3\)-fraction of that bundle in
expectation. Its analysis uses a submartingale potential for the remaining
supply. On the hardness side, we show that computing the optimal online policy is \#P-hard even for a single divisible item.

\smallskip
We also obtain a tight prophet inequality against the offline hindsight
optimum. We show that a fixed-price auction with one linear per-unit price for
each original divisible item achieves a \(1/2\)-approximation to the
offline/prophet benchmark. The prices are obtained by aggregating
Aumann--Shapley supporting prices, a continuous analogue of supporting prices
for submodular/XOS set functions, and yield simple item prices rather than
copy-dependent prices arising from discretization. The factor \(1/2\) for the prophet benchmark is information-theoretically tight even for one item with linear valuations.

\end{abstract}

\newpage
\section{Introduction}

Online welfare maximization asks how to allocate limited resources to agents
arriving over time. In its Bayesian form, player \(i\in[n]\) draws a valuation
\(V_i\sim\Dcal_i\), reveals it upon arrival, and must be allocated an
irrevocable bundle from the remaining resources. The goal is to maximize
expected social welfare.
This model is a common abstraction in online algorithms and also underlies
static posted-price mechanisms in online market design
\cite{EIV-Book23,Lucier17,FeldmanGL15a,DuttingFKL20a}.

Most of the literature on online welfare maximization focuses on
\emph{indivisible} items: each item can be allocated to at most one player, and
players value subsets of items, e.g., via additive, submodular, or XOS
valuations. 
For example, fixed posted prices yield a $1/2$-approximation to the offline (a.k.a.\ prophet) benchmark for submodular (or even XOS) valuations \cite{FeldmanGL15a}, and recent work gives improved guarantees against the online (a.k.a.\ philosopher) benchmark for unit-demand and submodular valuations \cite{PapadimitriouPSW21a,BravermanDPSW25a,STW-SODA26}.

However, many modern resources are naturally \emph{divisible}---e.g., CPU time, bandwidth, energy, or budget---and should be allocated fractionally. This motivates us to study the problem of maximizing \emph{Online Welfare with Divisible Items} (\OWDI): there are $m$ divisible items, each with unit supply\footnote{The unit-supply assumption is without loss of generality after rescaling.}, and players arrive sequentially. When player $i$ arrives, the algorithm observes \(V_i\sim\Dcal_i\) and must irrevocably choose a fractional bundle $\bx_i\in[0,1]^m$ subject to the per-item capacity constraints $\sum_i x_{i,j}\le 1$ for all $j\in[m]$. We formally define the model in \Cref{sec:prelim}.


We focus on monotone concave DR-submodular valuations, a standard continuous
counterpart of submodular set functions. A valuation
$f:[0,1]^m\to\Reals_+$ is DR-submodular if it satisfies diminishing returns:
for all $\vc{x},\vc{y},\vc{z}\in\Reals_+^m$ with $\vc{x}\le \vc{y}$ and with
$\vc{x}+\vc{z},\vc{y}+\vc{z}\in[0,1]^m$,
\[
    f(\vc{x}+\vc{z})-f(\vc{x})
    \;\ge\;
    f(\vc{y}+\vc{z})-f(\vc{y}).
\]
Together with concavity and monotonicity, this class captures natural
continuous substitutes for submodular valuations, including budget-additive and
separable concave valuations. 
Thus, \OWDI is a natural continuous analogue of online welfare maximization with submodular agents. 
DR-submodular valuations have been extensively studied in online resource allocation with concave objectives, 
typically as a generalization of online bipartite matching \cite{KarpVV90a} and AdWords \cite{MehtaSVV07a} in models where \emph{items} arrive online, rather than buyers with multi-dimensional valuations \cite{Patton-SODA26,HJPSZ-FOCS24,DJ-STOC12,FKMMP-WINE09}.

We evaluate algorithms against two standard benchmarks. The \emph{offline}
benchmark, also called the \emph{prophet} benchmark, is the expected welfare of
the optimal allocation that observes the entire realized valuation profile in
hindsight. The \emph{online} benchmark, also called the \emph{philosopher}
benchmark, is the expected welfare of the optimal online policy with the same
information as the algorithm but with unlimited computational power. The
philosopher benchmark is always at most the prophet benchmark, but is often the
more tailored yardstick: it measures the loss from using an efficient
online algorithm rather than the unavoidable loss from not seeing the future.

Since the online submodular welfare results for the indivisible setting extend to the divisible setting by discretization, this raises the following basic question:
\vspace{-0.1cm}
\begin{quote}
    Is \OWDI simply a continuous version of the indivisible-item problem, 
or  does fractional allocation reveal new structure and lead to stronger,
    simpler algorithms? 
\end{quote}
In the indivisible setting, online algorithms must resolve contention among
discrete subsets of items. In the divisible setting, by contrast, one may hope to resolve contention
by assigning fractional items as resources are depleted while allocating them to arriving players. Are there new algorithmic techniques that can utilize this power to even go beyond what is attainable for the discrete setting?


\subsection{Our Contribution and Techniques}

Our main result is a philosopher inequality for divisible resources that achieves a $\frac23$-approximation. This exceeds the tight $(1-1/e)$ value-oracle barrier for offline submodular welfare with indivisible items \cite{MirrokniSV08a}, highlighting that fractional allocation enables genuinely stronger guarantees.

\begin{restatable}{theorem}{mainphi}
\label{teo:mainphi}
        For the online welfare with divisible items (\OWDI) problem with
        monotone concave DR-submodular valuations, there exists a
        polynomial-time algorithm achieving a $\frac23$-approximation to the
        online/philosopher benchmark.
\end{restatable}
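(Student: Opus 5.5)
Our plan follows the structure described in the abstract: write a concave relaxation of the online benchmark, solve it in polynomial time, and round it online with a capped online contention resolution scheme (OCRS) applied separately to each item.

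\textbf{Step 1: the relaxation.} We work with finite-support distributions (or a sampling-based approximation of them). For every player \(i\) and type \(V\) in the support of \(\Dcal_i\), we introduce a fractional bundle \(\by_{i,V}\in[0,1]^m\). We then write a program that maximizes \(\sum_i \E_{V\sim\Dcal_i}[V(\by_{i,V})]\), subject to an online-feasibility constraint for each item \(j\). This constraint is the divisible analogue of the Papadimitriou et al.\ LP for the philosopher benchmark. Player \(i\) can only receive from what remains after players \(1,\dots,i-1\), so we require
\[
y_{i,V,j}\;\le\; 1-\sum_{i'<i}\E_{V'\sim\Dcal_{i'}}[y_{i',V',j}]\qquad\text{for all } i,\,V,\,j .
\]
A weaker single-item aggregate constraint may also suffice.

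The relaxation is valid for the following reason. Take any online policy. Set \(\by_{i,V}\) to be the expectation of player \(i\)'s allocation conditioned on \(V_i=V\), where the expectation is over the history. By concavity and Jensen's inequality, the objective at this \(\by\) is at least the policy's welfare. The constraint holds because the remaining supply is a random variable whose expectation equals the right-hand side, and the policy never allocates more than the remaining supply. The program is concave maximization over a polytope, so it is solvable in polynomial time.

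\textbf{Step 2: the rounding.} When player \(i\) arrives with type \(V\), we allocate \(\bx_{i}\le \by_{i,V}\) coordinatewise. For each item \(j\) we run a single-item capped OCRS. Let \(R_{i,j}\) be the remaining supply of item \(j\). The OCRS gives \(\min(y_{i,V,j},\,\cdot)\), possibly scaled by some factor, and the goal is \(\E[x_{i,j}\mid V_i=V]\ge \tfrac23 y_{i,V,j}\). We then need the value to degrade gracefully. We would like \(\E[V(\bx_i)]\ge \tfrac23 V(\by_{i,V})\). This follows if each coordinate is a random variable \(\alpha_j y_j\) with \(\alpha_j\in[0,1]\) and \(\E\alpha_j\ge 2/3\). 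For monotone concave DR-submodular \(V\), evaluating along the chain \(0\to\by\) and using DR together with concavity along each coordinate gives \(\E V(\bx)\ge \min_j \E[\alpha_j]\cdot V(\by)\), provided the \(\alpha_j\) are independent or suitably correlated. A cleaner option is to use one common random scaling across coordinates, since \(V(\alpha\by)\ge \alpha V(\by)\) by concavity and \(V(0)\ge 0\). We prefer to design the OCRS so that the per-coordinate allocation is \(\beta_{i,j}\,y_{i,V,j}\) with \(\beta_{i,j}\) depending only on the remaining supply and not on \(V\). Then DR-submodularity lets us lower-bound each coordinate separately, walking coordinates in order.

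\textbf{Step 3: the single-item 2/3 capped OCRS (the main obstacle).} Here the problem is one item. The requests are random amounts \(Y_i\) with \(\sum_{i'<i}\E Y_{i'}\le 1-Y_i\) pointwise, and we must allocate \(X_i\le \min(Y_i,R_i)\) with \(\E X_i\ge \tfrac23\E Y_i\). Our first attempt is a greedy-with-throttling rule: give \(X_i=Y_i\cdot\min\{1,\,R_i/(\text{target})\}\), or alternatively accept fully with a probability that depends on \(R_i\). We then define a potential such as \(\Phi_i=R_i-\tfrac13\bigl(1-\sum_{i'<i}\E Y_{i'}\bigr)\), or a convex function of \(R_i\), and show it is a submartingale. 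This certifies that the remaining supply stays large enough, in expectation, that a \(2/3\) fraction of each request is served. The constant \(2/3\) should come from balancing the worst case of two players. In that case a first player takes a fraction of the supply that a late, large request needs. We expect this potential argument to be the hardest step, and we would tune the throttling function by solving the two-player extremal case first.

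Summing over players then gives the algorithm's welfare as at least \(\tfrac23\) times the relaxation value, which is at least \(\tfrac23\) times the philosopher benchmark.
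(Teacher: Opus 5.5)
Your Steps 1 and 2 match the paper's architecture. Your constraint $y_{i,V,j}\le 1-\sum_{i'<i}\mathbb{E}[y_{i',V',j}]$ is exactly the online constraint of the paper's collapsed concave program (with $x_{i,k,j}=p_{i,k}y_{i,k,j}$), and the per-item reduction is its Decomposition Lemma. Two small points there:
\begin{itemize}
\item No independence or special correlation across items is needed. For each realization $\mathbf{y}\le\mathbf{x}$, the bound $f(\mathbf{y}_{\le j})-f(\mathbf{y}_{\le j-1})\ge \frac{y_j}{x_j}(f(\mathbf{x}_{\le j})-f(\mathbf{x}_{\le j-1}))$ holds pointwise, and expectations are taken only afterwards.
\item The single-item guarantee you need is per type, $\mathbb{E}[X_i\mid V_i=V]\ge\frac23 y_{i,V}$, not $\mathbb{E}X_i\ge\frac23\mathbb{E}Y_i$ as written in Step 3.
\end{itemize}

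The genuine gap is Step 3. It carries the whole theorem, and you leave it as a plan whose concrete ingredients do not work.
\begin{itemize}
\item \emph{Greedy throttling fails.} Greedy $X_i=\min\{Y_i,R_i\}$, and equally $X_i=Y_i\min\{1,R_i/\lambda_i\}$, only achieves $1/2$. Let player 1 request $1$ with probability $1/2$ (else $0$), and player 2 request $1/2$ deterministically; this is feasible since $1/2\le 1-\frac12$. Player 1 takes the whole item whenever active, so player 2 receives half its request in expectation. With active-type valuation $\varepsilon x$ for player 1 and $\min\{x,1/2\}$ for player 2, this is the unique optimal relaxation solution. The rounding earns about $1/4$, while the philosopher benchmark is about $1/2$.
\item \emph{Your potential has the wrong sign.} $R_i-\lambda_i/3$ has increment $-X_i+\mathbb{E}[Y_i]/3$. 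Being a submartingale would therefore force $\mathbb{E}[X_i]\le\mathbb{E}[Y_i]/3$, the opposite of what you need. More generally, no potential seeing only $\mathbb{E}[R_i]$ controls selectability, which depends on the whole law of $R_i$.
\item \emph{Two-player calibration is not enough.} It only tests the first transition, where actual and predicted remaining supply coincide. Later players face every ratio $R_i/\lambda_i\in\mathbb{R}_+$.
\end{itemize}

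What is missing is the paper's scaling rule and its submartingale.
\begin{itemize}
\item \emph{The rule.} Set $Z_i=y_{i,V}\,\mathsf{s}(R_i/\lambda_i)$. Here $\lambda_i=1-\sum_{\ell<i}\mathbb{E}[Y_\ell]$ is the relaxation's predicted remaining supply, and $\mathsf{s}$ is concave with $\mathsf{s}(z)\le\min\{z,1\}$. This gives capping, and feasibility follows from $y_{i,V}\le\lambda_i$.
\item \emph{The potential.} Since $R_i$ is independent of $V_i$, player $i$'s selectability is exactly $\mathbb{E}[\mathsf{s}(R_i/\lambda_i)]$. This is the potential to show nondecreasing.
\item \emph{The one-step inequality.} Condition on $R_i=r$. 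Use concavity of $y\mapsto\mathsf{s}((r-y\,\mathsf{s}(r/\lambda_i))/\lambda_{i+1})$ on $[0,\lambda_i]$, where all requests lie, together with $\sum_k x_{i,k}=\lambda_i-\lambda_{i+1}$. With $z=r/\lambda_i$ and $\mu=\lambda_{i+1}/\lambda_i$, the step reduces to $(1-\mu)\,\mathsf{s}(\frac{z-\mathsf{s}(z)}{\mu})+\mu\,\mathsf{s}(\frac{z}{\mu})\ge\mathsf{s}(z)$ for all $z\ge 0$, $\mu\in(0,1]$. The guarantee is then $\mathsf{s}(1)$.
\item \emph{The explicit function.} Reaching $2/3$ requires exhibiting such an $\mathsf{s}$, e.g.\ $\mathsf{s}(z)=1-1/(1+z+z^2/2+z^3/2)$, via a nontrivial polynomial-positivity verification. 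Natural choices like $z/(1+z)$ and $1-e^{-z}$ only give $1/2$ and $1-1/e$.
\end{itemize}
Your proposal contains none of these ingredients: the ratio state, the functional inequality holding for all $z$, or the explicit function. So the $2/3$ is not established.
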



The algorithm relies on three ingredients, each using the divisibility of the
resources and the concave DR-submodularity of the valuations.

\medskip \noindent \textbf{(1) Configuration LP collapses.} 
As in the indivisible philosopher-inequality literature, the natural starting
point is a configuration relaxation with online feasibility constraints \cite{STW-SODA26,PapadimitriouPSW21a}.  The configuration relaxation for the optimal online policy
has variables describing the distribution over \emph{bundles} of resources to players.
For divisible resources with concave valuations, this relaxation simplifies substantially. By concavity, a distribution over fractional bundles can be replaced by its mean bundle without decreasing the relaxed objective, and the online-feasibility constraints remain linear in these mean allocations. Thus the configuration relaxation collapses to a low-dimensional concave program whose variables are expected fractional allocations of resources to players. This compact relaxation is the object we round online. We describe the configuration relaxation, the concave relaxation, and their equivalence in \Cref{sec:prelim}.

\medskip \noindent \textbf{(2) Reducing to single-item \car.} 
We show that DR-submodularity allows us to reduce the multi-item rounding problem to a \emph{single} item problem with an additional capping constraint. In the single item problem, each realized player has a prescribed target amount. The goal is to give a fractional assignment of the items to the randomly arriving players that must satisfy two requirements simultaneously:
it should assign at least an $\alpha$-fraction of target in expectation, and it
should never allocate more than the target itself. We call such a procedure a \emph{capped online contention resolution scheme} (\car).

For the reduction, we run a separate instance of the one-item \car problem for each
item. The target per item for any player is exactly the fractional amount assigned to it by the concave relaxation. Using DR-submodularity, we combine the item-wise guarantees into the same $\alpha$-approximation for the player’s full value.

\medskip \noindent \textbf{(3) A scaling-based \car algorithm.}
The remaining task is to design a \car for a single
divisible item. 
Our \car algorithm is based on a continuous scaling approach. Let \(R_t\) denote the actual remaining
supply when player \(t\) arrives, and let \(\rem_t\) denote the remaining
supply predicted by the concave relaxation at the same point. The ratio
\(R_t/\rem_t\) measures whether the online process is ahead of or behind the
relaxation. When this ratio is small, the algorithm has spent more supply than expected and assigns only a smaller fraction than expected; when it is large, the algorithm can be more generous, but can never assign more than the target due to the capping constraint. Formally, the algorithm uses a \emph{scaling function}
$
    \f:\Reals_+\to[0,1],$
and allocates a scaled version of the current target with scaling factor
\(\f(R_t/\rem_t)\), subject to the natural caps imposed by the target and the
remaining supply. The condition \(\f\le 1\) ensures that no player receives more
than their prescribed target, while $\f(x)\leq x$ ensures that the algorithm never
allocates more than the remaining supply. We formally define a scaling function in \Cref{def:good}.

The analysis tracks the evolution of the ratio between the actual remaining
supply and the relaxation's predicted remaining supply. A potential-function based martingale argument shows that, as long as \(\f\)
satisfies a certain inequality, the worst preservation guarantee occurs at the
first player. Since initially \(R_1=\rem_1=1\), the approximation factor of the capped
scheme is exactly the initial scaling value \(\f(1)\). Thus the design problem reduces to choosing a scaling function $\f$ that respects the cap, satisfies the martingale inequality, and maximizes $\f(1)$. We exhibit a simple scaling function \(\f\) with \(\f(1)=2/3\), which yields the approximation guarantee in
\Cref{teo:mainphi}.  We remark that a heuristic computational search suggests that the optimal approximation via this approach is around $0.69$. We also show the integrality gap of the concave program is at most $0.75$ (see \Cref{sec:appendixIG}) but computational evidence suggests an improved bound of $0.71$.  


We also show in \Cref{sec:hardness} that the philosopher benchmark is
computationally nontrivial even in the divisible model: computing the optimal
online policy exactly is \(\#P\)-hard, even for a single divisible item. 
Thus the relaxation is not merely a notational convenience: it is an efficiently
computable surrogate for a benchmark that is itself hard to compute.

\medskip

Our second result is a tight prophet inequality via simple posted prices.

\begin{restatable}{theorem}{postedpricecomp}
\label{teo:postedpricecomp}
    For the online welfare with divisible items (\OWDI) problem with monotone concave DR-submodular valuations, there exist efficiently computable linear per-unit item prices such that the fixed-price auction achieves a \(1/2\)-approximation to the offline/prophet benchmark.
\end{restatable}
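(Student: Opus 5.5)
The plan is to follow the Feldman--Gravin--Lucier balanced-prices template, with the discrete supporting prices replaced by Aumann--Shapley prices of the realized offline optimum. Fix a realized profile $\vc{V}=(V_1,\dots,V_n)$ and let $\bx^*(\vc{V})$ be an optimal offline fractional allocation. For each player $i$ with bundle $\bx^*_i=\bx^*_i(\vc{V})$, define the Aumann--Shapley per-unit price of item $j$ by
\[
 q_{i,j}(\vc{V}) \;=\; \int_0^1 \partial_j V_i(t\,\bx^*_i)\,dt .
\]
By the fundamental theorem of calculus along the ray, $\sum_j q_{i,j}x^*_{i,j}=V_i(\bx^*_i)$, since $V_i(\vc 0)=0$ after normalization. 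The posted price is then $p_j=\tfrac12\,\E_{\vc V}\big[\sum_i q_{i,j}(\vc V)\,x^*_{i,j}(\vc V)\big]$. This makes $\sum_j p_j=\tfrac12\OPT$, where $\OPT$ is the prophet benchmark. If $\bx^*$ and the expectation can only be approximated (via sampling and a convex solver), I would absorb the resulting $\epsilon$ loss; efficiency of the price computation is secondary.

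The key structural step is a \emph{supporting-price inequality} for a single player. For any $\by\le\bx^*_i$ componentwise, I want
\[
 V_i(\bx) \;\ge\; \sum_j q_{i,j}\,x_j \quad\text{for every } \bx\le\bx^*_i,
\]
and more usefully, for any $\bz$ I want $V_i(\bz+\bx)-V_i(\bz)\ge$ something comparable to $\sum_j q_{i,j}x_j$ restricted to the part of $\bx^*_i$ not covered by $\bz$. By DR-submodularity, $\partial_j V_i(t\bx^*_i)$ is nonincreasing in $t$. Integrating the gradient along the segment from $\vc 0$ to $\bx$ then gives $V_i(\bx)\ge\sum_j x_j\,\partial_jV_i(\bx^*_i)$. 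That bound is weaker than what I need, so I would instead use an item-by-item path: raise the coordinates one at a time, and apply DR along with the fact that the Aumann--Shapley price is the \emph{average} marginal. The cleanest version I would aim for is the following. Let $\bz=(\bx^*_i-\by)^+$ be the portion of the optimal bundle still purchasable, where $\by$ is the amount already sold. Then
\[
 V_i(\bz)\;\ge\;\sum_j q_{i,j}z_j - \big(\text{correction}\big).
\]
Since a gap here seems likely, I would fall back on the XOS-style characterization: a monotone concave DR-submodular $f$ is ``fractionally subadditive'' in the sense that $f(\bz)\ge \langle \nabla f(\bx^*)\text{-type supporting vector},\bz\rangle$. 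This follows because concavity gives $f(\bz)\ge f(\bx^*)+\langle\nabla f(\bx^*),\bz-\bx^*\rangle$, and for DR functions one uses $\bz\le\bx^*$ together with monotonicity. This is the step I expect to be the main obstacle: choosing prices, whether Aumann--Shapley or the gradient at $\bx^*_i$, that simultaneously sum to $V_i(\bx^*_i)$ and lower-bound $V_i$ on sub-bundles.

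With that inequality in hand, the analysis is the standard two-part split. Let $Y_j$ be the final amount of item $j$ sold, so $\mathrm{ALG}=\E[\text{revenue}]+\E[\text{utility}]$ with $\E[\text{revenue}]=\sum_j p_j\E[Y_j]$. For utility, I would use a ghost sample $\vc V'$. Player $i$ could buy $\bz=(\bx^*_i(V_i,\vc V'_{-i})-\by)^+$ restricted to the remaining supply, where $\by$ is the amount sold before $i$ and the supply is at least $1-Y_j$. Its utility is therefore at least $V_i(\bz)-\langle \vc p,\bz\rangle\ge\sum_j(q_{i,j}-p_j)\,x^*_{i,j}(1-Y_j)^{+}$. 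This uses per-coordinate scaling, which is valid because $V_i$ is concave and monotone, so restricting a coordinate fraction loses at most proportionally. Summing over $i$ and taking expectations, with $V_i$ independent of the sales before $i$, gives
\[
 \E[\text{utility}]\ge\sum_j(2p_j-p_j)\,\E[1-Y_j]=\sum_j p_j(1-\E Y_j).
\]
Adding the revenue gives $\mathrm{ALG}\ge\sum_jp_j=\tfrac12\OPT$.

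The delicate points are the following. First, the proportional-loss claim $V_i(\bz\odot\vc\theta)\ge\sum_j\theta_j\cdot(\text{price share})$ must be justified via the supporting-price inequality from the previous step. Second, $Y_j$ should be the supply sold before $i$, which is bounded by the final total. Third, tightness comes from the standard single-item linear example, which is already an instance of this setting.
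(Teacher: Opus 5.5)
\paragraph{Verdict: genuine gap.} Your template is the right one: Aumann--Shapley prices, $p_j$ equal to half the expected welfare contribution of item $j$, and a revenue-plus-utility split with a ghost sample. With the hindsight optimum it is exactly the paper's existence argument. But the proof has a genuine hole at the step you flag yourself. You never prove that $q(\bx)=\int_0^1\nabla v(t\bx)\,dt$ supports dominated bundles, i.e.\ that $v(\by)\ge\langle q(\bx),\by\rangle$ for $\vc{0}\le\by\le\bx$, and neither of your fallbacks works.
\begin{itemize}
\item Comparing $\nabla v(t\by)$ with $\nabla v(\bx)$ gives gradient prices. These do support sub-bundles, but they only satisfy $\langle\nabla v(\bx),\bx\rangle\le v(\bx)$, often strictly, so $\sum_j p_j=\tfrac12\mathrm{OPT}$ is lost.
\item Your concavity justification is backwards: for concave $f$ the tangent plane is an upper bound, $f(\bz)\le f(\bx)+\langle\nabla f(\bx),\bz-\bx\rangle$.
\end{itemize}
The missing idea is to compare the two rays pointwise in $t$. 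Since $t\by\le t\bx$, DR-submodularity gives $\nabla v(t\by)\ge\nabla v(t\bx)$ for every $t\in[0,1]$, hence $q(\by)\ge q(\bx)$. Applying the efficiency identity to $\by$ itself then gives $v(\by)=\langle q(\by),\by\rangle\ge\langle q(\bx),\by\rangle$. This lemma is exactly what your ``per-coordinate proportional loss'' claim needs; concavity and monotonicity alone only control a common scaling of all coordinates.

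\paragraph{The utility step.} The bundle to offer player $i$ should not be $(\bx^*_i-\by)^+$. Let $\tilde{\by}_i$ be the ghost-sample target bundle and $\tilde q_i$ its Aumann--Shapley vector, and offer $\tilde Z_{i,j}=R_{i,j}\,\tilde y_{i,j}\,\mathbf{1}\{\tilde q_{i,j}\ge p_j\}$, where $R_{i,j}$ is the supply left before $i$. This bundle is dominated by $\tilde{\by}_i$, so its utility is at least $\sum_j R_{i,j}\tilde y_{i,j}(\tilde q_{i,j}-p_j)_+$. Only because these terms are nonnegative can you lower $R_{i,j}$ to the final leftover $R_{n+1,j}$ and pull it out of the sum over $i$. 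Without the indicator, terms with $\tilde q_{i,j}<p_j$ go the wrong way.

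\paragraph{Efficiency.} The second gap is efficient computability, which the statement requires. Your prices are expectations of hindsight-optimal contributions over the exponentially large product support. Estimating them by sampling can require a number of samples inversely proportional to the probability of a rare high-value type, which may be exponentially small in the input size. Moreover, an additive error $\delta_j$ in $p_j$ costs $|\delta_j|$ in the final bound.

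The paper avoids this by observing that the balanced-prices argument only uses ex-ante feasibility: any family $\by^{\vc v}$ with $y^{\vc v}_{i,j}\le 1$ and $\Ebb_{\vc V}[\sum_i y^{\vc V}_{i,j}]\le 1$ works. It therefore takes $\by_i=\bx^*_{i,k}/p_{i,k}$ whenever $V_i=v_{i,k}$. Here $\bx^*$ solves the polynomial-size concave ex-ante relaxation, with constraints $x_{i,k,j}\le p_{i,k}$ and $\sum_{i,k}x_{i,k,j}\le 1$, whose value upper-bounds the prophet benchmark by Jensen. The prices $p_j=\tfrac12\sum_{i,k}q_{i,k,j}x^*_{i,k,j}$ are then computed deterministically, to arbitrary accuracy, from this program. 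Concavity is what makes this relaxation valid and tractable. Your hindsight-optimum construction yields only existence, which already holds under monotone DR-submodularity alone.
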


A key technical ingredient is a continuous analogue of \emph{supporting prices} for XOS/submodular set functions.
For a valuation and allocation vector \(\bx\), the Aumann--Shapley value
assigns coordinate prices by integrating the gradient along the ray from
\(0\) to \(\bx\) \cite{AumannShapley-Book75}. We aggregate these supporting prices across players and types using the offline relaxation. 
We show that for monotone DR-submodular valuations these prices support every coordinatewise dominated allocation $0\le \by\le \bx$ in the same sense as supporting prices for sets.
This lets us transfer the standard balanced-prices analysis of prophet inequalities to the divisible setting \cite{DuttingFKL20a,FeldmanGL15a}.
The main additional challenge is aggregation: Aumann--Shapley supporting prices are naturally defined separately for each player and realized type, while a posted-price mechanism must announce a single per-unit price for each item. We show that the appropriate welfare-weighted aggregation preserves the revenue--utility decomposition required by the balanced-prices proof. 
The resulting prices are computed from an optimal solution of the low-dimensional offline relaxation.

Divisibility is also crucial for this result.
Discretization can recover the factor $1/2$ using \cite{FeldmanGL15a}, but naturally produces copy-dependent supporting prices because a divisible item can be split among multiple buyers. Our analysis instead yields a single linear per-unit price for every original divisible resource.

\ignore{
We evaluate online performance against two standard benchmarks.
The \emph{offline} (or \emph{prophet}) benchmark is the expected welfare of the optimal allocation that sees the entire realized valuation profile in hindsight.
The \emph{online} (or \emph{philosopher}) benchmark is the expected welfare of the optimal \emph{online policy} with the same information as our algorithm but unlimited computation.
The online benchmark is smaller than the prophet benchmark, but more tailored: it measures the loss from computational/algorithmic restrictions rather than from lack of future information.


Given the rich literature on indivisible items, we ask whether the divisible
setting admits analogues of the two central guarantees, and whether divisibility
can be exploited to obtain stronger guarantees against the online benchmark:
\begin{quote}
Can we obtain a constant strictly better than $1/2$ against the
\emph{online/philosopher} benchmark for general concave diminishing-returns
valuations?
Against the stronger \emph{offline/prophet} benchmark, can the classical
$1/2$ posted-price guarantee for submodular/XOS valuations be extended to
divisible resources using \emph{simple linear prices}, namely a single
per-unit price for each item?
\end{quote}

Our main message is that divisibility is not merely a continuous limit of the indivisible setting. For the philosopher benchmark, divisibility collapses the natural configuration relaxation, including its online-feasibility constraints, to a low-dimensional concave program. This structure enables a new valuation-independent capped online rounding
scheme with a $(1-1/e)$ guarantee. For the prophet benchmark, divisibility raises a different challenge: a black-box discretization may produce copy-dependent prices, whereas we obtain one uniform linear per-unit price for each original divisible resource.

We focus on monotone concave DR-submodular valuations, the continuous analogue of monotone submodular set functions:
\[
f(\bx+\bz)-f(\bx)\ \ge\ f(\by+\bz)-f(\by)
\quad
\text{for \(\vc{x}, \vc{y}, \vc{z} \in \Reals^m_+\) with \(\vc{x} \leq \vc{y}\).}
\]
This class includes budget-additive and separable concave valuations, and captures diminishing marginal returns across resources.
}

\ignore{
\subsection{Our Contribution and Techniques}

Our main result is a philosopher inequality for divisible resources.
In \Cref{sec:phi}, we define the notion of a \good function; see \Cref{def:good}.
For
monotone concave DR-submodular valuations, we are able to give a constant factor
approximation to the optimal online policy depending on the value of a \good function.

\begin{restatable}{theorem}{mainphi}
\label{teo:mainphi}
        Let \(\f\) be a \good function. 
        For the online welfare with divisible items (\OWDI) problem with monotone concave DR-submodular valuations, there exists a polynomial-time algorithm achieving an \(\f(1)\)-approximation to the online/philosopher benchmark.
        Furthermore,
        there is a \good function with \(\f(1) = 2/3\).
\end{restatable}

There are simpler \good functions achieving worse approximation factors, like \(1/2\) and \((1 - 1/e)\).
Moreover, computational results indicate there is a \good function achieving a factor of \(0.698\).

The main conceptual contribution is a new \emph{capped online contention resolution scheme} (\car) for a single divisible item. Unlike standard OCRS \cite{FSZ-SODA16}, the scheme must satisfy two simultaneous requirements: it preserves a constant fraction of every target allocation in expectation, while never allocating more than the target itself. We prove that a $(1-1/e)$ guarantee is achievable using a valuation-independent online rule based on the remaining supply. \snote{more on technique}

To leverage our \car for divisible welfare maximization, we first show that concavity collapses the natural configuration relaxation for the philosopher benchmark, including its online-feasibility constraints, to a polynomial-size concave program. We then prove that DR-submodularity reduces multi-item rounding to independent coordinate-wise \car instances. Combining these ingredients yields a $(1-1/e)$-approximation to the optimal online policy.
We also show in \Cref{sec:hardness} that this philosopher benchmark is computationally nontrivial: computing the optimal online policy exactly is \#P-hard even for a single divisible item.

\medskip
\noindent\emph{Why divisibility matters:} 
A natural approach is to discretize each divisible item into many indivisible copies and invoke known prophet or philosopher inequalities for submodular valuations. Our results show that such a black-box reduction fundamentally loses the structure of divisible resources.
Consider an instance of offline welfare maximization, i.e., all players have a single function in their support, with indivisible items.
The philosopher benchmark is the optimum allocation, since it has unbounded computational power.
Due to the result of Mirrokni, Schapira, and Vondrak \cite{MirrokniSV08a}, no algorithm can achieve a better than \((1 - 1/e)\)-approximation without an exponential number of value oracle queries. 
Hence, for the philosopher benchmark, discretization is limited to \((1- 1/e)\).
In contrast, divisibility allows the configuration relaxation to collapse to a low-dimensional concave program and
we are able to achieve a strictly better than \((1 - 1/e)\)-approximation via \car.

Our second result is a tight prophet inequality via simple posted prices.


\begin{restatable}{theorem}{postedpricecomp}
\label{teo:postedpricecomp}
    For the online welfare with divisible items (\OWDI) problem with monotone concave DR-submodular valuations, there exist efficiently computable linear per-unit item prices such that the fixed-price auction achieves a \(1/2\)-approximation to the offline/prophet benchmark.
\end{restatable}


A key technical ingredient is a continuous analogue of \emph{supporting prices} for XOS/submodular set functions.
For an allocation vector $\bx$, we define prices using a weighted average of the \emph{Aumann--Shapley value} (a path integral of the gradient along the ray from $0$ to $\bx$) \cite{AumannShapley-Book75}. 
We show that for monotone DR-submodular valuations these prices support every coordinatewise dominated allocation $0\le \by\le \bx$ in the same sense as supporting prices for sets.
This lets us transfer the standard balanced-prices analysis of prophet inequalities to the divisible setting \cite{DuttingFKL20a,FeldmanGL15a}.
The main additional challenge is aggregation: Aumann--Shapley supporting prices are naturally defined separately for each player and realized type, while a posted-price mechanism must announce a single per-unit price for each item. We show that the appropriate welfare-weighted aggregation preserves the revenue--utility decomposition required by the balanced-prices proof. 
The resulting prices are computed from an optimal solution of the low-dimensional offline relaxation.

Divisibility is also crucial for this result.
Discretization can recover the factor $1/2$ using \cite{FeldmanGL15a}, but naturally produces copy-dependent supporting prices because a divisible item can be split among multiple buyers. Our analysis instead yields a single linear per-unit price for every original divisible resource.

}

\subsection{Related Work}
\emph{Philosopher inequalities:}
Philosopher inequalities benchmark against the \emph{optimal online policy}
rather than the offline optimum. This benchmark has been studied extensively
for online Bayesian matching and related indivisible-item allocation problems,
leading to stronger approximation guarantees and new rounding/selection
primitives
\cite{PapadimitriouPSW21a,TorricoT22a,BravermanDL22a,BKPS-EC24,BravermanDPSW25a,NSW-SODA25,STW-SODA26}.
Since discretizing a DR-submodular function gives a submodular function, any results obtained for submodular philosopher inequality in the indivisible setting can be applied to our problem.
Thus, a \((1/2 + \delta)\)-approximation, for a small constant \(\delta>0\), follows from \cite{STW-SODA26}.
However, as stated before, this approach cannot exceed \(1-\nicefrac{1}{e}\) due to hardness of offline submodular welfare maximization~\cite{MirrokniSV08a}.
Our work gives philosopher guarantees for divisible-item welfare with general
concave diminishing-returns valuations. Our hardness result also shows that the optimal online policy can be computationally intractable even when the offline optimum is easy.

\medskip
\noindent \emph{Prophet inequalities and pricing:}
Prophet inequalities compare online algorithms to the offline optimum; see
\cite{Lucier17}, \cite[Chapter~30]{EIV-Book23}, and the forthcoming book
\cite{CDFKS-Book26}. In multi-item welfare settings, posted prices are
central because of their simplicity and their connection to
combinatorial auctions. Feldman, Gravin, and Lucier \cite{FeldmanGL15a} show
that fixed prices achieve a $1/2$-approximation for XOS, and hence submodular,
valuations with indivisible items. D\"utting et al.\ \cite{DuttingFKL20a}
develop the balanced-prices framework, which we adapt using continuous
supporting prices derived from Aumann--Shapley values. Recent work has also
studied subadditive valuations with indivisible items
\cite{CC-STOC23,DKL-FOCS20}, though obtaining an $\Omega(1)$-approximation via
posted prices remains open.

\medskip
\noindent
\emph{Divisible resources and concave returns:}
A large literature studies divisible resource allocation with concave
objectives when the \emph{items} or impressions arrive online, including online
matching and AdWords \cite{KarpVV90a,MehtaSVV07a}, and extensions to concave
returns \cite{FKMMP-WINE09,Patton-SODA26,HJPSZ-FOCS24,DJ-STOC12}. These models
are orthogonal to \OWDI: the online dimension is the arrival of items rather
than buyers with multi-dimensional valuations. Consequently, the algorithmic
focus in that line is typically on improving greedy-type $1/2$ guarantees
toward $(1-1/e)$.


\medskip
\noindent
\emph{Continuous DR-submodularity:} Continuous DR-submodular functions have been studied extensively in offline optimization, including structural characterizations and approximation algorithms \cite{BianLK-NeurIPS17,BianBK20a,NRW-JMLR20}. Our use is different: we exploit diminishing returns in an online welfare problem, both to define continuous supporting prices and to reduce multi-item rounding to our per-item capped OCRS.

\section{Model and Concave Relaxations}
\label{sec:prelim}
We formally define our model and present concave relaxations to our two benchmarks.

\subsection{Model}

An instance of the \emph{Online Welfare with Divisible Items} (\OWDI) problem consists of \(m\) items and \(n\) players.
For each item \(j \in [m]\), there is a unit supply.
For each player \(i \in [n]\), a finite support distribution \(\Dcal_i\) over nonnegative functions with domain \([0, 1]^m\) is given.
Then, one by one, the players arrive:
when player \(i\) arrives, they sample a valuation function \(V_i\) from
\(\Dcal_i\),  and we must immediately decide
what fraction $x_{i,j}$ of each item $j$ will be assigned to \(i\), while ensuring that $\sum_{i=1}^n x_{i,j}\leq 1$.
The goal is to maximize in expectation the sum of player valuations, $\Ebb[\sum_{i=1}^n V_i(\bx_i)]$, 
where $\bx_i := (x_{i,j})_{j\in [m]}$ denotes the {random} $m$-dimensional vector of allocations to player $i$ {that can depend on the observed random valuation functions $V_1,\ldots, V_i$.}

We write \(\Dcal \coloneqq \Dcal_1\times\cdots\times \Dcal_n\)
for the product distribution over valuation profiles, and use
\(\vc{V}\sim \Dcal\) to denote a random valuation profile.

\medskip
\noindent\textbf{Benchmarks.}
Given an instance of \OWDI, we will be interested in the following two benchmarks:
\begin{enumerate}
    \item \emph{Prophet/Offline Benchmark}: It is the expected value of the hindsight optimum solution that knows all the realized valuations $(V_i)_{i\in [n]}$ and then finds an allocation that maximizes $\sum_{i=1}^n V_i(\bx_i)$ while satisfying unit capacity item constraints.
    
    \item \emph{Philosopher/Online Benchmark}: It is the expected value of the optimal online policy that has the same information as our online algorithm but is computationally unbounded (hence, it can solve the exponentially/infinitely sized underlying MDP).
\end{enumerate}

Observe that the online benchmark is a tighter benchmark than the offline benchmark. Nonetheless, we also consider the offline benchmark and compare it against simpler pricing-based algorithms.

\medskip
\noindent\textbf{Value distributions.}
In this work, we assume that each valuation distribution \(\Dcal_i\) has finite support and is supported on normalized, monotone, concave DR-submodular functions \(v:[0,1]^m\to \mathbb R_+\), where normalized means \(v(\mathbf 0)=0\).

\begin{restatable}[DR-Submodularity \cite{BianLK-NeurIPS17,BianBK20a}]{proposition}{DRbasic}
\label{pro:DRbasic}
    For twice continuously differentiable function $f:\Reals_+^m\rightarrow \Reals_+$, the following definitions of DR-submodular functions are equivalent:
    \begin{enumerate}
        \item $f(\vc{y} + \vc{z}) - f(\vc{y}) \geq f(\vc{x} + \vc{z}) - f(\vc{x})$
     for \(\vc{x}, \vc{y}, \vc{z} \in \Reals^m_+\) with \(\vc{y} \leq \vc{x}\). 
    \item For any \(\vc{x},\vc{y} \in \Reals^m_+\) such that \(\vc{y} \leq \vc{x}\), it holds that \(\;\nabla f(\vc{y}) \geq \nabla f (\vc{x})\).
    \item $\frac{\partial^2}{\partial x_i \partial x_j} f(\vc{x}) \leq 0$  for all \(\vc{x} \in \Reals^m_+\) and \(i, j \in [m]\).
    \end{enumerate}
\end{restatable}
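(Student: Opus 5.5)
We prove the cycle of implications (1)\(\Rightarrow\)(2)\(\Rightarrow\)(3)\(\Rightarrow\)(1), using only calculus along coordinate directions and line segments. Throughout, \(f\) is \(C^2\) on \(\Reals_+^m\), so every partial derivative and mixed second derivative exists and is continuous. This lets us pass freely between difference quotients and derivatives.

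\textbf{(1)\(\Rightarrow\)(2).} Fix \(\vc{y}\le\vc{x}\) and a coordinate \(j\). Apply (1) with \(\vc{z}=t\,\vc{e}_j\) for \(t>0\), which gives \(f(\vc{y}+t\vc{e}_j)-f(\vc{y})\ge f(\vc{x}+t\vc{e}_j)-f(\vc{x})\). Divide by \(t\) and let \(t\downarrow 0\); the two sides converge to \(\partial_j f(\vc{y})\) and \(\partial_j f(\vc{x})\). Doing this for every \(j\) yields \(\nabla f(\vc{y})\ge\nabla f(\vc{x})\).

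\textbf{(2)\(\Rightarrow\)(3).} Fix \(\vc{x}\) and indices \(i,j\). Since \(\vc{x}\le\vc{x}+t\vc{e}_i\) for \(t>0\), (2) applied to the \(j\)-th coordinate gives \(\partial_j f(\vc{x}+t\vc{e}_i)-\partial_j f(\vc{x})\le 0\). Dividing by \(t\) and letting \(t\downarrow0\) gives \(\partial_i\partial_j f(\vc{x})\le 0\). The case \(i=j\) is included, so the diagonal entries are nonpositive as well. At boundary points only one-sided derivatives are available, and continuity of the second derivatives covers this.

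\textbf{(3)\(\Rightarrow\)(1).} This is the only step with real content.
\begin{itemize}
\item Define \(g(\vc{w})=f(\vc{w}+\vc{z})-f(\vc{w})\). We must show \(g(\vc{y})\ge g(\vc{x})\) whenever \(\vc{y}\le\vc{x}\).
\item Write \(\vc{d}=\vc{x}-\vc{y}\ge 0\) and \(\phi(s)=g(\vc{y}+s\vc{d})\) for \(s\in[0,1]\). Then \(\phi'(s)=\langle\nabla f(\vc{y}+s\vc{d}+\vc{z})-\nabla f(\vc{y}+s\vc{d}),\vc{d}\rangle\).
\item For each coordinate \(j\), the bracketed difference is
\[
\partial_j f(\vc{w}+\vc{z})-\partial_j f(\vc{w})=\int_0^1\sum_i z_i\,\partial_i\partial_j f(\vc{w}+r\vc{z})\,dr .
\]
This is \(\le 0\), because every \(z_i\ge0\) and every mixed partial is \(\le0\) by (3).
\item Since \(\vc{d}\ge0\), it follows that \(\phi'(s)\le 0\) on \([0,1]\). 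Hence \(\phi(1)\le\phi(0)\), i.e.\ \(g(\vc{x})\le g(\vc{y})\), which is (1).
\end{itemize}

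The main obstacle is this last step. It requires writing the gradient difference as an integral of the Hessian against a nonnegative direction. The point to stress is that the nonnegativity of both \(\vc{z}\) and \(\vc{d}\) is exactly what lets the entrywise sign condition on the Hessian control the sign of \(\phi'\). The argument does not need concavity of \(f\), only that all Hessian entries, including off-diagonal ones, are nonpositive. If the domain is restricted to \([0,1]^m\) rather than \(\Reals_+^m\), the same argument works provided \(\vc{x}+\vc{z}\) stays in the domain: the domain is a box, so every point \(\vc{w}+r\vc{z}\) used above also lies in it.
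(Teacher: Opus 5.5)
Your proof is correct. Your first two implications, (1)$\Rightarrow$(2) and (2)$\Rightarrow$(3), coincide with the paper's forward directions: difference quotients along $\mathbf{e}_j$ and along $\mathbf{e}_i$, respectively.

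The difference is in how you close the loop. The paper proves the two reverse implications, (2)$\Rightarrow$(1) and (3)$\Rightarrow$(2), separately. Each is a one-dimensional integral along a single coordinate axis. For (2)$\Rightarrow$(1) it reduces to $\mathbf{z}=\gamma\mathbf{e}_i$; for (3)$\Rightarrow$(2) it compares $\nabla f(\mathbf{x})_j$ with $\nabla f(\mathbf{x}+\gamma\mathbf{e}_i)_j$. In both cases it then chains coordinate steps (``apply the result inductively'').

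You instead prove (3)$\Rightarrow$(1) in one step along arbitrary nonnegative directions. The Hessian is integrated against $\mathbf{z}$ inside, and the resulting gradient difference against $\mathbf{d}=\mathbf{x}-\mathbf{y}$ outside. This gives a cycle of three implications instead of four, and it removes the coordinate-by-coordinate telescoping that the paper only sketches. As you point out, nonnegativity of both $\mathbf{z}$ and $\mathbf{d}$ is exactly what makes the entrywise sign condition sufficient. In exchange, the paper's route keeps every step a single-variable computation along one coordinate axis, at the cost of that implicit chaining.

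Your remark about restricting the domain to $[0,1]^m$ is also correct. Every point $\mathbf{w}+r\mathbf{z}$ in your argument satisfies $\mathbf{w}+r\mathbf{z}\le\mathbf{x}+\mathbf{z}$, so it stays in the box.
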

Thus DR-submodularity is distinct from concavity: DR-submodularity is an
entrywise condition on the Hessian, whereas concavity requires the Hessian to be
negative semidefinite. We assume both properties. 
This class contains many standard divisible-resource valuations, including
budget-additive valuations \(v(x)=\min\{\sum_j b_jx_j,B\}\), separable concave
valuations, and more general polymatroidal concave returns \cite{Patton-SODA26}. We defer the proof of \Cref{pro:DRbasic} to \Cref{sec:appendixDR}. 


\medskip
\noindent\textbf{Computational model.}
Each distribution \(\Dcal_i\) is specified by its finite support
\(\{v_{i,1},\ldots,v_{i,d_i}\}\) and probabilities
\(\{p_{i,1},\ldots,p_{i,d_i}\}\). For our polynomial-time guarantees, we assume
standard evaluation access sufficient to optimize concave functions
over boxes and approximate the Aumann--Shapley line integrals used in our
prices. Concretely, we assume the valuations are continuously differentiable
and \(L\)-smooth; nonsmooth examples can be handled by an arbitrarily small
smoothing.

\subsection{Concave Relaxations}
\label{subsec:concaverelax}

We next introduce the relaxations that will serve as benchmarks for our
algorithms. We begin with a natural divisible analogue of the configuration LP
used in philosopher inequalities for indivisible items~\cite{STW-SODA26}. This
program is a relaxation of the optimal online policy. We then use concavity to
collapse this infinite-dimensional relaxation to a low-dimensional concave
program. Finally, by weakening the online feasibility constraint, we obtain a
relaxation of the offline/prophet benchmark.


Consider the following infinite configuration relaxation. Formally, the
variables below may be viewed as finite measures over \([0,1]^m\); for
readability we write them as sums over bundles \(\bz\in[0,1]^m\).

\begin{restatable}{lpbox}{configrelax}
\begin{equation}
\label{eq:confLP}
\tag{Configuration LP}
\begin{aligned}
&\max \;  \textstyle \sum_{i = 1}^n \sum_{k = 1}^{d_i} \sum_{\bz \in [0,1]^m} \; y_{i, k, \bz} v_{i, k}(\bz) \qquad &\textnormal{subject to}\\
\end{aligned}
\end{equation}
\begin{equation*}
\begin{aligned}
&\textstyle 
\sum_{\bz \in [0,1]^m} y_{i, k, \bz} \leq p_{i,k} &&\forall i \in [n], \forall  k \in [d_i],\\
&\textstyle \sum_{\bz\in [0,1]^m} z_j y_{i, k, \bz} \leq p_{i, k}
\left(
1 - \sum_{\ell < i}
\sum_{r = 1}^{d_{\ell}}\sum_{\bz \in [0,1]^m} z_j y_{\ell, r, \bz}
\right)
&&\forall i \in [n], \; \forall k \in [d_i],  \;  \forall j \in [m] \\
&y_{i, k, \bz} \geq 0 &&\forall i \in [n],\; \forall k \in [d_i], \;\forall \bz \in [0,1]^m.
\end{aligned}
\end{equation*}
\end{restatable}
The variable \(y_{i,k,\bz}\) denotes the unconditional probability with
which player \(i\) has type \(v_{i,k}\) and receives bundle \(\bz\). 
The first constraint says that summing over all bundles will give the probability of player \(i\) sampling \(v_{i, k}\).
The second constraint is the online feasibility constraint, which says that the valuation sampled by \(i\) and the bundles assigned to previous players are independent for any online algorithm.
Similar online constraints have appeared in previous works on philosopher's inequalities with indivisible items \cite{TorricoT22a,PapadimitriouPSW21a,BravermanDPSW25a,BKPS-EC24}.
See \Cref{sec:appendixLP} for a formal proof that \ref{eq:confLP} is a relaxation of the online benchmark.

If the valuations \(v_{i, k}(\cdot)\) are concave, we are able to write a concave relaxation equivalent to the Configuration LP  in the low-dimensional space of assignment variables \(x_{i, k, j}\).

\begin{restatable}{lemma}{offlinerelax}
\label{lem:concave}
    If the valuations are concave,
    the mathematical program \eqref{eq:onLPm} has the same objective value as the \ref{eq:confLP}.
\begin{lpbox}
\begin{equation}
\label{eq:onLPm}
\begin{aligned}
\max \; & \textstyle \sum_{i = 1}^n \sum_{k = 1}^{d_i} \; p_{i,k} v_{i,k}(\vc{x_{i,k}}/p_{i, k}) &\textnormal{subject to}\\
&\textstyle x_{i, k, j} \leq p_{i, k} \left(1 - \sum_{\ell < i} \sum_{r = 1}^{d_{\ell}} x_{\ell,r, j}\right)
&&\forall i \in [n], \; \forall k \in [d_i], \; \forall j \in [m],\\
&x_{i, k, j} \geq 0 &&\forall i \in [n],\; \forall k \in [d_i], \;\forall j \in [m].\\
\end{aligned}
\end{equation}
\end{lpbox}
\end{restatable}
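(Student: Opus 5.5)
We prove the two inequalities between the optimal values separately. In both directions the bridge is the mean bundle: given a configuration solution \(y\), we set
\[
x_{i,k,j} \coloneqq \sum_{\bz\in[0,1]^m} z_j\, y_{i,k,\bz},
\]
so that \(x_{i,k}/p_{i,k}\) is (up to the missing mass) the conditional mean bundle of type \(k\) of player \(i\).

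\textbf{Configuration LP \(\le\) \eqref{eq:onLPm}.} Fix a feasible \(y\) and define \(x\) as above. The online-feasibility constraint of \ref{eq:confLP}, written in terms of \(x\), is exactly the constraint of \eqref{eq:onLPm}. Nonnegativity of \(x\) is immediate.

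For the objective, fix \((i,k)\) and let \(q=\sum_{\bz} y_{i,k,\bz}\le p_{i,k}\). If \(q=0\), the contribution is \(0\le p_{i,k}v_{i,k}(\vc{x_{i,k}}/p_{i,k})\), since \(v\ge 0\). Otherwise, pad the measure \(y_{i,k,\cdot}/p_{i,k}\) with mass \(1-q/p_{i,k}\) on the bundle \(\vc 0\). This gives a probability measure on \([0,1]^m\) whose mean is \(\vc{x_{i,k}}/p_{i,k}\). In particular \(\vc{x_{i,k}}/p_{i,k}\in[0,1]^m\), so the objective of \eqref{eq:onLPm} is well defined. Since \(v_{i,k}(\vc 0)=0\), Jensen's inequality for the concave \(v_{i,k}\) gives
\[
\sum_{\bz} y_{i,k,\bz}\, v_{i,k}(\bz)\;=\;p_{i,k}\,\mathbb{E}\bigl[v_{i,k}(\bz)\bigr]\;\le\; p_{i,k}\, v_{i,k}(\vc{x_{i,k}}/p_{i,k}).
\]
Summing over \((i,k)\) shows the configuration value is at most that of \eqref{eq:onLPm}.

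\textbf{\eqref{eq:onLPm} \(\le\) Configuration LP.} Given a feasible \(x\), put a point mass \(y_{i,k,\bz}=p_{i,k}\) at \(\bz=\vc{x_{i,k}}/p_{i,k}\), and \(y=0\) elsewhere. This only makes sense if \(\vc{x_{i,k}}/p_{i,k}\in[0,1]^m\), which is the step needing care. By induction on \(i\), the constraints force \(\sum_{\ell<i}\sum_r x_{\ell,r,j}\le 1\). Indeed, summing the constraint over \(k\) gives
\[
\sum_k x_{i,k,j}\le 1-\sum_{\ell<i}\sum_r x_{\ell,r,j},
\]
using \(\sum_k p_{i,k}=1\); assuming \(\Dcal_i\) is a probability distribution with all \(p_{i,k}>0\), which is implicit in the model. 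Hence \(x_{i,k,j}\le p_{i,k}\cdot(\text{something in }[0,1])\), so the bundle lies in \([0,1]^m\).

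With this choice, the first configuration constraint holds with equality. The second reads
\[
x_{i,k,j}\le p_{i,k}\Bigl(1-\sum_{\ell<i}\sum_r x_{\ell,r,j}\Bigr),
\]
which is exactly the constraint of \eqref{eq:onLPm}. The objective values coincide term by term.

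The only delicate points are the padding argument when \(q<p_{i,k}\), which needs normalization and nonnegativity, and checking that the bundles lie in the box. If one worries about the \ref{eq:confLP} as a "sum over a continuum", note that both directions only involve finitely supported measures, or general finite measures via integral Jensen, so no measure-theoretic issue arises.
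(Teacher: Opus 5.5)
Your proof is correct and follows the paper's argument essentially step for step. One direction uses the mean-bundle map $\hat x_{i,k,j}=\sum_{\bz} z_j\, y_{i,k,\bz}$ with Jensen's inequality after padding the empty bundle, which is harmless since $v_{i,k}(\vc{0})=0$; the other places a point mass at $\vc{x_{i,k}}/p_{i,k}$. Your inductive check that cumulative allocations stay at most $1$, so that $\vc{x_{i,k}}/p_{i,k}\in[0,1]^m$, fills in a step the paper only asserts.
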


The complete proof of \Cref{lem:concave} can be found in \Cref{sec:appendixLP}.
The proof constructs a mapping between feasible solutions of the two programs. Given a feasible \(y\) we can construct \(x\) as
$x_{i, k, j} = \sum_{\bz \in [0, 1]^m} z_j y_{i, k, \bz}$.
Thus \(x_{i,k,j}\) is the unconditional expected amount of item \(j\)
allocated to player \(i\) on type \(k\).
Hence \(\vc{x}_{i,k}/p_{i,k}\) is the conditional expected bundle for
player \(i\) given type \(k\).
The equivalence crucially relies on Jensen's inequality:
the expected value of the bundle given by the variables \(y_{i,k,\bz}\) is upper bounded by the value of the expected bundle 
\(\vc{x_{i, k}}/p_{i,k}\).


For the offline benchmark, we do not have independence between the sampling probability and the allocation to previous players.
Hence, our relaxation changes the first constraint in \eqref{eq:onLPm} with the following:
\begin{equation}
\label{eq:offLPm}
\begin{aligned}
&\textstyle x_{i, k, j} \leq p_{i, k}
&&\forall i \in [n], \; \forall k \in [d_i], \; \forall j \in [m],\\
&\textstyle \sum_{i = 1}^n \sum_{k = 1}^{d_i} x_{i, k, j} \leq 1 &&\forall j \in [m].
\end{aligned}
\end{equation}

\section{Philosopher Inequalities}
\label{sec:phi}

In this section, we prove our main philosopher inequality by rounding any
feasible solution of the online concave relaxation \eqref{eq:onLPm}. Concavity was used in \Cref{subsec:concaverelax} to obtain this low-dimensional relaxation. In this section we will use only monotonicity and DR-submodularity to design a rounding algorithm. To state our general rounding result, we begin by introducing \good functions.


\begin{definition}[\good function]
\label{def:good}
A concave function \(\f: \Reals_+ \to \Reals_+\)
is \good if
\begin{itemize}
    \item \(\f(z) \leq \min\{z, 1\} \qquad \forall z \in \Reals_+\),
    \item \((1 - \mu) \cdot \f\left(\frac{z - \f(z)}{\mu}\right) + \mu \cdot \f \left(\frac{z}{\mu}\right) \geq \f(z) \qquad \forall z \in \Reals_+, \mu \in (0,1]\).
\end{itemize}
\end{definition}

We will later prove that these conditions naturally arise from our rounding proof.
The following is our main result of this section.

\begin{theorem}
\label{lem:phirounding}
    For an instance of the \OWDI problem with monotone DR-submodular valuations, let  \(\vc{x}\) be any feasible solution to \eqref{eq:onLPm} and let \(\f\) be any \good function. There is an online policy whose expected welfare is at least \(\f(1)\cdot \sum_{i = 1}^n \sum_{k = 1}^{d_i} p_{i,k} v_{i, k} (\vc{x_{i,k}}/p_{i,k})\).
\end{theorem}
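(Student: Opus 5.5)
We plan to follow the three-step outline from the introduction: reduce to a single-item capped OCRS via DR-submodularity, then analyze the scaling-based rule for one item with a potential argument.

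\textbf{Step 1 (the rule for one item).} Fix an item \(j\). We define the predicted remaining supply \(\rem_i \coloneqq 1-\sum_{\ell<i}\sum_r x_{\ell,r,j}\), so \(\rem_1=1\), and the feasibility constraint of \eqref{eq:onLPm} reads \(x_{i,k,j}\le p_{i,k}\rem_i\). Let \(R_i\) be the actual remaining supply when player \(i\) arrives. If player \(i\) realizes type \(k\), with target \(t_{i,k,j}\coloneqq x_{i,k,j}/p_{i,k}\le \rem_i\), we allocate
\[
a_{i,j}=t_{i,k,j}\cdot\frac{\rem_i}{\rem_i}\cdot \f\!\left(\frac{R_i}{\rem_i}\right)\Big/ 1 ,
\]
or more precisely \(a_{i,j}=\rem_i\,\f(R_i/\rem_i)\cdot t_{i,k,j}/\rem_i\), and we cap it by \(t_{i,k,j}\) and \(R_i\). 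Since \(\f\le 1\), the allocation never exceeds the target. Since \(\f(z)\le z\) and \(t\le\rem_i\), we get \(a\le \rem_i\f(R_i/\rem_i)\le R_i\), so supply is never violated.

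\textbf{Step 2 (the per-item guarantee).} The goal is to show \(\E[\f(R_i/\rem_i)]\ge \f(1)\) for every \(i\). Because \(R_i\) depends only on players before \(i\), and \(V_i\) is independent of them, this gives \(\E[a_{i,j}\mid V_i=v_{i,k}]\ge \f(1)\,t_{i,k,j}\). To prove it, we use \(\Phi_i=\f(R_i/\rem_i)\) as a submartingale. Write \(\mu\) for the fraction of predicted supply consumed by player \(i\), so that \(\rem_{i+1}=(1-\mu)\rem_i\) with \(\mu=\sum_k x_{i,k,j}/\rem_i\). Player \(i\) consumes \(\f(z)\rem_i\cdot(t/\rem_i)\), where \(z=R_i/\rem_i\). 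The next ratio is therefore \((z-\f(z)\,t/\rem_i)/(1-\mu)\), which mixes over types with weights \(p_{i,k}\).

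The hard part is reducing this mixture to the two-point inequality in \Cref{def:good}. Concavity of \(\f\) together with Jensen should show that the worst case is the extreme split in which, with probability \(\mu\), a type takes its full target \(t=\rem_i\) and otherwise takes nothing. After rescaling by \(1-\mu\) in the normalization, this yields exactly the two values \((z-\f(z))/\mu\)-type and \(z/\mu\)-type arguments appearing in the second condition. The careful parametrization and the direction of the rescaling are the main obstacle. I would first attempt to verify the condition as stated by matching the post-step ratio \(R_{i+1}/\rem_{i+1}\) in the extreme distribution, possibly after swapping the roles of \(\mu\) and \(1-\mu\). Granted that step, we get \(\E[\Phi_{i+1}\mid\mathcal F_i]\ge\Phi_i\), hence \(\E\Phi_i\ge\Phi_1=\f(1)\).

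\textbf{Step 3 (combining items via DR-submodularity).} Run the item-wise schemes in parallel, using independent randomness across items. This is valid because the rule depends only on past allocations of item \(j\). Conditioned on \(V_i=v_{i,k}\), the bundle \(\bx_i\) is random with \(\bx_i\le t_{i,k}\) coordinatewise and \(\E[x_{i,j}]\ge\f(1)t_{i,k,j}\). To pass from this to value, we write \(\bx_i\) as a dominated vector and apply a standard correlation-gap/Lovász-extension argument for monotone DR-submodular functions. Taking a uniform threshold \(\theta\sim U[0,1]\) and including coordinate \(j\) at level \(t_j\) when \(\theta\le x_j/t_j\), concavity along nonnegative directions combined with DR gives \(\E[v(\bx_i)]\ge \f(1)\,v(t_{i,k})\). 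Summing over players and types then gives the theorem.

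This last step also needs care: it may require randomizing each allocation \(a_{i,j}\) into an all-or-proportional form. Alternatively, one could use the concave-along-rays property directly, namely \(v(\bx)\ge\min_j(x_j/t_j)\,v(t)\), which is too weak on its own. I would therefore aim for the threshold coupling.
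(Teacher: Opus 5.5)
Your Steps 1 and 2 are the paper's argument. The rule is $a_{i,j}=t_{i,k,j}\,\f(R_i/\rem_i)$, and the relabeling you were unsure about does work. With your $\mu=\sum_k x_{i,k,j}/\rem_i$, the map $y\mapsto \f\big((z\rem_i-y\f(z))/\rem_{i+1}\big)$ is concave on $[0,\rem_i]$, so it lies above its chord. Averaging over types with weights $p_{i,k}$ then leaves you needing
\[
\mu\,\f\Big(\frac{z-\f(z)}{1-\mu}\Big)+(1-\mu)\,\f\Big(\frac{z}{1-\mu}\Big)\ \ge\ \f(z).
\]
This is the second condition of \Cref{def:good} with $\mu$ replaced by $1-\mu=\rem_{i+1}/\rem_i\in(0,1]$. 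If $\rem_{i+1}=0$, all later targets are zero and there is nothing to prove.

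The genuine gap is Step 3. What you need is: if $\by$ is random with $0\le\by\le t$ and $\E[\by]\ge\alpha t$ under an \emph{arbitrary} joint law, then $\E[v(\by)]\ge\alpha v(t)$. You assert this but do not prove it, and the tools you name do not deliver it.
\begin{itemize}
\item ``Independent randomness across items'' is vacuous. The rule is deterministic, and every item's scheme reads the same types $V_1,\dots,V_{i-1}$, so the coordinates of the bundle are unavoidably correlated. Any correlation-gap or multilinear-extension bound that relies on independence therefore does not apply. The correlation gap also compares independent rounding to the best correlated distribution, which is not the comparison needed here.
\item The threshold coupling, applied to each realization, only turns $v(\by)$ into an average over random sets whose memberships are still arbitrarily correlated. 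You would still need a separate worst-case-coupling argument.
\item Re-randomizing $a_{i,j}$ into all-or-nothing form is unnecessary. It can also break feasibility, since a full target $t\le\rem_i$ may exceed $R_i$.
\end{itemize}

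The missing idea is \Cref{lema:DRavg}: fix an order of the coordinates and telescope. Coordinatewise concavity (which follows from DR-submodularity) and then DR-submodularity with $\by_{\le j-1}\le t_{\le j-1}$ give
\[
v(\by_{\le j})-v(\by_{\le j-1})\ \ge\ \frac{y_j}{t_j}\big(v(\by_{\le j-1}+t_je_j)-v(\by_{\le j-1})\big)\ \ge\ \frac{y_j}{t_j}\big(v(t_{\le j})-v(t_{\le j-1})\big).
\]
The lower bound is linear in $y_j$ with a deterministic coefficient. Linearity of expectation, monotonicity, and $v(\mathbf 0)\ge 0$ then give $\E[v(\by)]\ge\alpha v(t)$ regardless of correlation. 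If you complete your threshold route, each of its two steps turns out to be this same telescoping, so the detour buys nothing.
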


In \Cref{sec:appendixgoodf}, we will show that there exists a \good function with $\f(1) = \frac23$. Thus,  \Cref{teo:mainphi} immediately follows when we apply  \Cref{lem:phirounding} to the optimal solution of the relaxation~\eqref{eq:onLPm}, which upper-bounds the philosopher benchmark.

The proof of \Cref{lem:phirounding} proceeds in two steps:
\begin{enumerate}
\item We define the single-item \car problem and prove a
      $\f(1)$-\car, given a \good function \(\f\). This appears in \Cref{sec:car} and \Cref{sec:martingale}.

\item We prove a decomposition lemma to show that running this \car separately for every item preserves each player's multi-item value by DR-submodularity. This appears in \Cref{subsec:reduction}.
\end{enumerate}

\subsection{The \car Problem}\label{sec:car}

We formally define the single item \car problem for some parameter $\alpha\in [0,1]$. 

\begin{definition}[$\alpha$-\car]
\label{def:car}
    Given an \OWDI instance with a single item and a feasible solution
    \(x_{i, k}\) to the concave relaxation of the online benchmark \eqref{eq:onLPm}, we say that an online algorithm is an $\alpha$-\car if
    it holds that for every \(i \in [n]\) and for every \(k \in [d_i]\), conditioned on the event $V_i=v_{i,k}$, the following two conditions are satisfied:
    \begin{enumerate}
        \item \emph{$\alpha$-Selectability}: Player \(i\) receives in expectation at least \(\alpha \cdot  x_{i, k}/p_{i, k}\) fraction of the item.

        \item \emph{Capping}: Player \(i\) never receives more than \(x_{i, k}/p_{i, k}\) of the item.
    \end{enumerate}
\end{definition}

Intuitively, the first condition is the standard selectability condition for OCRS \cite{FSZ-SODA16}: since the probability that $V_i=v_{i,k}$ equals $p_{i, k}$, this condition says that on average player $i$ receives $\alpha \cdot  x_{i, k}$ fraction of the item. The second property is needed because our online rounding algorithm is valuation independent, which means that it must work even for concave functions that are identically zero beyond the fractional solution.


Fix a one-item instance and a feasible solution \(\vc{x}\) to the single-item relaxation. Since there is only one item, we suppress the item index. Define the fractional remaining supply before player \(i\) by
\[ \textstyle \rem_i \coloneqq 1 - \sum_{\ell < i} \sum_{r = 1}^{d_\ell} x_{\ell, r}.\] 
Thus \(\rem_i\) is the amount of the item remaining before player \(i\) in the fractional solution. The online constraint in \eqref{eq:onLPm} can now be compactly written as
\[
x_{i, k} \leq p_{i,k}\, \rem_i \qquad \text{for every \(i \in [n]\) and \(k \in [d_i]\)}.
\]

In \Cref{algo:mainphilo}, we describe a general construction for a \car, depending on a given \good function \(\f: \Reals_+ \to \Reals_+\).

\begin{algorithm}
\caption{General-\car ($\vc{x},\f$)}
\label{algo:mainphilo}
\DontPrintSemicolon

Start with $R_1 \gets 1$.

When each player $i \in [n]$ arrives: 
\begin{enumerate}
    \item \begin{enumerate}
        \item If \(\rem_i = 0\), set \(Z_i = 0\)
        \item If \(\rem_i > 0\) and the $i$-th player's valuation is $V_i = v_{i,k}$, then
        set \(Z_i \gets \frac{x_{i, k}}{p_{i, k}} \cdot \f\left(R_i/\rem_i\right)\) \,.
    \end{enumerate}
    \item Assign \(Z_i\) to player \(i\) and update \(R_{i+1} \gets R_i - Z_i\) \,.
\end{enumerate}
\end{algorithm}

\begin{lemma}
Let \(\f: \Reals_+ \to \Reals_+\) be a \good function.
Then \Cref{algo:mainphilo} is a \(\f(1)\)-\car.
\end{lemma}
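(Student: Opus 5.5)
Capping and feasibility come first. Since \(\f\le 1\), we get \(Z_i\le x_{i,k}/p_{i,k}\), which gives capping. For feasibility we need \(Z_i\le R_i\). Use \(x_{i,k}\le p_{i,k}\rem_i\) together with \(\f(z)\le z\):
\[
Z_i\;\le\; \rem_i\,\f(R_i/\rem_i)\;\le\; \rem_i\cdot R_i/\rem_i \;=\; R_i .
\]
Hence \(R_i\ge 0\) always. Also, if \(\rem_i=0\) then \(x_{i,k}=0\), so setting \(Z_i=0\) costs nothing.

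Selectability reduces as follows. Because \(V_i\) is independent of \(R_i\) (which depends only on \(V_1,\dots,V_{i-1}\)), we have
\[
\E[Z_i\mid V_i=v_{i,k}]=\frac{x_{i,k}}{p_{i,k}}\,\E[\f(R_i/\rem_i)].
\]
It therefore suffices to show \(\E[\f(R_i/\rem_i)]\ge \f(1)\) for every \(i\) with \(\rem_i>0\). The plan is to prove that \(Q_i:=\f(R_i/\rem_i)\) is a submartingale, i.e. \(\E[Q_{i+1}\mid R_i]\ge Q_i\). Since \(Q_1=\f(1)\), the claim then follows by induction.

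The one-step inequality is the key step. Fix \(R_i\), write \(z=R_i/\rem_i\), set \(q_k=x_{i,k}/(p_{i,k}\rem_i)\in[0,1]\), and let \(\mu=\rem_{i+1}/\rem_i=1-\sum_k p_{i,k}q_k\).

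First assume \(\mu>0\). With probability \(p_{i,k}\) the player has type \(k\), and then
\[
R_{i+1}=\rem_i\,(z-q_k\f(z)),
\]
so \(Q_{i+1}=\f\big((z-q_k\f(z))/\mu\big)\). With the remaining probability \(1-\sum_k p_{i,k}\) the player receives nothing, and \(Q_{i+1}=\f(z/\mu)\).

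The argument on \(z-q\f(z)\) is affine in \(q\), and \(\f\) is concave. So each term satisfies
\[
\f\big((z-q_k\f(z))/\mu\big)\;\ge\; q_k\,\f\big((z-\f(z))/\mu\big)+(1-q_k)\,\f(z/\mu),
\]
by concavity between the endpoints \(q=1\) and \(q=0\). Summing with weights \(p_{i,k}\) gives
\[
\E[Q_{i+1}\mid R_i]\;\ge\; (1-\mu)\,\f\big((z-\f(z))/\mu\big)+\mu\,\f(z/\mu),
\]
because \(\sum_k p_{i,k}q_k=1-\mu\). The second condition in \Cref{def:good} says exactly that this is \(\ge \f(z)=Q_i\).

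The case \(\mu=0\) needs separate handling. Then \(\rem_{i+1}=0\), so every later player has \(x=0\) and the bound is vacuous for them. We only need the inequality for indices with \(\rem_i>0\), and \(\rem\) is nonincreasing, so these indices form a prefix and the induction runs only along it.

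The main obstacle is justifying the concavity interpolation when the argument is near zero. This needs \(z-\f(z)\ge 0\), which holds since \(\f(z)\le z\). It also needs \(\f\) to be defined and concave on \(\Reals_+\), which is given. Finally, the conditioning on the random \(R_i\) must be handled properly, which means taking the tower expectation over \(V_1,\dots,V_{i-1}\).
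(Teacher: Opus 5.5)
Your proposal is correct and follows essentially the same argument as the paper: capping from \(\f\le 1\), feasibility from \(\f(z)\le z\) together with the online constraint, and selectability via the same submartingale step. There, concavity of \(\f\) along the affine map in the allocated amount (your normalized \(q_k\in[0,1]\), the paper's \(y\in[0,\rem_i]\)) reduces the one-step inequality to the second scaling-function condition with \(z=R_i/\rem_i\) and \(\mu=\rem_{i+1}/\rem_i\). The remaining differences are cosmetic: you handle \(\rem_{i+1}=0\) by restricting the induction to the prefix where \(\rem_i>0\), whereas the paper instead sets the potential to \(+\infty\) there.
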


\begin{proof}

First observe that the algorithm never assigns to a player more than
\(x_{i, k}/p_{i,k}\) because \(\f\) is less than or equal to \(1\) at every point.
Moreover note that \Cref{algo:mainphilo} will never give to a player more than what is remaining of the item because
\[
Z_i ~=~ \frac{x_{i, k}}{p_{i, k}}\f(R_i /\rem_i) ~\leq~ \frac{x_{i, k}}{p_{i, k}\rem_i} R_i
~\leq~ R_i,
\]
where the first inequality uses that \(\f(z) \leq z\) and the last inequality follows from the online constraint $x_{i,k} \le p_{i,k}\rem_i$.

In order to prove the \(\f(1)\)-selectability property, we need the following submartingale lemma, which is the heart of the proof.

\begin{restatable}[Submartingale Lemma]{lemma}{decphi}
\label{lema:decphi}
    Define the potential function 
    \[
    \phi(i) \coloneqq
    \begin{cases}
    \Ebb[\f(R_i/\rem_i)], &\text{if \(\rem_i > 0\)}, \\
    +\infty, &\text{if \(\rem_i = 0\)}.
    \end{cases}
    \]
    Then \(\phi\) is increasing in $i$. In particular, this implies  \(\phi(i) \geq \phi(1) = \f(1)\).
\end{restatable}
Before proving the lemma, we show why it implies that the first condition in \Cref{def:car} holds for \(\alpha = \f(1)\).

Let \(i \in [n]\) and \(k \in [d_i]\). We may assume that \(\rem_i > 0\),  since  \(\rem_i = 0\) implies \(x_{i, k} = 0\), where
the statement is trivially true. 
By linearity of expectation,
\[
     \Ebb[Z_i \mid V_i = v_{i,k}] = \frac{x_{i,k}}{p_{i,k}}\left(\Ebb[\f(R_i/\rem_i ) \mid V_i = v_{i,k}]\right).
\]
However, since the algorithm is online, the remaining \(R_i\) is independent of the realization of \(i\), so
\[ \Ebb[Z_i \mid V_i = v_{i,k}] = \frac{x_{i,k}}{p_{i,k}}\phi(i) .\]
By \Cref{lema:decphi}, we know \(\phi(i) \geq \phi(1) = \f(1)\).
Consequently, \(\Ebb[Z_i \mid V_i = v_{i,k}] \geq
\f(1) \frac{x_{i,k}}{p_{i,k}} \).    
\end{proof}

\subsection{Proof of Submartingale \texorpdfstring{\Cref{lema:decphi}}{}}\label{sec:martingale}

    Let \(i \in [n - 1]\) and let \(d \coloneqq d_i\) denote the support size of \(\Dcal_i\).
    If \(\rem_{i + 1} = 0\), then \(\phi(i + 1) = +\infty \geq \phi(i)\) and we are done. If \(\rem_i = 0\), then \(\rem_{i + 1} = 0\) since \(\rem_i\) is nonincreasing in \(i\), and the proof is complete again. From now on, we will assume that \(\rem_i\) and \(\rem_{i + 1}\) are strictly positive.

    Fix \(r \in [0, 1]\). If we can prove that
    \[\Ebb[\f(R_{i + 1}/\rem_{i + 1}) \mid R_i = r] \geq \f(r/\rem_i)\]
    where the expectation is over \(\Dcal_i\),
    then taking the expectation over \(R_i\) on both sides yields \(\phi(i + 1) \geq \phi(i)\).
    Expanding the expectation, we have
    \[
        \sum_{k = 1}^d p_{i, k}\;
        \f\left(
            \frac{r - \frac{x_{i, k}}{p_{i, k}}\f(r/\lambda_i) }{\lambda_{i + 1}}
        \right)
        \geq
        \f(r/\lambda_{i}).
    \]
    Define the function \(g: y \in [0, \lambda_i] \mapsto \f\left(\frac{r-y\f(r/
    \lambda_i)}{\lambda_{i+1}}\right)\) and observe that since \(\f\) is concave and \(g\) is \(\f\) composed with an affine transformation, \(g\) is also concave.
    Observe that by the online constraint,
    \(x_{i,k}/p_{i, k} \leq \lambda_i\), thus the concavity of \(g\) allows us to conclude that
    \[
    g\left(\frac{x_{i,k}}{p_{i,k}}\right)
    \geq
    \frac{x_{i,k}}{p_{i,k} \lambda_i} g(\lambda_i)
    +
    \left(1-\frac{x_{i,k}}{p_{i,k} \lambda_i} \right) g(0).
    \]

    Hence, it holds that
    \[
    \sum_{k = 1}^d p_{i, k}
        \;\f\left(
            \frac{r - \frac{x_{i, k}}{p_{i, k}}\f(r/\lambda_i) }{\lambda_{i + 1}}
        \right)
    =   
    \sum_{k = 1}^d p_{i, k}
    g\left(\frac{x_{i,k}}{p_{i,k}}\right)
    \geq
    \sum_{k=1}^d \frac{x_{i,k}}{\lambda_i} g(\lambda_i)
    +
    \left(1 - \sum_{k=1}^d \frac{x_{i,k}}{\lambda_i}\right) g(0).
    \]
    By noticing that \(\sum_{k = 1}^d x_{i, k} = \lambda_i - \lambda_{i + 1}\) and expanding the definition of \(g\), it is sufficient to prove that
    \[
    \left(1 - \frac{\lambda_{i+1}}{\lambda_i}\right)
    \f\left(\frac{r-\lambda_i \f(r/
    \lambda_i)}{\lambda_{i+1}}\right)
    +
    \frac{\lambda_{i+1}}{\lambda_i}
    \f\left(\frac{r}{\lambda_{i+1}}\right)
    \geq
    \f\left(\frac{r}{\lambda_i}\right).
    \]
    Define \(z \coloneqq r/\lambda_i\) and \(\mu \coloneqq \lambda_{i+1}/\lambda_i\).
    Since we assumed \(\lambda_i, \lambda_{i + 1} > 0\) and by the fact the \(\lambda\)'s are non-increasing in \(i\), we have that
    \(z \in \Reals_+\) and \(\mu \in (0,1]\).
    Furthermore, the inequality becomes
    \[
    \left(1 - \mu\right)
    \f\left(\frac{z \lambda_i-\lambda_i \f(z)}{\lambda_{i+1}}\right)
    +
    \mu\;
    \f\left(\frac{z \lambda_i}{\lambda_{i+1}}\right)
    \geq
    \f\left(z\right).
    \]
    Dividing the numerator and denominator of the fractions inside \(\f\) by \(\lambda_i\)
    leads precisely to the second requirement for \good functions.
   
\subsection{Reducing Multi-Item Rounding to Single-Item \car}
\label{subsec:reduction}

We now show that single-item \car can be run separately
on each item and composed using DR-submodularity. The only subtlety is that the single-item rounding decisions for different items may be correlated through the same realized type of a player.




\begin{lemma}[Decomposition Lemma]
\label{lema:DRavg}
    Consider any fixed vector \(\vc{x} \in \Reals^m_+\) and any fixed scalar \(\alpha \in [0,1]\).
    Let  \(\Dcal\) be a distribution over $\Reals^m_+$ such that every vector $\vc{y}$ in \(\supp(\Dcal)\) 
    satisfies \(\vc{y} \leq \vc{x}\) and additionally,
    \(\Ebb_{\Dcal}[\vc{y}] \geq \alpha \vc{x}\). Then, for any  monotone nonnegative DR-submodular function \(f:[0,1]^m \rightarrow \Reals_+\):
    \[\Ebb_{\Dcal}[f(\vc{y})] \geq \alpha \cdot f(\vc{x}).\]
\end{lemma}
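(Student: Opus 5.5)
The plan is to telescope $f(\vc{y})$ along the coordinates, lower bound each increment by a deterministic increment of $f$ at $\vc{x}$ using DR-submodularity, and then use concavity of $f$ along a single coordinate direction to make the bound linear in $y_j$, so that expectations can be taken. Concavity of $f$ itself is not assumed here, and I will not need it.

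Fix an ordering of the coordinates $1,\dots,m$. For a vector $\vc{w}$, write $\vc{w}_{<j}$ for the vector that agrees with $\vc{w}$ on coordinates $1,\dots,j-1$ and is $0$ elsewhere, and set $\vc{w}_{\le j}=\vc{w}_{<j}+w_j\vc{e}_j$. Then
\[
f(\vc{y}) = f(\mathbf 0)+\sum_{j=1}^m \bigl(f(\vc{y}_{\le j})-f(\vc{y}_{<j})\bigr).
\]
Since $\vc{y}\le\vc{x}$ pointwise, we have $\vc{y}_{<j}\le \vc{x}_{<j}$. Applying the DR property (the first item of \Cref{pro:DRbasic}, or directly the definition in the introduction) with increment $\vc{z}=y_j\vc{e}_j$ gives
\[
f(\vc{y}_{<j}+y_j\vc{e}_j)-f(\vc{y}_{<j})\;\ge\; f(\vc{x}_{<j}+y_j\vc{e}_j)-f(\vc{x}_{<j}).
\]
Both points stay in $[0,1]^m$ because $y_j\le x_j$.

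Next define $h_j(t)\coloneqq f(\vc{x}_{<j}+t\vc{e}_j)-f(\vc{x}_{<j})$ for $t\in[0,x_j]$. DR-submodularity restricted to one coordinate direction says that increments $h_j(t+\delta)-h_j(t)$ are nonincreasing in $t$. Together with continuity (the paper assumes the valuations are continuously differentiable), this makes $h_j$ concave; equivalently, $\partial^2 f/\partial x_j^2\le 0$ by the third item of \Cref{pro:DRbasic}. Since $h_j(0)=0$, concavity gives $h_j(t)\ge (t/x_j)\,h_j(x_j)$ for $t\in[0,x_j]$ whenever $x_j>0$. If $x_j=0$, then $y_j=0$ and the $j$-th term is simply dropped.

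Taking expectations over $\vc{y}\sim\Dcal$ and using the two displays,
\[
\Ebb\bigl[f(\vc{y}_{\le j})-f(\vc{y}_{<j})\bigr]\;\ge\;\frac{\Ebb[y_j]}{x_j}\,\bigl(f(\vc{x}_{\le j})-f(\vc{x}_{<j})\bigr)\;\ge\;\alpha\,\bigl(f(\vc{x}_{\le j})-f(\vc{x}_{<j})\bigr).
\]
The last step uses $\Ebb[y_j]\ge\alpha x_j$ and the fact that the bracketed increment is nonnegative, by monotonicity of $f$. Summing over $j$ then gives
\[
\Ebb[f(\vc{y})]\;\ge\; f(\mathbf 0)+\alpha\bigl(f(\vc{x})-f(\mathbf 0)\bigr)\;\ge\;\alpha f(\vc{x}),
\]
using $f(\mathbf 0)\ge 0$ and $\alpha\le 1$.

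The only step needing care is the single-coordinate concavity, since the lemma does not assume $f$ is concave. I would derive it directly from DR-submodularity: apply the definition with base points $t\vc{e}_j$ and $t'\vc{e}_j$, shifted by $\vc{x}_{<j}$, and increment $\delta\vc{e}_j$. This shows the increments of $h_j$ are nonincreasing, and continuity then upgrades this to concavity.
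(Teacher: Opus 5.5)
Your proof is correct and follows the paper's argument: you telescope over coordinates, use one-dimensional concavity along $e_j$ to make each increment linear in $y_j$, use DR-submodularity to compare with the increment at $\vc{x}$, and finish with monotonicity and $f(\mathbf{0})\ge 0$. The differences are cosmetic. The paper applies the concavity scaling first, at the base point $\vc{y}_{\le j-1}$, and then shifts the base point to $\vc{x}_{\le j-1}$ via DR-submodularity, the reverse of your order. It also gets coordinatewise concavity from the Hessian characterization, whereas your derivation from diminishing increments plus continuity does not need twice differentiability.
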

\begin{proof}
    For any vector \(\vc{y} \in \Reals^m\) and \(j \in [m]\), define 
    \(\vc{y}_{\leq j} \coloneqq (y_1, \dots, y_j, 0, \dots, 0)\).
    Moreover, note that for any \(\vc{z} \in \Reals^m_+\) and \(j \in [m]\),
    the function \(g(t) = f(\vc{z} + te_j)\) is concave in \(t\) since
    \[ 
    \textstyle
        g''(t) ~=~ \frac{\partial^2}{\partial t^2} f(\vc{z} + te_j) ~=~ (\nabla^2 f(\vc{z} + te_j))_{jj} ~\leq~ 0.
    \]

    Recall that for any one-dimensional concave function \(g\) with $g(0) \ge 0$, it holds that \(g(a) - g(0) \geq \tfrac{a}{b}(g(b) - g(0))\) for \(0 < a \leq b\).
    Hence, for any vector \(\vc{y} \in \supp(\Dcal)\) and any \(j \in [m]\), we have
    \[
    f(\vc{y}_{\leq j}) - f(\vc{y}_{\leq j-1}) ~\geq~ \frac{y_j}{x_j} \Big( f(\vc{y}_{\leq j-1} + x_j e_j) - f(\vc{y}_{\leq j-1}) \Big) ~\geq~ \frac{y_j}{x_j}\Big(f(\vc{x}_{\leq j}) - f(\vc{x}_{\leq j-1}) \Big),
    \]
    where the first inequality uses concavity along \(e_j\) and
    the second inequality uses DR-submodularity with \(\vc{z} = (\vc{x} - \vc{y})_{\leq j-1}\) (noting $\vc{y} \le \vc{x}$, so marginal gains are higher at $\vc{y}$).
    Taking expectation with respect to \(\Dcal\) and using the monotonicity of \(f\), we obtain
    \[
    \Ebb_\Dcal[f(\vc{y}_{\leq j})] - \Ebb_\Dcal[f(\vc{y}_{\leq j-1})]
    ~\geq~
    \frac{\Ebb_\Dcal[y_j]}{x_j}\Big(f(\vc{x}_{\leq j}) - f(\vc{x}_{\leq j-1})\Big)
    ~\geq~
        \alpha\cdot \Big(f(\vc{x}_{\leq j}) - f(\vc{x}_{\leq j-1}) \Big).
    \]
    Note that if \(x_j = 0\), then \(y_j = 0\) for every \(y\) in \(\supp (\Dcal)\) and the previous inequality is trivially true.
    Summing over all \(j \in [m]\) and using that \(f(\vc{0})\geq 0\), we obtain
    \[
    \Ebb_\Dcal[f(\vc{y})] ~\geq~ \alpha \cdot f(\vc{x}). \qedhere
    \]
\end{proof}

The final step is to use \Cref{lema:DRavg} to connect the single item \(\alpha\)-\car with online rounding for \OWDI.

\begin{lemma}
\label{lema:composition}
    Let \(\ALG\) be a single item $\alpha$-\car for some \(\alpha \in [0,1]\). 
    Then there is an algorithm \(\widehat{\ALG}\) that
    achieves an \(\alpha\)-approximation for \OWDI with monotone concave DR-submodular functions over multiple items.
    Moreover, if  \(\ALG\) runs in polynomial time then \(\widehat{\ALG}\) also runs in polynomial time.
\end{lemma}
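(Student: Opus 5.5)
The plan is to solve the concave relaxation \eqref{eq:onLPm} for the multi-item instance and run $m$ parallel copies of \(\ALG\), one per item. Let \(\vc{x}\) be an optimal (or near-optimal, within polynomial-time accuracy) solution of \eqref{eq:onLPm}.

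\textbf{Reduction to single items.} For each item \(j\in[m]\), the numbers \((x_{i,k,j})_{i,k}\) form a feasible solution to the single-item relaxation with the same type probabilities \(p_{i,k}\). Indeed, the online constraint of \eqref{eq:onLPm} restricted to coordinate \(j\) is exactly the single-item online constraint. I then define an auxiliary single-item \OWDI instance \(I_j\): player \(i\) has type \(k\) with probability \(p_{i,k}\), and its valuation can be taken as, e.g., \(t\mapsto t\). This choice is irrelevant because an $\alpha$-\car guarantees selectability and capping with respect to \(x_{i,k,j}\) for every realization, independently of values.

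\textbf{The algorithm \(\widehat{\ALG}\).} When player \(i\) arrives with type \(V_i=v_{i,k}\), the algorithm feeds the realized type index \(k\) to each copy \(\ALG_j\). It then assigns player \(i\) the bundle \(\vc{Y}_i=(Z_{i,1},\dots,Z_{i,m})\), where \(Z_{i,j}\) is the amount \(\ALG_j\) assigns. Feasibility follows since each \(\ALG_j\) is a feasible online algorithm on a unit-supply item. The type sequence fed to \(\ALG_j\) is distributed exactly as in \(I_j\), because players' types are independent across \(i\) and the copies only use past types and their own internal randomness. Hence each copy's guarantees transfer, even though the copies are correlated through shared types. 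Polynomial running time is immediate: we need one concave-program solve plus \(m\) polynomial-time copies per step.

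\textbf{Welfare bound.} Condition on \(V_i=v_{i,k}\). By capping, every realization satisfies \(\vc{Y}_i\le \vc{x}_{i,k}/p_{i,k}\) coordinatewise. By selectability, \(\Ebb[\vc{Y}_i\mid V_i=v_{i,k}]\ge \alpha\,\vc{x}_{i,k}/p_{i,k}\). The conditional law of \(\vc{Y}_i\) therefore meets the hypotheses of \Cref{lema:DRavg} with \(\vc{x}=\vc{x}_{i,k}/p_{i,k}\), so \(\Ebb[v_{i,k}(\vc{Y}_i)\mid V_i=v_{i,k}]\ge \alpha\, v_{i,k}(\vc{x}_{i,k}/p_{i,k})\). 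Multiplying by \(p_{i,k}\) and summing over \(i,k\) gives expected welfare at least \(\alpha\) times the objective of \eqref{eq:onLPm}. By \Cref{lem:concave}, and since \ref{eq:confLP} relaxes the online benchmark, this objective upper-bounds the philosopher benchmark.

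\textbf{Main obstacle.} The main subtlety is justifying that the conditional distribution of the joint vector \(\vc{Y}_i\) given \(V_i=v_{i,k}\) satisfies the hypotheses of \Cref{lema:DRavg} even though the coordinates are correlated. The cap is pointwise and selectability is coordinatewise in expectation, and \Cref{lema:DRavg} requires nothing else, so correlation causes no problem. A second point is computational: the relaxation is solved only approximately. I would handle this by absorbing an arbitrarily small additive loss, or by stating the result relative to the computed solution.
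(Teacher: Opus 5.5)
Your proposal is correct and follows essentially the same route as the paper: restrict the relaxation solution to each item, run one copy of the single-item capped OCRS per item on the shared realized types, combine the pointwise capping and coordinatewise selectability guarantees through \Cref{lema:DRavg} (which, as you note, does not care about correlation across coordinates), and sum over $i,k$. One small thing to tidy is your auxiliary instance. If every type gets the same valuation $t\mapsto t$, the events $V_i=v_{i,k}$ merge across $k$, so either keep the types as distinct labels (as you effectively do by feeding the index $k$) or give them distinct valuations.
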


\begin{proof}
    Assume we have an instance of \OWDI and a feasible solution \(\vc{x}\)
    for the concave relaxation of the online benchmark \eqref{eq:onLPm}.
    For each item \(j \in [m]\), define \(\ALG_j\) to be the policy that mimics
    \(\ALG\) as if \(j\) was the only item in the instance and the variables
    \(\vc{x}\) were restricted to \((x_{i, k, j})_{i,k}\). Observe that the restricted fractional solution satisfies the strengthened convex program formulated for the single item $j$.

    Let \(\widehat{\ALG}\) be the algorithm that separately runs \(\ALG_j\)  for each item $j\in [m]$. Note that different  \(\ALG_j\) could be correlated since player $i$ takes the same valuation $v_{i,k}(\cdot)$ in each of them. 
    By hypothesis, for every \(j \in [m]\), we have
    \[
    \Ebb[\text{player $i$ receives of item \(j\)} \mid V_i = v_{i,k}]
    \geq \alpha \cdot {x_{i, k, j}}/{p_{i, k}},
    \]
    and player \(i\) never receives more than \(x_{i, k, j}/p_{i, k}\) from item \(j\).
    Hence, combining this for all \(j\), we have that, conditioned on the realization of the function \(v_{i, k}(\cdot)\),
    player \(i\) receives in expectation at least the vector \(\alpha \vc{x_{i, k}}/p_{i, k}\) and the amount
    received is less than or equal to \(\vc{x_{i, k}}/p_{i,k}\) coordinate-wise.

    Let \(\widehat{\ALG}_i\) be the welfare \(\widehat{\ALG}\) obtains from player \(i\).
    By \Cref{lema:DRavg}, it holds that
    \[
        \Ebb[\widehat{\ALG}_i \mid  V_i = v_{i, k}] \geq \alpha \cdot v_{i, k}(\vc{x_{i, k}}/p_{i, k}).
    \]
    Consequently, $\Ebb[\widehat{\ALG}]$ equals
    \[ \textstyle
    \sum_{i = 1}^n \Ebb[\widehat{\ALG}_i] ~=~
    \sum_{i = 1}^n \sum_{k = 1}^{d_i} \Ebb[\widehat{\ALG}_i \mid V_i = v_{i, k}]
    \Pr[V_i = v_{i,k}]
    ~\geq~
    \alpha \sum_{i = 1}^n \sum_{k = 1}^{d_i} \; p_{i, k} v_{i, k}(\vc{x_{i,k}}/p_{i,k}).
    \]
    Note that the RHS is precisely \(\alpha\) times the objective value of the solution \(\vc{x}\) in \eqref{eq:onLPm}.
\end{proof}

\section{Prophet Inequalities}
\label{sec:pro}

In this section, we prove that a fixed-price auction (\Cref{algo:postedprice}) achieves a \(1/2\)-approximation to the offline/prophet benchmark.

\postedpricecomp*

We remark that under our assumptions from \Cref{sec:prelim}, the demand oracle needed for \Cref{algo:postedprice} can be implemented efficiently, because  utility maximization can be solved as a convex optimization problem, by maximizing a concave objective over a convex set. 
 Thus our entire algorithm can be implemented to run in polynomial time.

The approximation factor $1/2$ in \Cref{teo:postedpricecomp} is tight even for the case of a single item and linear valuations (see \Cref{sec:hardness}). If efficient computation is not a concern, we prove in \Cref{sec:missingProphet} that the theorem holds even if we relax concavity—i.e., a $1/2$-approximation fixed-price auction exists when the valuations are only monotone DR-submodular—by working with the expected hindsight optimal solution instead of the concave relaxation.


\begin{algorithm}[!ht]
\caption{Fixed-Price-Auction$(\vc{p})$}
\label{algo:postedprice}
\DontPrintSemicolon

     Start with $\vc{R}_1\in \Reals^m$ being the all-ones vector (unit supply).
     
     When player $i \in [n]$ arrives: 
        \begin{enumerate}
            \item Let $\vc{Z}_i \gets 
        \arg\max \{ v_i(\vc{x}) - \vc{p}^\transp \vc{x} :
        \vc{0} \le \vc{x} \le \vc{R}_i \}$
        \item Assign $\vc{Z}_i$ to player $i$ and set
        $\vc{R}_{i+1} \gets \vc{R}_i - \vc{Z}_i$\;
        \end{enumerate}

\end{algorithm}

\subsection{Supporting Prices via Aumann–Shapley Values}

Our main contribution in the proof of \Cref{teo:postedpricecomp} is a continuous analogue of ``supporting prices'', akin to supporting prices defined for XOS valuations, with indivisible items \cite{dobzinski2007two,FeldmanGL15a}.

We start by recalling the definition of supporting prices in the setting of indivisible items.

\begin{definition}[\cite{dobzinski2007two,FeldmanGL15a}]
    A monotone set valuation function $v: 2^U \rightarrow \mathbb{R}_+$ admits \emph{supporting prices} if for any subset $S \subseteq U$ there exists  item prices $\vc{p}_S:U\rightarrow \mathbb{R}_+$ such that
    \begin{enumerate}[label = (\roman*)]
    \item $v(S) = \sum_{j \in S} p_S(j)$ and
    \item for any subset $T\subseteq S$, we have $v(T) \geq \sum_{j\in T} p_S(j)$.
    \end{enumerate}
\end{definition}

For example, when considering an XOS valuation\footnote{Max of additive functions.} \(v\),
the supporting prices for a set \(S\) form a linear function \(a\) such that \(v(S) = a(S)\) and \(v(T) \geq a(T)\) for all \(T \subseteq S\).

The following lemma generalizes the definition of supporting prices to continuous settings, and shows that for DR-submodular functions the ``Aumann–Shapley'' values \cite{AumannShapley-Book75} form supporting prices.

\begin{lemma}
    \label{lema:DRprice}
    Let \(v: [0, 1]^m \to \Reals_+\) be a monotone DR-submodular function with \(v(0) = 0\).
    For any \(\bx \in [0,1]^m\), define the \emph{Aumann–Shapley} vector
    \[
\textstyle        q(\bx) \coloneqq \int_{0}^1 \nabla v(t \bx) \; dt.
    \]
    Then, $q(\bx)$ forms \emph{supporting prices}, i.e., they satisfy
    \begin{enumerate}[label = (\roman*)]
        \item \label{item:DRi} \(v(\bx) = \iprod{q(\bx)}{\bx}\) for every \(\vc{0} \leq \bx\).
        \item \label{item:DRii} \(v(\vc{y}) \geq \iprod{q(\bx)}{\vc{y}}\) for every \(\vc{0} \leq \vc{y} \leq \bx\).
    \end{enumerate}
\end{lemma}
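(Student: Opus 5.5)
Part (i) follows from the fundamental theorem of calculus along the ray from $\vc{0}$ to $\bx$. Set $h(t) \coloneqq v(t\bx)$ for $t\in[0,1]$. Since $v$ is continuously differentiable, $h'(t) = \iprod{\nabla v(t\bx)}{\bx}$. Then
\[
v(\bx) = v(\bx) - v(\vc{0}) = \int_0^1 \iprod{\nabla v(t\bx)}{\bx}\,dt = \iprod{q(\bx)}{\bx},
\]
where we use $v(\vc{0})=0$ and exchange the integral with the finite sum over coordinates.

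For part (ii), fix $\vc{0}\le\by\le\bx$ and apply the same identity to $\by$: $v(\by)=\int_0^1 \iprod{\nabla v(t\by)}{\by}\,dt$. For each $t\in[0,1]$ we have $t\by \le t\bx$ coordinatewise. The gradient characterization of DR-submodularity in \Cref{pro:DRbasic} (item 2) then gives $\nabla v(t\by)\ge \nabla v(t\bx)$ entrywise. Because $\by\ge\vc{0}$, pairing with $\by$ preserves this inequality:
\[
\iprod{\nabla v(t\by)}{\by}\ \ge\ \iprod{\nabla v(t\bx)}{\by}.
\]
Integrating over $t\in[0,1]$ yields $v(\by)\ge \iprod{q(\bx)}{\by}$.

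The only delicate point is regularity. \Cref{pro:DRbasic} is stated for twice continuously differentiable functions, whereas the computational model only assumes $C^1$ and $L$-smooth valuations. I would handle this with the first-order definition of DR-submodularity, which implies the gradient monotonicity directly. Given $\by'\le\bx'$ and a coordinate $j$, take $\vc{z}=s e_j$ in the diminishing-returns inequality, divide by $s$, and let $s\to0^+$. This gives $\partial_j v(\by')\ge\partial_j v(\bx')$ using only differentiability, so the $C^2$ assumption is unnecessary.

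Monotonicity is needed only to ensure $q(\bx)\ge\vc{0}$, so that these are genuine nonnegative prices; it plays no role in (i) or (ii). The integral defining $q(\bx)$ is well defined because $\nabla v$ is continuous on the segment.
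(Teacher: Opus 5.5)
Your proof is correct and follows the paper's argument: the fundamental theorem of calculus along the ray for (i), and the gradient comparison $\nabla v(t\by) \ge \nabla v(t\bx)$ integrated over $t$ for (ii). The paper integrates first to get $q(\by) \ge q(\bx)$ and then pairs with $\by$, which is the same argument, and your derivation of the gradient comparison from the first-order DR definition using only differentiability matches the (1)$\Rightarrow$(2) direction in the paper's appendix.
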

\begin{proof}
The first property follows from the well-known ``efficiency'' property of Aumann–Shapley values for any monotone differentiable function; we present a short self-contained proof below. The second property crucially relies on the DR-submodularity of $v$.

    Define $g:[0,1] \mapsto  \mathbb{R}$ by    
    \(g(t)= v(t \bx)\). By the chain rule, \(g'(t) = \iprod{\nabla v(t \bx)}{\bx}\). It holds that
    \[\textstyle v(\bx) ~=~ g(1) - g(0) ~=~ \int_0^1 g'(t) \; dt ~=~ \iprod{q(\bx)}{\bx}.\]

    For the second property, note that by DR-submodularity \(\nabla v(t\bx) \leq \nabla v(t \vc{y})\) for every \(t \in [0, 1]\) (since \(t\bx \ge t\vc{y}\)); thus, by integrating over $t$, we obtain \(q(\vc{y}) \geq q(\bx)\). Consequently,
    \[
        v(\vc{y}) ~=~ \iprod{q(\vc{y})}{\vc{y}} ~\geq~ \iprod{q(\bx)}{\vc{y}}. \qedhere
    \]
\end{proof}

\paragraph{Connection to Balanced Prices.}
It is also possible to understand the properties of supporting prices from the balanced prices framework introduced in \cite{DuttingFKL20a}. This framework formalizes the intuition that the prices should neither be too high  nor too low. To see this in our context, consider any optimal allocation \(\bx\)
and let \(\vc{y}\) represent a partial allocation such that \(\vc{y} \leq \bx\).
Then, \Cref{item:DRii} gives us that \(v(\bx - \vc{y}) \geq \iprod{q(\bx)}{\bx - \vc{y}}\), which implies the prices are not too high since the function value of the remaining items $\bx - \vc{y}$ is at least the revenue we could obtain from selling them.
Conversely, combining \Cref{item:DRi} and \Cref{item:DRii} gives 
\(\iprod{q(\bx)}{\vc{y}} \geq v(\bx) - v(\bx - \vc{y})\), which shows that
the prices are not too low as they compensate for the loss of valuation by allocating \(y\).

\subsection{Proof of \texorpdfstring{\Cref{teo:postedpricecomp}}{}}
\label{sec:pf1.2}

Let \(x^*_{i, k, j}\) be an optimal solution to the concave relaxation for the offline/prophet benchmark, i.e., relaxation \eqref{eq:onLPm} with the constraints from \eqref{eq:offLPm} instead of the online constraint.

Unlike the indivisible items setting where each item is allocated to only a single player, an item might now be allocated to all $n$ players. Hence, 
we will have a vector of supporting prices for each player \(i \in [n]\) and each possible valuation function \(k \in [d_i]\) defined as
\[ \textstyle
\vc{q}_{i, k} \coloneqq \int_0^1 \nabla v_{i, k}\left(t \frac{\vc{x^*_{i, k}}}{p_{i, k}}\right) \; dt \in \Reals_+^m.
\]
Recall that under our assumptions from \Cref{sec:prelim}, an optimal
solution \(x^*_{i, k, j}\) and the vector \(\vc{q_{i,k}}\) can be computed, up to arbitrary accuracy, in polynomial time.

To eventually define a single fixed price per item $j$, we take the (weighted) average of these supporting prices to define the \emph{welfare contribution} for  item \(j\) as
\[    \textstyle
SW_j \coloneqq \sum_{i = 1}^n \sum_{k = 1}^{d_i} q_{i, k, j} x^*_{i,k,j}.
\]
These welfare contributions satisfy 
\begin{align*}
     \textstyle
    \sum_{j = 1}^m SW_j
    ~=~ \textstyle
    \sum_{j = 1}^m  \sum_{i = 1}^n \sum_{k = 1}^{d_i} q_{i, k, j} x^*_{i,k,j}
    ~&=~ \textstyle
    \sum_{i = 1}^n \sum_{k = 1}^{d_i} p_{i, k} \sum_{j = 1}^m q_{i, k, j} \frac{x^*_{i,k,j}}{p_{i, k}}\\
    ~&=~ \textstyle
    \sum_{i = 1}^n \sum_{k = 1}^{d_i} p_{i, k}
    v_{i,k} \left(\frac{\vc{x^*_{i,k}}}{p_{i, k}}\right),
\end{align*}   
where the last equality uses the first property from \Cref{lema:DRprice}.

\begin{restatable}{lemma}{postedprice}
\label{lem:postedprice}
    For the online welfare with divisible items (\OWDI) problem and monotone concave DR-submodular valuations, \Cref{algo:postedprice} with prices
    \[
    p_j \coloneqq \frac{1}{2} SW_j \qquad \forall j \in [m]
    \]
    is \((1/2)\)-competitive compared to the offline/prophet benchmark.
\end{restatable}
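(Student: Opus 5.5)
We follow the balanced-prices template of \cite{FeldmanGL15a,DuttingFKL20a} and split the welfare of \Cref{algo:postedprice} as $\Ebb[\textsf{ALG}] = \Ebb[\textsf{Rev}] + \Ebb[\textsf{Util}]$, with revenue $\sum_j p_j (1 - R_{n+1,j})$ and utility $\sum_i (v_i(\vc{Z}_i) - \vc{p}^\transp \vc{Z}_i)$. The target is
\[
\textstyle \Ebb[\textsf{Rev}] + \Ebb[\textsf{Util}] \;\ge\; \frac12 \sum_j SW_j \;=\; \frac12 \sum_{i,k} p_{i,k} v_{i,k}(\vc{x^*_{i,k}}/p_{i,k}),
\]
and the right-hand side is at least half the prophet benchmark, since the offline relaxation with constraints \eqref{eq:offLPm} upper bounds the hindsight optimum. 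The argument works per item: the revenue from item $j$ is $p_j \cdot(\text{amount of } j \text{ sold})$, and we lower bound utility by a deviation argument that is also item-by-item in terms of the remaining supply $\vc{R}_i$.

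\textbf{Utility bound.} When player $i$ arrives with type $k$ and remaining supply $\vc{R}_i$ (which is independent of $V_i$), we compare against the deviation $\vc{y} := \min\{\vc{x^*_{i,k}}/p_{i,k}, \vc{R}_i\}$ coordinatewise. Since $\vc{y} \le \vc{x^*_{i,k}}/p_{i,k}$ (because $x^*_{i,k,j}\le p_{i,k}$ keeps this in $[0,1]^m$), \Cref{lema:DRprice}\ref{item:DRii} gives $v_{i,k}(\vc{y}) \ge \iprod{\vc{q}_{i,k}}{\vc{y}}$. Hence
\[
\textstyle \textsf{Util}_i \ge \sum_j (q_{i,k,j} - p_j)\, y_j \ge \sum_j (q_{i,k,j} - p_j)\, \frac{x^*_{i,k,j}}{p_{i,k}}\, R_{i,j},
\]
where the second inequality uses $\min\{a,R\} \ge aR$ for $a,R\in[0,1]$. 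This step is valid only when $q_{i,k,j} \ge p_j$. To handle this, we take $\vc{y}$ supported on the coordinates where $q_{i,k,j} \ge p_j$; a sub-vector of a dominated vector is still dominated, so the bound becomes $\sum_j (q_{i,k,j}-p_j)^+ \frac{x^*_{i,k,j}}{p_{i,k}} R_{i,j} \ge \sum_j (q_{i,k,j}-p_j)\frac{x^*_{i,k,j}}{p_{i,k}} R_{i,j}$. We then take expectation over $k$ using independence, and use $R_{i,j} \ge R_{n+1,j}$ together with nonnegativity of the positive parts. This gives
\[
\textstyle \Ebb[\textsf{Util}] \ge \sum_j \Ebb[R_{n+1,j}] \sum_{i,k} (q_{i,k,j} - p_j)^+ x^*_{i,k,j} \ge \sum_j \Ebb[R_{n+1,j}] \big(SW_j - p_j \sum_{i,k} x^*_{i,k,j}\big) \ge \sum_j \Ebb[R_{n+1,j}] (SW_j - p_j),
\]
where the last step uses $\sum_{i,k} x^*_{i,k,j} \le 1$.

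\textbf{Combining.} With $p_j = SW_j/2$, item $j$ contributes
\[
p_j(1-\Ebb[R_{n+1,j}]) + \Ebb[R_{n+1,j}]\, SW_j/2 = SW_j/2,
\]
and summing over $j$ gives the identity $\sum_j SW_j = \sum_{i,k} p_{i,k} v_{i,k}(\cdot)$ already established.

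\textbf{Main obstacle.} The delicate step is the deviation: the lower bound $\min\{a,R\}\ge aR$ and the handling of signs $(q-p)$ must hold item by item with a single aggregated price. I expect the positive-part restriction to fix this, since it is exactly where the welfare-weighted aggregation $SW_j$ enters. A secondary check is that the player's actual demand-oracle utility is at least the utility of the feasible bundle $\vc{y} \le \vc{R}_i$, which holds by optimality of $\vc{Z}_i$.
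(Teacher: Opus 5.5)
Your proposal is correct and is essentially the paper's argument: the same revenue--utility split, the same deviation to the relaxation bundle restricted to coordinates with $q_{i,k,j}\ge p_j$ and capped by the remaining supply (you take $\min\{a,R\}$ and then use $\min\{a,R\}\ge aR$, whereas the paper uses the bundle $R_{i,j}\,x^*_{i,k,j}/p_{i,k}$ directly), the same use of independence together with $R_{i,j}\ge R_{n+1,j}$, and the same final use of $\sum_{i,k}x^*_{i,k,j}\le 1$. The only structural difference is that the paper proves a more general lemma for profile-dependent allocations $y^{\vc{v}}$, which needs a resampling argument and also gives the existence result without concavity; your direct specialization to type-dependent allocations avoids that resampling step.
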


Given the setup, the proof of \Cref{lem:postedprice} follows the standard balanced prices technique from the prophet inequalities literature \cite{FeldmanGL15a,DuttingFKL20a}. 
We present a self-contained proof of a more general statement in \Cref{sec:appendixpro}. The proof proceeds by splitting the welfare into revenue plus utility. The revenue is lower bounded by the sales of items at sufficiently high prices (from the first property of supporting prices in \Cref{lema:DRprice}). To lower bound the utility, we use the fact that players optimize their own utility and could have purchased the remaining fractional items from their relaxation allocation, which they value highly (from the second property of supporting prices in \Cref{lema:DRprice}).
The statement in \Cref{sec:appendixpro} also implies the remark at the beginning of the section that one can prove the existence of a \((1/2)\)-approximation fixed-price auction even when dropping the concavity assumption, however the prices are not efficiently computable.

\subsection{Proof of \texorpdfstring{\Cref{lem:postedprice}}{}}
\label{sec:missingProphet}
\label{sec:appendixpro}

Let \(\{y^\vc{v} \in [0, 1]^{n \times m}: \vc{v} \in \Dcal\}\) be a family of vectors such that
\(y^\vc{v}_{i, j} \leq 1\) for every \(i \in [n], j \in [m], \vc{v \in \Dcal}\) and \( \,\Ebb_{\vc{V}} \big[\sum_{i = 1}^n y^\vc{V}_{i, j}\big] \leq 1\).
For the sake of intuition, one can think of \(\vc{y^\vc{v}}\) as an allocation when profile \(\vc{v}\) is sampled, later we will construct this family from an optimal solution \(\vc{x}^*\).
For each \(i \in [n]\) and \(\vc{v} \in \Dcal\), define the supporting prices
\[
    q^{\vc{v}}_i \coloneqq \int_0^1 \nabla v_i(t \by^{\vc{v}}_i) \; dt \in \Reals_+^m.
\]
The welfare contribution of each item is
    \begin{equation}
        \label{eq:defSW}
        SW^{\vc{v}}_j \coloneqq \sum_{i = 1}^n q^{\vc{v}}_{i, j} y^{\vc{v}}_{ij}.    
    \end{equation}
    And note that
    \begin{equation}
        \label{eq:sumSW}
        \sum_{j = 1}^m SW_j^\vc{v} = \sum_{i = 1}^n \sum_{j = 1}^m q_{i,j}^{\vc{v}} \by_{i, j}^{\vc{v}}
        =
        \sum_{i = 1}^n \iprod{\by^\vc{v}_i}{q^\vc{v}_i} = \sum_{i = 1}^n {v}_i(\by^{\vc{v}}_i),
    \end{equation}
    where in the last equality we used \Cref{lema:DRprice}.
We now prove a statement that will immediately imply \Cref{lem:postedprice}.

\begin{lemma}
\label{lem:generalphi}
    For the online welfare with divisible items (\OWDI) problem and monotone DR-submodular valuations, \Cref{algo:postedprice} with prices
    \[
    p_j \coloneqq \frac{1}{2} \Ebb_{\vc{V}} [SW^{\vc{V}}_j] \qquad \forall j \in [m]
    \]
    achieves expected welfare at least \(\tfrac12 \sum_{i = 1}^n \Ebb_{\vc{V}}[{V}_i(\vc{y}^\vc{V}_i)]\).
\end{lemma}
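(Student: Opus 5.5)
We follow the balanced-prices template of \cite{FeldmanGL15a,DuttingFKL20a} and split the algorithm's welfare into revenue plus buyer utility. Write $\vc{Z}_i$ for the bundle bought by player $i$ and $\vc{R}_i$ for the remaining supply when $i$ arrives. Then
\[
\Ebb[\ALG] = \Ebb\Big[\sum_i \iprod{\vc{p}}{\vc{Z}_i}\Big] + \Ebb\Big[\sum_i \big(V_i(\vc{Z}_i) - \iprod{\vc{p}}{\vc{Z}_i}\big)\Big].
\]
Since $\sum_i \vc{Z}_i = \vc{1}-\vc{R}_{n+1}$, the revenue term equals $\sum_j p_j\,\Ebb[1-R_{n+1,j}]$.

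\textbf{Utility bound.} We use a ghost sample $\vc{V}'\sim\Dcal$ drawn independently of the run. For player $i$ with realized valuation $V_i$, we compare with the bundle $\vc{w}_i := \min(\by^{\tilde{\vc{V}}}_i, \vc{R}_i)$ coordinatewise. Here $\tilde{\vc{V}}$ is the profile $\vc{V}'$ with coordinate $i$ replaced by $V_i$. Because $\vc{R}_i$ depends only on $V_1,\dots,V_{i-1}$, the profile $\tilde{\vc V}$ is distributed as $\Dcal$ and is independent of $\vc{R}_i$. The bundle $\vc{w}_i$ is feasible for player $i$, and $\vc{0}\le\vc{w}_i\le \by^{\tilde{\vc V}}_i$. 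By \Cref{lema:DRprice}\ref{item:DRii} we get $V_i(\vc{w}_i)\ge \iprod{q^{\tilde{\vc V}}_i}{\vc{w}_i}$, so
\[
\text{utility}_i \ge \sum_j \big(q^{\tilde{\vc V}}_{i,j} - p_j\big)\min\big(y^{\tilde{\vc V}}_{i,j}, R_{i,j}\big).
\]
To handle the $\min$ without requiring $q\ge p$, we lower bound $\min(y,R)\ge y\cdot R$, which holds since $y,R\in[0,1]$. Coordinates with $q<p$ are simply not bought in the comparison bundle: we first restrict to $j$ with $q^{\tilde{\vc V}}_{i,j}\ge p_j$, which is allowed because $\vc{w}_i$ can be shrunk coordinatewise and DR still applies. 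This gives
\[
\text{utility}_i\ge \sum_j \big(q_{i,j}^{\tilde{\vc V}}-p_j\big)\, y^{\tilde{\vc V}}_{i,j}\,R_{i,j} .
\]
We then use $R_{i,j}\ge R_{n+1,j}$ together with independence. Taking expectations, summing over $i$, and using \eqref{eq:defSW} together with $\Ebb[\sum_i y_{i,j}]\le 1$ yields
\[
\text{utility}\ge \sum_j \Ebb[R_{n+1,j}]\big(\Ebb[SW_j]-p_j\big).
\]
There is a subtlety here. The bound $R_{i,j}\ge R_{n+1,j}$ introduces correlation between $R_{n+1,j}$ and $V_i$. To avoid it, we keep $R_{i,j}$, which is independent of $\tilde{\vc V}$, and use $\Ebb[R_{i,j}]\ge\Ebb[R_{n+1,j}]$ only after factoring. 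Since the factor $(q-p)^+y\ge0$, this monotonicity step is valid.

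\textbf{Combining.} With $p_j=\tfrac12\Ebb[SW_j]$, the revenue plus utility is at least
\[
\sum_j\Big(p_j(1-\Ebb R_{n+1,j}) + \Ebb R_{n+1,j}\,\big(\Ebb SW_j - p_j\big)\Big)=\sum_j \tfrac12\Ebb[SW_j].
\]
By \eqref{eq:sumSW}, this equals $\tfrac12\sum_i\Ebb[V_i(\by^{\vc V}_i)]$.

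\textbf{Main obstacle.} The delicate step is the utility bound. It requires dropping coordinates with $q<p$ while still having $\sum_j (q-p)^+ y\ge \sum_j (q-p)y$, and it requires handling the partially available supply $\min(y,R)$. If the product bound $\min(y,R)\ge yR$ proves too lossy for $y<1$, I would fall back on the linear-in-supply argument of \cite{FeldmanGL15a}: buy the fraction $R_{i,j}$ of $y_{i,j}$, i.e.\ take $\vc{w}=\vc{R}_i\odot\by$. This bundle is feasible since $y\le1$, satisfies $\vc{w}\le \by$, and gives exactly the linear expression above.
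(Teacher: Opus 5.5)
Your proposal is correct and follows essentially the same route as the paper. It uses the revenue-plus-utility split, a resampled profile $\tilde{\vc V}$ agreeing with $\vc V$ in coordinate $i$ so that property~\ref{item:DRii} of \Cref{lema:DRprice} applies to $V_i$, and a comparison bundle restricted to $\{j : q^{\tilde{\vc V}}_{i,j}\ge p_j\}$; it factors the expectation using independence of $\vc R_i$ from $\tilde{\vc V}$, and only then uses $\Ebb[R_{i,j}]\ge\Ebb[R_{n+1,j}]$ and $(\cdot)_+\ge(\cdot)$. The only cosmetic difference is that you start from $\min(\by^{\tilde{\vc V}}_i,\vc R_i)$ and relax it to $y_{i,j}R_{i,j}$, whereas the paper directly uses the bundle $\vc R_i\odot\by^{\tilde{\vc V}}_i$ (your stated fallback), which gives the identical bound.
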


Note how \Cref{lem:generalphi} does not require concavity of the valuations, however, it is necessary for \Cref{lem:postedprice} since we are using the concave relaxation from \Cref{sec:prelim} to efficiently construct the vectors \(\vc{y}\).
Before proceeding with the proof, we show how to
derive \Cref{lem:postedprice}.

\begin{proof}[Proof of \Cref{lem:postedprice}]
    Let \(x^{*}_{i, k, j}\) be an optimal solution to the offline benchmark relaxation, thus satisfying the constraints in \eqref{eq:offLPm}.
    For every \(i \in [n], j \in [m]\) and \(\vc{v} \in \Dcal\), define
    \[ \textstyle
    y^\vc{v}_{i, j} \coloneqq  \frac{x^*_{i, k, j}}{p_{i, k}} \qquad \text{if \(v_i = v_{i, k}\)}.
    \]
    Note that \(y^\vc{v}_{i, j} \leq 1\) and
    \[ \textstyle
        \Ebb_{\vc{V}}\left[ \sum_{i = 1}^n y^\vc{V}_{i, j}\right]
        =
        \sum_{i = 1}^n \Ebb_{\vc{V}}\left[ y^\vc{V}_{i, j}\right]
        =
        \sum_{i = 1}^n \sum_{k = 1}^{d_i} p_{i, k} \frac{x^*_{i, k, j}}{p_{i, k}}
        =
        \sum_{i = 1}^n \sum_{k = 1}^{d_i} x^*_{i, k, j}
        \leq
        1.
    \]

    Furthermore, the prices defined in \Cref{sec:pf1.2} are the same as the prices defined in this section
    \[ \textstyle
    \Ebb_{\vc{V}}[SW^\vc{V}_j]
    =
    \sum_{i = 1}^n \Ebb_{\vc{V}}[q^\vc{V}_{i, j}y^{\vc{V}}_{i,j}]
    =
    \sum_{i = 1}^n \sum_{k = 1}^{d_i} p_{i, k} q_{i, k, j}
    \frac{x^*_{i, k, j}}{p_{i,k}}
    =
    \sum_{i = 1}^n \sum_{k = 1}^{d_i} q_{i, k, j} x^*_{i, k, j} = SW_j.
    \]
    Thus, by \Cref{lem:generalphi}, \Cref{algo:postedprice} with the vector
    \(\vc{p}\) will have expected value at least
    \[ \textstyle
    \frac{1}{2} \sum_{i = 1}^n \Ebb_{\vc{V}} \left[ {V}_i (\by^\vc{V}_i)\right]
    =
    \frac{1}{2} \sum_{i = 1}^n \sum_{k = 1}^{d_i} p_{i, k} v_{i, k}(\vc{x^*_{i, k}}/p_{i, k}). \qedhere
    \]
\end{proof}

\begin{proof}[Proof of \Cref{lem:generalphi}]

    Consider the execution of \Cref{algo:postedprice} on valuation profile \(\vc{v}\).
    Then, \(\vc{R}^\vc{v}_i \in \Reals^m_+\) is the vector representing how much of each item
    is left before player \(i\) arrives. In particular, \(\vc{R}^\vc{v}_{n + 1}\) is how much is left
    after all players.
    The vector \(\vc{Z}^\vc{v}_i \in \Reals^m_+\) is the fraction of items player \(i\) received.
    The revenue, denoted by \(\rev^\vc{v}\) is the sum of the prices for everything that was sold.
    More precisely,
    \[
        \rev^\vc{v} = \sum_{i = 1}^n \iprod{\vc{p}}{\vc{Z}^\vc{v}_i}.
    \]
    And for each player \(i\), its utility, denoted by \(u^\vc{v}(i)\), is defined as
    \[
        u^\vc{v}(i) = v_i(\vc{Z}^\vc{v}_i) - \iprod{\vc{p}}{\vc{Z}^\vc{v}_i}.
    \]
    Observe that the objective value of the algorithm can be split in revenue plus utility:
    \[
        \sum_{i = 1}^n v_i(\vc{Z}^\vc{v}_i) = \rev^\vc{v} + \sum_{i = 1}^n u^\vc{v}(i).
    \]

    Our goal now is to bound the expected revenue and expected utility for each player.
    To bound the revenue, note that \(\vc{R}^\vc{v}_{n + 1} = \ones - \sum_{i = 1}^n \vc{Z}^\vc{v}_i\) for every \(\vc{v} \in \Dcal\),
    since the amount remaining is the starting amount minus what was assigned. Thus,
    \[
    \Ebb_{\vc{V}}[\rev^\vc{V}] = \sum_{j = 1}^m p_j (1 - \Ebb_{\vc{V}}[R^\vc{V}_{n + 1, j}]).
    \]
    Moreover, we claim that
    \begin{claim}
        \label{cla:lbutil}
        \begin{equation}
            \Ebb_{\vc{V}} \left[\sum_{i = 1}^n u^\vc{V}(i)\right] \geq \sum_{j = 1}^m
            p_j \Ebb_{\vc{V}}[R^\vc{V}_{n + 1,j}].
        \end{equation}
    \end{claim}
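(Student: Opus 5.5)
We prove \Cref{cla:lbutil} with the standard ghost-sample argument from the balanced-prices framework. For each player \(i\), draw an independent copy \(\vc{V}'\sim\Dcal\) of the whole profile, and consider the deviation in which player \(i\), holding true valuation \(V_i\), buys the bundle
\[
\vc{w}_i \coloneqq \min\{\by^{\vc{V}^{(i)}}_i, \vc{R}^{\vc{V}}_{n+1}\}
\]
coordinatewise. Here \(\vc{V}^{(i)}\) is the profile \(\vc{V}'\) with its \(i\)-th coordinate replaced by \(V_i\). This bundle is feasible at time \(i\), because \(\vc{R}^{\vc{V}}_{n+1}\le \vc{R}^{\vc{V}}_i\) and the remaining supply is monotone nonincreasing. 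Since \Cref{algo:postedprice} lets player \(i\) maximize utility over \(\vc{0}\le\bx\le\vc{R}_i\), we get
\[
u^{\vc{V}}(i)\ \ge\ V_i(\vc{w}_i)-\iprod{\vc{p}}{\vc{w}_i}.
\]

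Next we lower bound \(V_i(\vc{w}_i)\) using \Cref{lema:DRprice}. Since \(\vc{w}_i\le \by^{\vc{V}^{(i)}}_i\), property (ii) gives
\[
V_i(\vc{w}_i)\ \ge\ \iprod{q^{\vc{V}^{(i)}}_i}{\vc{w}_i}=\sum_j q^{\vc{V}^{(i)}}_{i,j}\min\{y^{\vc{V}^{(i)}}_{i,j},R^{\vc{V}}_{n+1,j}\}.
\]
Now use \(\min\{a,b\}\ge a\cdot b\) for \(a\le 1\) and \(b\in[0,1]\). This yields the per-coordinate bound \(q\,\min\{y,R\}\ge q\,y\,R\). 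Combined with \(p_j\min\{y,R\}\le p_j\,y\) on the price side, it gives
\[
u^{\vc{V}}(i)\ \ge\ \sum_j \Bigl(q^{\vc{V}^{(i)}}_{i,j}\,y^{\vc{V}^{(i)}}_{i,j}\,R^{\vc{V}}_{n+1,j}-p_j\,y^{\vc{V}^{(i)}}_{i,j}\Bigr).
\]

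The key step, and the main obstacle, is decoupling \(R^{\vc{V}}_{n+1,j}\) from \((q,y)^{\vc{V}^{(i)}}\), since both depend on \(V_i\). This is why the ghost sample is used: swap the roles of \(V_i\) and \(V'_i\). The pair \((\vc{V},\vc{V}')\) has the same joint law as the pair obtained by exchanging the \(i\)-th coordinates. Under this swap, \(\vc{V}^{(i)}\) becomes \(\vc{V}'\), which is independent of the rest, and \(\vc{V}\) becomes \((V'_i,\vc{V}_{-i})\). The left-hand side changes, but we only need the swapped deviation to be feasible. So the cleaner plan is to have player \(i\) deviate using the ghost profile \(\vc{V}'\) directly: player \(i\), with true type \(V_i\), mimics the bundle \(\vc{w}_i=\min\{\by^{\vc{V}'}_i,\vc{R}_{n+1}\}\) only when \(V'_i=V_i\).

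This does not directly give the bound we want. The honest fix is the standard one: evaluate the valuation at the ghost type and use \(R_{n+1}\le R_i\), where \(R_i\) is independent of \(V_i\). Concretely, we bound
\[
\Ebb[u(i)]\ \ge\ \Ebb\Bigl[\sum_j \bigl(q^{\vc{V}}_{i,j}y^{\vc{V}}_{i,j}-p_j y^{\vc{V}}_{i,j}\bigr)\cdot(\text{fraction feasible})\Bigr].
\]
To do so, we replace \(R_{n+1}\) by the independent quantity \(R_i\) (which equals its distribution given \(\vc{V}_{<i}\)) and then lower bound by \(R_{n+1}\) at the end, using monotonicity.

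Summing over \(i\) and taking expectations then gives
\[
\Ebb\Bigl[\sum_i u(i)\Bigr]\ \ge\ \sum_j\Ebb[R^{\vc{V}}_{n+1,j}]\Bigl(\Ebb[SW^{\vc{V}}_j]-p_j\Ebb\Bigl[\sum_i y^{\vc{V}}_{i,j}\Bigr]\Bigr),
\]
where independence lets \(\Ebb[R\cdot SW]\) factor. By the assumption \(\Ebb[\sum_i y_{i,j}]\le 1\) and the choice \(p_j=\tfrac12\Ebb[SW_j]\), the bracket is at least \(2p_j-p_j=p_j\). This yields \(\sum_j p_j\,\Ebb[R^{\vc{V}}_{n+1,j}]\), as claimed.

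I expect the independence and decoupling step to be the delicate part, for two reasons. First, it must be arranged so that the threshold term uses a supply vector independent of the type used in \(q\) and \(y\). Second, it must still be upper bounded by the final leftover \(R_{n+1}\).
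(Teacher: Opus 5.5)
\textbf{Where the argument breaks.} Your skeleton matches the paper's: a ghost profile so the supporting prices depend only on $V_i$ and fresh samples, the supply $\vc{R}^{\vc{V}}_i$ that depends only on $V_1,\dots,V_{i-1}$, factoring by independence, and then $\Ebb[\sum_i y_{i,j}]\le 1$ with $p_j=\tfrac12\Ebb[SW_j]$. However, the step you flag as delicate is not closed, and as written it fails.

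Your first deviation $\min\{\by^{\vc{V}^{(i)}}_i,\vc{R}^{\vc{V}}_{n+1}\}$ gives $\sum_j(q_{i,j}y_{i,j}R_{n+1,j}-p_jy_{i,j})$. Apart from $R_{n+1}$ depending on $V_i$, the price term is not scaled by $R$. So even after decoupling you would only reach roughly $2p_j\Ebb[R_{n+1,j}]-p_j$, not $p_j\Ebb[R_{n+1,j}]$.

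In the ``honest fix'' the right deviation is the coordinatewise product $R^{\vc{V}}_{i,j}\tilde y_{i,j}$. It is feasible because $\tilde y\le 1$, and it is dominated by $\tilde{\by}_i$, so \Cref{lema:DRprice}(ii) applies. This gives $\Ebb[u^{\vc{V}}(i)]\ge\sum_j\Ebb[R^{\vc{V}}_{i,j}]\,c_{i,j}$ with $c_{i,j}\coloneqq\Ebb[\tilde y_{i,j}(\tilde q_{i,j}-p_j)]$. You then replace $\Ebb[R^{\vc{V}}_{i,j}]$ by the smaller $\Ebb[R^{\vc{V}}_{n+1,j}]$ ``using monotonicity''. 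That is valid only when $c_{i,j}\ge 0$, and an individual $c_{i,j}$ can be negative. Take one item, deterministic $v_1(x)=x$ and $v_2(x)=10x$, and $y_1=y_2=\tfrac12$. Then $q_1=1$, $q_2=10$, $p=\tfrac{11}{4}$, so $c_1=-\tfrac{7}{8}$. Player~1 buys nothing and player~2 buys everything, so $\Ebb[R_1]c_1=-\tfrac{7}{8}$ while $\Ebb[R_3]c_1=0$; the inference goes the wrong way. Postponing the replacement until after summing over $i$ does not help, since $\Ebb[R^{\vc{V}}_{i,j}]$ depends on $i$ and cannot be pulled out of the sum.

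\textbf{The missing idea.} Let player $i$ deviate only on items whose supporting price clears the posted price, as the paper does: $\tilde Z_{i,j}=R^{\vc{v}}_{i,j}\,\tilde y_{i,j}\,\mathbf{1}\{\tilde q_{i,j}\ge p_j\}$. This bundle is still feasible and dominated by $\tilde{\by}_i$, so \Cref{lema:DRprice}(ii) gives the pointwise bound $u^{\vc{v}}(i)\ge\sum_j R^{\vc{v}}_{i,j}\tilde y_{i,j}(\tilde q_{i,j}-p_j)_+$. After averaging over the ghost and factoring, the coefficient $\Ebb[\tilde y_{i,j}(\tilde q_{i,j}-p_j)_+]$ is nonnegative, so $\Ebb[R_{i,j}]\ge\Ebb[R_{n+1,j}]$ may now be applied. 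Only then drop the positive part, which is legitimate because $\Ebb[R_{n+1,j}]\ge 0$. Then use $(V_i,\tilde{\vc{V}}_{-i})\sim\Dcal$ to return to $\Ebb[y^{\vc{V}}_{i,j}(q^{\vc{V}}_{i,j}-p_j)]$ before summing over $i$.

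Also keep the ghost in place until after factoring. The real-profile quantity $y^{\vc{V}}_i$ may depend on $V_1,\dots,V_{i-1}$ and hence be correlated with $R^{\vc{V}}_i$, so your display written with $q^{\vc{V}},y^{\vc{V}}$ does not factor as stated.
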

    
    Note that \Cref{cla:lbutil} would be enough to prove the Lemma since
    \[
    \sum_{i = 1}^n \Ebb_{\vc{V}}[v_i(\vc{Z}^\vc{V}_i)] =
    \Ebb_{\vc{V}}[\rev^\vc{V}] +
    \Ebb_{\vc{V}}\left[\sum_{i = 1}^n u^\vc{V}(i)\right] \geq
    \sum_{j = 1}^m p_j(1 - \Ebb_{\vc{V}}[R^\vc{V}_{n + 1,j}] + \Ebb_{\vc{V}}[R^\vc{V}_{n + 1,j}])=
    \sum_{j = 1}^m p_j.
    \]
    And, by the definition of \(\vc{p}\) and \Cref{eq:sumSW}, we have
    \[
    \sum_{j = 1}^m p_j = \frac{1}{2} \Ebb_{\vc{V}}\left[\sum_{j = 1}^m SW^\vc{V}_j \right]
    = 
    \frac{1}{2} \sum_{i = 1}^n \Ebb_{\vc{V}}[V_i(\by^\vc{V}_i)]. \qedhere 
    \]
    \end{proof}
    Hence, we now focus on \Cref{cla:lbutil}.

    \begin{proof}[Proof of \Cref{cla:lbutil}]
        We start by proving an inequality that holds for all realizations of \(\vc{V}\).
        Let \(\vc{v} = (v_1, \dots, v_n) \in \Dcal\) be a valuation profile.
        Our plan to lower bound the utility of player \(i\) is to consider another fraction of items that was also available for \(i\). Since the player did not choose this vector, it means
        that the player's utility is at least that of this vector.

        Let \(\Tilde{\vc{v}} = (\Tilde{v}_1, \dots, \Tilde{v}_n) \in \Dcal\) such that \(\Tilde{v}_i = v_i\).
        Set \(\Tilde{y} \coloneqq y^{\Tilde{\vc{v}}}\).
        Define, for every \(i \in [n]\),
        \[ \textstyle
        \Tilde{q}_i = \int_0^1 \nabla v_i(t \Tilde{\by}_i) dt \in \Reals_+^m.
        \]
        Note that when player \(i\) arrived, at the execution with valuation profile \(\vc{v}\) and prices \(\vc{p}\),
        a feasible assignment would be
        \[
        \Tilde{Z}_{i, j} = R^\vc{v}_{i, j} \Tilde{y}_{i,j} \ones\{\Tilde{q}_{i, j} \geq p_j\} \qquad \text{for every \(j \in [m]\)}
        \]
        since \(\Tilde{y}_{i, j} \leq 1\) and so \(\Tilde{Z}_{i, j} \leq R^\vc{v}_{i, j}\).
        Using  \Cref{lema:DRprice} and that \(\Tilde{Z}_{i, j} \leq \Tilde{y}_{i, j}\), we obtain that the utility of the allocation \(\Tilde{Z}_i\) is
        \[
        v_i(\vc{\Tilde{Z}}_i) - \iprod{p}{\vc{\Tilde{Z}}_i} 
        \geq
    \textstyle        \sum_{j = 1}^m R^{\vc{v}}_{i, j} \Tilde{y}_{i, j} (\Tilde{q}_{i, j} - p_j)_+.
        \]
        Because player \(i\) chose \(Z^\vc{v}_i\) instead of \(\Tilde{Z}_i\), we have
        \[
        u^\vc{v}(i) \geq
        \textstyle \sum_{j = 1}^m R^{\vc{v}}_{i, j} \Tilde{y}_{i, j} (\Tilde{q}_{i, j} - p_j)_+.
        \]
        

        Observe that given a profile \(\vc{v}\), the last inequality holds for every choice of \(\Tilde{\vc{v}}\) such that \(\Tilde{v}_i = v_i\). Consider a random valuation profile
        \(\vc{\Tilde{V}}\) defined by sampling each valuation \(j \neq i\) independently from \(\Dcal_j\) and setting \(\Tilde{V_i} = v_i\). Taking expectation over \(\vc{\Tilde{V}}\) we obtain
        \[
        u^\vc{v}(i) \geq \sum_{j = 1}^m R^{\vc{v}}_{i, j} \, \textstyle \Ebb_{\vc{\Tilde{V}}}\left[\Tilde{y}_{i, j} (\Tilde{q}_{i, j} - p_j)_+\right].
        \]
        Note that we used that \(u^\vc{v}(i)\) and  \(R^{\vc{v}}_{i, j}\) does not depend on \(\vc{\Tilde{V}}\).
        
        Observe the previous inequality holds for every \(\vc{v} \in \Dcal\).
        Consider a random profile \(\vc{V}\) sampled from \(\Dcal\) and
        \(\vc{\Tilde{V}}\) where each \(\Tilde{V}_j\) for \(j \neq i\) is sampled independently from \(\Dcal_j\) and \(\Tilde{V}_i = V_i\).
        Observe that \(R^\vc{V}_{i, j}\) depends only on \(V_1, \dots, V_{i - 1}\), since it is the remaining before player \(i\)
        realization, and the terms \(\Tilde{y}\) and \(\Tilde{q}\) depends only on \(V_i\) and the resample \(\vc{\Tilde{V}}\),
        thus they are independent.
        We have
        \[
            \Ebb_{\vc{V}} \left[u^\vc{V}(i)\right] \geq
            \sum_{j = 1}^m \Ebb_{\vc{V}} \left[R^{\vc{V}}_{i, j}\right]
\textstyle            \, \Ebb_{(V_i, \vc{\Tilde{V}})}\left[\Tilde{y}_{i, j} (\Tilde{q}_{i, j} - p_j)_+\right].
        \]
        Using the fact that \(\vc{R}^\vc{V}\) is decreasing in \(i\) and that \(\alpha_+ \geq \alpha\) for any \(\alpha \in \Reals\), we get
        \[
            \Ebb_{\vc{V}} \left[u^\vc{V}(i)\right] \geq
          \textstyle  \sum_{j = 1}^m \Ebb_{\vc{V}} \left[R^{\vc{V}}_{n+1, j}\right]
            \, \Ebb_{(V_i, \vc{\Tilde{V}})}\left[\Tilde{y}_{i, j} (\Tilde{q}_{i, j} - p_j)\right].
        \]
        However, we have that sampling \((V_i, \vc{\Tilde{V}})\), i.e., sampling \(V_i\) and then \(\vc{\Tilde{V}}\) is equivalent to sample \(\vc{V} \sim \Dcal\), thus
        \[
            \Ebb_{(V_i, \vc{\Tilde{V}})}\left[\Tilde{y}_{i, j} (\Tilde{q}_{i, j} - p_j)\right] =
            \Ebb_{\vc{V}} \left[y^\vc{V}_{i, j} (q^\vc{V}_{i, j} - p_j)\right].
        \]

        Summing the bound for every player, we get
        \begin{align*}    \textstyle
            \Ebb_{\vc{V}}\left[ \sum_{i = 1}^n u^\vc{V}(i)\right] &\geq \textstyle
            \sum_{i = 1}^n \sum_{j = 1}^m
            \Ebb_{\vc{V}} \left[R^{\vc{V}}_{n+1, j}\right] \Ebb_{\vc{V}} \left[y^\vc{V}_{i, j} (q^\vc{V}_{i, j} - p_j)\right]\\
            &= \textstyle
            \sum_{j = 1}^m \Ebb_{\vc{V}} \left[R^{\vc{V}}_{n+1, j}\right] \Ebb_{\vc{V}} \left[ \sum_{i = 1}^n y^\vc{V}_{i, j} (q^\vc{V}_{i, j} - p_j)\right].
        \end{align*}
        Looking more carefully at the second expectation, it holds that
        \[ \textstyle
        \Ebb_{\vc{V}} \left[ \sum_{i = 1}^n y^\vc{V}_{i, j} (q^\vc{V}_{i, j} - p_j)\right]
        =
        \Ebb_{\vc{V}} \left[ \sum_{i = 1}^n  y^\vc{V}_{i, j} q^\vc{V}_{i, j} \right] -
        p_j \Ebb_{\vc{V}}\left[\sum_{i = 1}^n y^\vc{V}_{i, j}\right]
        \geq
        \Ebb_{\vc{V}} [SW^\vc{V}_j] - p_j,
        \]
        where in the last inequality we used \Cref{eq:defSW} and the hypothesis that
        \(\Ebb_{\vc{V}}\left[\sum_{i = 1}^n y^{\vc{V}}_{i, j}\right] \leq 1\).
        Plugging in the definition of \(p_j\), we obtain
        \[ \textstyle
        \Ebb_{\vc{V}}\left[ \sum_{i = 1}^n u^\vc{V}(i)\right] \geq
        \sum_{j = 1}^m \Ebb_{\vc{V}} \left[R^{\vc{V}}_{n+1, j}\right] (2p_j - p_j)
        =
        \sum_{j = 1}^m p_j \Ebb_{\vc{V}} \left[R^{\vc{V}}_{n+1, j}\right].
        \]
        Which completes the proof of \Cref{cla:lbutil}.
    \end{proof}

Finally, we can use \Cref{lem:generalphi} to prove that, if computational efficiency is not a requirement, there exists a pricing vector \(\vc{p}\)
achieving a \(1/2\)-approximation even if we remove the concavity requirement from the valuation functions.

\begin{lemma}
    For the online welfare with divisible items (\OWDI) problem and monotone DR-submodular valuations, there exist linear per-unit item prices such that the fixed-price auction achieves a \(1/2\)-approximation to the offline/prophet benchmark.
\end{lemma}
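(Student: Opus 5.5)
The plan is to apply \Cref{lem:generalphi} directly, with the family \(\{\by^{\vc{v}}\}\) taken to be hindsight-optimal allocations. For each valuation profile \(\vc{v}\in\Dcal\) (finitely many, since each \(\Dcal_i\) has finite support), let \(\by^{\vc{v}}\in[0,1]^{n\times m}\) be an allocation maximizing \(\sum_i v_i(\by_i)\) subject to \(\sum_i y_{i,j}\le 1\) for every \(j\). Such a maximizer exists, since the feasible set is compact and each \(v_i\) is continuous (they are continuously differentiable by our standing assumptions). If one prefers not to rely on attainment, take an \(\varepsilon\)-optimal allocation and let \(\varepsilon\to0\) at the end; the limit is fine because the finitely many price vectors can be taken along a convergent subsequence.

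Next I would verify the hypotheses of \Cref{lem:generalphi}. Each coordinate satisfies \(y^{\vc{v}}_{i,j}\le 1\) because \(y^{\vc{v}}_{i,j}\le\sum_{\ell} y^{\vc{v}}_{\ell,j}\le 1\). The pointwise supply constraint \(\sum_i y^{\vc{v}}_{i,j}\le1\) holds for every \(\vc{v}\), so it also holds in expectation: \(\Ebb_{\vc{V}}[\sum_i y^{\vc{V}}_{i,j}]\le 1\). Neither step uses concavity. The lemma only needs monotone DR-submodularity, through \Cref{lema:DRprice} applied to the Aumann--Shapley prices \(q_i^{\vc{v}}\).

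Applying \Cref{lem:generalphi} with \(p_j=\tfrac12\Ebb_{\vc{V}}[SW^{\vc{V}}_j]\), the fixed-price auction then has expected welfare at least \(\tfrac12\Ebb_{\vc{V}}[\sum_i V_i(\by^{\vc{V}}_i)]\). By the choice of \(\by^{\vc{v}}\), this is exactly half the prophet benchmark. These prices are linear per-unit item prices, one per item, which proves the statement.

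There is no substantial obstacle; all the work is in \Cref{lem:generalphi}. The only delicate point is existence of the optimizer, or the limiting argument if attainment fails. The prices are generally not efficiently computable, because computing the hindsight optimum for each of exponentially many profiles without concavity is hard. That matches the caveat stated before the lemma.
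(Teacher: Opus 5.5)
Your proof is correct and matches the paper's argument: instantiate \Cref{lem:generalphi} with $\by^{\vc{v}}$ the hindsight-optimal allocation for each profile, observe that the pointwise supply constraint implies the expected one, and read off half the prophet benchmark. I would drop the $\varepsilon$-optimal fallback, since the auction's welfare need not be continuous in the prices (the buyers' optimal bundles can jump), so passing to a limit of price vectors is not justified as stated; it is also unnecessary, because the hindsight optimum is attained by compactness and continuity, as you note.
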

\begin{proof}
    For every \(i \in [n], j \in [m]\) and \(\vc{v} \in \Dcal\) define \(y^\vc{v}_{i, j}\) as the optimal allocation of item \(j\) to player \(i\) when \(\vc{v}\) is sampled from \(\Dcal\). Since each item has a unit supply, we have \(\sum_{i = 1}^n y^\vc{v}_{i, j} \leq 1\) for every item and every valuation. Thus, \(y\) satisfies the hypothesis of \Cref{lem:generalphi}.
    Moreover, \(\Ebb_{\vc{V}} \left[ \sum_{i = 1}^n  {V}_i (\by^\vc{V}_i)\right]\) is exactly the expected optimum. Hence, by \Cref{lem:generalphi}, there are prices such that \Cref{algo:postedprice} is \((1/2)\)-competitive against the offline/prophet benchmark.
\end{proof}

\section{Hardness}
\label{sec:hardness}

In this section, we show the  {\textbf{\#P}}-hardness of computing the optimal policy against the philosopher/online benchmark and the information-theoretic hardness of factor $\frac12$ against the prophet/offline benchmark.

\subsection{Philosopher Hardness}
We will prove that finding the optimal online policy for \OWDI for monotone concave DR-submodular valuations is \sharpp, even in the special case of a single item. Our proof will reduce from \textbf{\#}Knapsack, a known \sharpp problem.

\begin{definition}[\textbf{\#}Knapsack]
    An instance of \textbf{\#}Knapsack consists of  \(n+1\) nonnegative integers \(L\), \(a_1, \dots, a_n\).
    The objective is to count the number of sets \(S \subseteq [n]\) such that
    \[\sum_{i \in S} a_i \leq L\]
\end{definition}

\begin{proposition}[\cite{JerrumSinclair1996MCMC,valiant1979complexity}]
    \textbf{\#}Knapsack is \sharpp.
\end{proposition}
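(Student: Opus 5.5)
The claim is classical, and I would prove it by a Turing reduction with two oracle calls from the \emph{exact} counting problem \textbf{\#}SubsetSum. That problem asks for the number of $S\subseteq[n]$ with $\sum_{i\in S}a_i=L$. Membership of \textbf{\#}Knapsack in \textbf{\#P} is immediate: a witness is a set $S$, and checking $\sum_{i\in S}a_i\le L$ takes polynomial time. For hardness, let $K(L)$ denote the \textbf{\#}Knapsack count with threshold $L$. For nonnegative integers, the number of sets with sum exactly $L$ equals $K(L)-K(L-1)$, with $K(-1)=0$. So two calls to a \textbf{\#}Knapsack oracle compute \textbf{\#}SubsetSum, and it suffices to show that \textbf{\#}SubsetSum is \textbf{\#P}-hard.

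For that step I would give a parsimonious reduction from \textbf{\#}3SAT, which is \textbf{\#P}-complete by Valiant, using the textbook digit construction. Consider a formula with variables $y_1,\dots,y_N$ and clauses $C_1,\dots,C_M$. Write numbers in base $B=10$, with one digit position for each variable and one for each clause.
\begin{itemize}
\item For each variable $y_r$, create two numbers $t_r$ and $f_r$. Each has a $1$ in the digit of $y_r$. In addition, $t_r$ has a $1$ in the digit of every clause containing the literal $y_r$, and $f_r$ has a $1$ in the digit of every clause containing $\neg y_r$.
\item For each clause $C_s$, add two slack numbers with digit $1$ and digit $2$ respectively in position $C_s$, and zeros elsewhere.
\item Set the target $L$ to have digit $1$ in every variable position and digit $4$ in every clause position.
\end{itemize}
Each column sums to at most $3+1+2=6<10$, so no carries occur and the equation $\sum_{i\in S}a_i=L$ decomposes digit by digit. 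The variable digits force exactly one of $t_r,f_r$ into $S$, which encodes an assignment. The clause digit of $C_s$ then receives $c_s\in\{0,1,2,3\}$, the number of true literals of $C_s$, and the slacks must contribute exactly $4-c_s$.

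The step I expect to be the main obstacle is making the reduction exactly parsimonious. The slack values $\{1,2\}$ are chosen so that the deficit $4-c_s$ is realized in exactly one way:
\begin{itemize}
\item $c_s=1$: the deficit $3$ is realized only by $1+2$;
\item $c_s=2$: the deficit $2$ is realized only by $2$;
\item $c_s=3$: the deficit $1$ is realized only by $1$;
\item $c_s=0$: the deficit $4$ cannot be realized, so unsatisfying assignments contribute nothing.
\end{itemize}
Hence subset-sum solutions are in bijection with satisfying assignments. If a literal appears several times in one clause, I would first remove the duplicates, so that $c_s\le3$ still holds. All numbers have $O(N+M)$ digits, so the reduction runs in polynomial time. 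Combining this with the two-call reduction of the first paragraph gives that \textbf{\#}Knapsack is \textbf{\#P}-hard.
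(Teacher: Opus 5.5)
Your proof is correct. Note, though, that the paper does not prove this proposition at all: it imports it as a known fact by citing Valiant and Jerrum--Sinclair, so your proposal replaces a citation rather than varying an in-paper argument.

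Your route has three steps, and each checks out:
\begin{enumerate}
\item Membership in \#P follows from using the set $S$ as the witness.
\item The two-call subtractive reduction $K(L)-K(L-1)$ computes \textbf{\#}SubsetSum from \textbf{\#}Knapsack.
\item The textbook base-$10$ digit construction from \#3SAT, with slacks $1,2$ and clause target $4$, is parsimonious. Column sums are at most $6$, so there are no carries. The variable columns force a unique assignment. Subsets of the slacks $\{1,2\}$ realize each deficit in $\{1,2,3\}$ in exactly one way and the deficit $4$ not at all.
\end{enumerate}

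Citing keeps the paper short for a standard fact. Your derivation makes the paper self-contained, and it also makes explicit what the citation leaves implicit: the kind of reduction under which hardness holds.

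The non-parsimonious subtraction step is not just a convenience. It is unavoidable, because every \textbf{\#}Knapsack instance has the empty set as a solution. Hence no parsimonious reduction from \#3SAT could send unsatisfiable formulas to a \textbf{\#}Knapsack instance with zero solutions. Hardness therefore holds under polynomial-time Turing reductions.

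That is exactly the notion the paper needs. Its own reduction from \textbf{\#}Knapsack to the optimal-policy oracle is also a Turing reduction: it binary-searches over $w$ to recover $\Pr[a(S)\ge L+1]$, which is $2^{-n}$ times ($2^n$ minus the knapsack count).
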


\begin{theorem}
    Finding the optimal online algorithm for \OWDI with monotone concave DR-submodular functions is \sharpp, even for the special case of a single item.
\end{theorem}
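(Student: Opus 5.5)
The plan is to reduce from \textbf{\#}Knapsack and make the optimal online policy's \emph{first decision} encode the count $N \coloneqq |\{S\subseteq[n]: \sum_{i\in S} a_i\le L\}|$. If we can then query the optimal policy's first action on polynomially many instances, we can recover $N$. Only the decision is queried; the value of the policy is never used. Given $L,a_1,\dots,a_n$, we build a single-item instance with $n+2$ players, using a scaling constant $B$ with $\sum_i a_i < B$ (say $B = 2(\sum_i a_i + L+1)$) and a large constant $M>1$.

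\textbf{Construction.} Player $0$ arrives first and deterministically has $v_0(x)=c\cdot\min\{x,\delta\}$. The parameter $c$ is varied in the reduction, and $\delta \coloneqq 1/(2B)$. The supply is normalized so that after player $0$ takes $\delta$ or $0$, the remaining supply $r$ lies in $[(L+\tfrac12)/B,\,(L+1)/B)$. Concretely, we rescale the supply to $(L+1)/B$ and rescale all valuations accordingly. Player $i\in[n]$ has valuation $M\min\{x,a_i/B\}$ with probability $1/2$ and the zero function otherwise. A final player has $v(x)=x$ deterministically. All valuations are normalized, monotone, one-dimensional and concave, hence DR-submodular, and all have polynomial bit size.

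\textbf{Step 1: the optimal continuation is greedy.} Fix the remaining supply $r$ after player $0$. For the subgame I will show that an optimal policy gives each realized player $i$ the amount $\min\{a_i/B,\text{remaining}\}$ and gives the leftover to the last player. The argument is an exchange: every unit a middle player can still use is worth $M$, every unit the last player uses is worth $1<M$, and all middle players have the same per-unit value $M$. Withholding supply from a present middle player therefore never helps, and the order of service among them does not affect the value.

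This yields the continuation value
\[
h(r)=M\,\mathbb E[\min\{r,A/B\}]+\mathbb E[(r-A/B)_+],
\]
where $A=\sum_{i\in S}a_i$ for a uniformly random $S$. On the interval $r\in(L/B,(L+1)/B)$ the function $h$ is linear with slope
\[
h'(r)=M\Pr[A>L]+\Pr[A\le L]=M-(M-1)\,\frac{N}{2^n}.
\]
The slope is constant on this interval because the $a_i$ are integers.

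\textbf{Step 2: the first decision is a threshold test.} Player $0$ chooses some $x_0\in[0,\delta]$, and the remaining supply then stays inside that linear interval. The total expected welfare is therefore affine in $x_0$ with slope $c-h'$. Consequently the optimal policy takes $x_0=\delta$ if $c>M-(M-1)N/2^n$ and takes $x_0=0$ if $c<M-(M-1)N/2^n$.

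\textbf{Step 3: recover $N$.} We binary search over rational values of $c$ of the form $M-(M-1)(t+\tfrac12)/2^n$ for integers $t$. Each such $c$ avoids the tie case, and each query tells us whether $N>t$ or $N\le t$. After $n+1$ queries to the optimal-policy oracle we know $N$ exactly, and every instance in the search has polynomial size. This is a Turing reduction from \textbf{\#}Knapsack, so finding the optimal online policy is \sharpp, already for a single item.

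\textbf{Main obstacle.} The delicate step is Step 1. I must rigorously show that the optimal policy in the subgame is greedy, even though the policy may adapt to realizations and may ration a middle player to save supply for later ones. The intended argument uses that the per-unit value is identical ($M$) for all middle players and strictly smaller ($1$) for the last player. A pathwise exchange then shows that any non-greedy policy can be weakly improved, which determines $h$ exactly. I also need to check that the supply rescaling keeps every relevant $r$ strictly inside one linear piece of $h$.
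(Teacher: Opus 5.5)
Your proposal is correct and follows essentially the same route as the paper. Both reduce from \#Knapsack: a deterministic first player's per-unit value is compared with the marginal value of leftover supply, which encodes $\Pr[a(S)\le L]$ because the continuation is greedy, and binary search on that per-unit value recovers the count from the first decision alone. Capping player~0 at $\delta$ and adding a final unit-value player to make $h$ strictly increasing are harmless refinements of the paper's zero-versus-nonzero test; note that $r=(L+1)/B$ is attained when $x_0=0$, which is fine because $h$ is linear on the closed interval $[L/B,(L+1)/B]$.
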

\begin{proof}
    We will show that if given access to an optimal online policy for
    \OWDI  that runs in polynomial time, we can count the number of knapsack solutions in polynomial time. More formally, the access to the optimal policy is given by a polynomial time oracle such that given an \OWDI instance and a list of past realizations \(V_1, \dots, V_i\) decide how much to assign to player \(i\). In fact, for our instance it will be enough to make this decision for the first player and there is only one function in his distribution.

    Let \(L, a_1, \dots, a_n\) be the input for a \textbf{\#}Knapsack instance. 
    Set \(L' \coloneqq L + 1\).
    For a set \(I \subseteq [n]\), we define \(a(I) \coloneqq \sum_{i \in I} a_i\).
    Consider the \(\OWDI\) instance with \(n + 1\) players numbered from \(0\) to \(n\) and \(1\) item.
    Player \(0\) samples with probability \(1\) the function \(v_0(x) = wx\), where \(w \in \Reals_+\) will be set later.
    Player \(i \in [n]\) samples with probability \(1/2\) the zero function and otherwise samples  \( v_i(x) = L' \min\{a_i/L', x\}\).
    When  player  \(i \in [n]\) samples the nonzero function, we say player $i$ is \emph{active}.
    Note that all functions described are monotone, concave, and DR-submodular.

    The proof is based on the idea that computing the expected value of the optimal online policy for players $1$ to $n$ is \sharpp. Hence, if for any choice of parameter $w$ we can decide whether to allocate a non-zero fraction of the item to player $0$, then we can perform a binary search over $w$ to solve a \sharpp problem.

    We begin by calculating the expected value of an optimal policy that leaves $R\in [0,1]$ fraction of the item to players $1$ to $n$. 

    \begin{claim}
    \label{cla:optpol}
        Suppose that after the first player there is still \(R \in [0, 1]\) amount of the item. Then, the expected value obtained by the optimal online policy for players \(1\) to \(n\) is
        \[\Ebb[\min\{RL', a(S)\}],\]
        where \(S\) is a set drawn uniformly at random from \([n]\).
    \end{claim}
    \begin{proof}
        When a player is active, the maximum the objective value can increase if we assign a \(\rho\) fraction of the item to him is \(\rho L'\), i.e., all players increase the objective by the same ``rate''. Therefore, whenever a player is active we should assign the item to him until it reaches his cap at \(a_i/L'\)
        or the item is over, since no player in the future can contribute more.
        Thus, if the set of active players is \(S\), the objective value of the optimum policy is
        \[
            L' \min \{R, a(S)/L'\} = \min \{RL', a(S)\}
        \]
        where the first term achieves the minimum when we are able to assign the whole remaining of the item, and the second term achieves the minimum
        if the sum of the active players capacities is less than the remaining of the item.
        A player is active with probability \(1/2\) and independent of the others, and so the set of active players is uniformly distributed among the subsets of \([n]\), and this concludes the proof of the claim. 
    \end{proof}

    As a consequence of \Cref{cla:optpol}, the expected value of the
    optimal online policy is
    \[\max_{R \in [0,1]}\{w(1 - R) + \Ebb[\min\{RL', a(S)\}]\}.\]
    For each \(w \in \Reals_+\), let \(R(w)\) be the value of \(R\)
    that maximizes the expected value of the optimal online policy.
    The following claim characterizes when the optimal online policy allocates a non-zero fraction of the item to player $0$.

    \begin{claim}
    \label{cla:iffRw}
    If \(w < L' \Pr[a(S) \geq L']\), then any optimal online policy allocates zero amount of the item to player \(0\) (i.e., $R(w) = 1$).
    If \(w > L' \Pr[a(S) \geq L']\), then any optimal online policy allocates some amount of the item to player \(0\) (i.e., $R(w) < 1$).
    \end{claim}
    \begin{proof}
        Suppose $w < L' \Pr[a(S) \geq L']$. We will prove that \(R(w) = 1\)  by showing that for any \(R < 1\), the expected value of any online policy that allocates $1 - R$ to player $0$ satisfies
         \[w(1 - R) + \Ebb[\min\{RL', a(S)\}] < \Ebb[\min\{L', a(S)\}].\]
        Expanding the expectations on both sides and using the bound on \(w\), it suffices to show
        \begin{align*}
            L' \Pr[a(S) \geq L'](1 - R) + RL' \Pr[a(S) \geq RL'] ~+~ &   \frac{1}{2^n}\sum_{S: a(S) < RL'} a(S) \\\leq \;
            &L' \Pr[a(S) \geq L']
             + \frac{1}{2^n}\sum_{S: a(S) < L'} a(S)    \enspace .
        \end{align*}
        Grouping similar terms, we obtain it is equivalent to showing
        \[
        RL' \Pr[L' > a(S) \geq RL']~ \leq~  \frac1{2^n} \sum_{S: RL' \leq a(S) < L'} a(S) \enspace.
        \]
        This inequality is true since each set \(S\) satisfying
        \(RL' \leq a(S) < L'\) contributes  \(RL'/2^n\) on the LHS and contributes \(a(S)/2^n\) on the RHS.

        Now suppose that \(w >  L' \Pr[a(S) \geq L']\).
        Choose \(\varepsilon > 0\) sufficiently small\footnote{Note that \(\varepsilon = 1/(2L')\) already suffices since there is no integer \(k\) in \([L' - 1/2, L')\) for integer \(L'\).}
        so that
        there is no set \(S\) satisfying \(L' > a(S) \geq (1 - \varepsilon)L'\). 
        We claim that setting \(R = 1-\varepsilon\) gives a bigger value than setting \(R = 1\), hence \(R(w) \neq 1\).
        It is sufficient to prove that
        \begin{align*}
            \varepsilon w + (1 - \varepsilon)L' \Pr[a(S) \geq (1 - \varepsilon)L'] ~+~ &\frac{1}{2^n}\sum_{S: a(S) < (1 - \varepsilon)L'} a(S) \\> \;
            &L' \Pr[a(S) \geq L']
            + \frac{1}{2^n}\sum_{S: a(S) < L'} a(S).
        \end{align*}
        By our condition on \(\varepsilon\), if \(a(S) < L'\) then
        \(a(S) < (1 - \varepsilon)L'\), and similarly, if \(a(S) \geq (1 - \varepsilon)L'\) then \(a(S) \geq L'\).
        By canceling similar terms, the inequality simplifies to
        \[\varepsilon w > \varepsilon L' \Pr[a(S) \geq L'] \enspace, \]
        which is precisely our hypothesis.
    \end{proof}

    If we can obtain the optimal online policy for the described instance of \OWDI, then with \Cref{cla:iffRw} we can use binary search on \(w\) to estimate \(L' \cdot \Pr[a(S) \geq L']\). Since
    \(\Pr[a(S) \geq L']\) is a multiple of \(1/2^n\), with polynomial in \(n\) and \(\log L'\) iterations of binary search we can determine exactly the number of sets \(S\) such that \(a(S) \geq L'\).
    Finally, observe that the number of sets with \(a(S) \geq L'\) is the complement of the sets with \(a(S) \leq L\), which corresponds to the number of Knapsack solutions.
\end{proof}

\subsection{Prophet Hardness}

\begin{theorem}
    No online algorithm can achieve a competitive ratio strictly better than \(1/2\) for \OWDI even for the special case of $m=1$ item and $n=2$ players with linear valuations.
\end{theorem}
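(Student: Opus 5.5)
We use the classical two-point instance from the single-item prophet inequality, adapted to divisible items with linear valuations. Take one item with unit supply and two players. Player $1$ deterministically has $v_1(x)=x$. Player $2$ has valuation $v_2(x)=\frac{1}{\varepsilon}x$ with probability $\varepsilon$, and the zero function otherwise. Both valuations are linear, so they are trivially monotone, concave, and DR-submodular, and they vanish at $0$. The plan has three steps: compute the prophet benchmark, bound every online algorithm, and take $\varepsilon\to0$.

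\textbf{Prophet benchmark.} The hindsight optimum gives the whole item to player $2$ when he is active, and to player $1$ otherwise. Its expected value is
\[
\varepsilon\cdot\tfrac1\varepsilon+(1-\varepsilon)\cdot 1 = 2-\varepsilon.
\]

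\textbf{Online algorithms.} Player $1$ arrives first and his valuation is deterministic, so any online algorithm, even a randomized one, is described by the distribution of the amount $x\in[0,1]$ it assigns to player $1$. After that, the best continuation is to give the remaining $1-x$ to player $2$. By linearity, the expected welfare is
\[
\Ebb\Big[x+\varepsilon\cdot\tfrac1\varepsilon(1-x)\Big]=1,
\]
independently of $x$. Welfare from player $2$ is at most $\frac{1}{\varepsilon}\cdot(\text{amount allocated})$ on the active event and $0$ otherwise, and the total allocated is at most $1-x$. Hence no online policy exceeds $1$. This bound also covers the philosopher benchmark, which again shows the gap is informational.

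\textbf{Conclusion.} The competitive ratio of any online algorithm is at most $1/(2-\varepsilon)$. Letting $\varepsilon\to0$ rules out every ratio strictly greater than $1/2$. The only point needing care is to justify that randomized algorithms and divisibility give no extra power. This follows because the objective is linear in the allocation, so averaging over the algorithm's randomness reduces to the deterministic computation above.
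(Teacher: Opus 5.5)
Your proposal is correct and follows the paper's proof essentially line for line: the same two-player instance with $v_1(x)=x$ deterministic and $v_2(x)=x/\varepsilon$ with probability $\varepsilon$ (zero otherwise), the same prophet value $2-\varepsilon$, and the same bound $x+\varepsilon\cdot\frac{1-x}{\varepsilon}=1$ on any online policy. Your averaging argument for randomized algorithms just spells out the step the paper covers by assuming the online algorithm is deterministic.
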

\begin{proof}
    The proof is similar to the standard factor $1/2$ hardness for prophet inequality with a single indivisible item. 
    We will prove that for every \(\varepsilon > 0\) achieving \(\tfrac{1}{2} + \varepsilon\) is impossible.
    Consider an instance with two players and one item.
    For the first player, the only function in the support of \(\Dcal_1\) is \(v_1(x) = x\).
    For the second player, they sample the function \(v_2(x) = x/\varepsilon\) with probability \(\varepsilon\) and the function
    identically equals \(0\) with probability \(1 - \varepsilon\).

    For the offline optimum, we get \(\varepsilon^{-1}\) if the second player samples \(v_2\), or else we give the whole item to the first player if the second player gets the \(0\) function. Hence,
    \[\Ebb[\OPT] = \varepsilon \varepsilon^{-1} + (1 - \varepsilon) = 2-\varepsilon.\]

    Consider now an online algorithm \(\ALG\).
    Since the input is coming from a fixed distribution, we may assume that the optimal online algorithm is deterministic.
    Denote by \(\alpha\) how much \(\ALG\) assigns to player \(1\).
    Hence,
    \[\Ebb[\ALG] \leq \alpha + \varepsilon \frac{(1 - \alpha)}{\varepsilon} = 1.\]
    Thus, 
        $\frac{\Ebb[\ALG]}{\Ebb[\OPT]} \leq \frac{1}{2 - \varepsilon}$.
    \end{proof}

\paragraph{Use of AI.}
The authors used ChatGPT 5.4 and 5.5 for assistance with writing, developing parts of the proof of \Cref{lema:decphi}, and in search of scaling functions as presented in \Cref{sec:appendixgoodf}. The authors have checked the arguments and take full responsibility for the contents of the paper.

\clearpage
\appendix

\section{DR-submodularity}
\label{sec:appendixDR}
\DRbasic*
\begin{proof}
    We shall prove that definitions \((1)\) and \((3)\) are equivalent to \((2)\).

    We start with \((1) \Leftrightarrow (2)\). Let \(\bx, \by \in \Reals_+^m\) such that \(\by \leq \bx\) and let \(i \in [m]\).
    To prove the forward direction, note that
    \[
        \nabla f(\by)_i = \lim_{h \to 0} \frac{f(\by + he_i) - f(\by)}{h}
        \geq
        \lim_{h \to 0} \frac{f(\bx + he_i) - f(\bx)}{h} = \nabla f(\bx)_i,
    \]
    where the inequality comes from definition \((1)\).
    To prove the reverse direction, note that it is sufficient to prove for vectors \(\bz = \gamma \vc{e_i}\) with \(\gamma \in \Reals_+\) and then apply the result inductively.
    It holds that
    \[
    f(\by + \gamma \vc{e_i}) - f(\by) = \int_0^{\gamma} \nabla f(\by + t\vc{e_i})^\top \vc{e_i} \; dt
    \geq \int_0^{\gamma} \nabla f(\bx + t\vc{e_i})^\top \vc{e_i} \; dt = f(\bx + \gamma \vc{e_i}) - f(\bx).
    \]

    We now prove \((2) \Leftrightarrow (3)\).
    Let \(\bx \in \Reals^m_+\) and let \(i, j \in [m]\).
    For the forward direction, it holds that
    \[
    \frac{\partial^2 f(\bx)}{\partial x_i \partial x_j} =
    \frac{\partial}{\partial x_i} (\nabla f(\bx))_j
    =
    \lim_{h \to 0_+} \frac{(\nabla f(\bx + h \vc{e_i}))_j - (\nabla f(\bx))_j}{h} \leq 0.
    \]
    The last inequality comes from the fact that \(\bx \leq \bx + h\vc{e_i}\) implies \(\nabla f(\bx + h\vc{e_i}) \leq \nabla f(\bx)\).
    For the reverse direction, let \(\gamma \in \Reals_+\) and note that it is sufficient to prove that \(\nabla f(\bx)_j \geq \nabla f(\bx + \gamma \vc{e_i})_j\).
    Define \(g(t) = \nabla f(\bx + t\vc{e_i})_j\). Then,
    \[
    \nabla f(\bx + \gamma \vc{e_i})_j - \nabla f(\bx)_j = g(\gamma) - g(0)
    = \int_0^\gamma \nabla^2f(\bx + u\vc{e_i})_{ij} \; du \leq 0.
    \]
    Rearranging the terms give us the desired inequality.
\end{proof}

\section{Benchmark Relaxations}
\label{sec:appendixLP}

For convenience of the reader, we repeat the configuration LP from \Cref{sec:prelim}.

\configrelax*


\begin{lemma} \label{lem:configLPrelax}
    The \ref{eq:confLP} is a relaxation of the online benchmark.
\end{lemma}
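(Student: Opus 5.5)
The plan is to take an arbitrary online policy $\pi$ (possibly randomized), build from it a feasible solution $y$ of the \ref{eq:confLP} whose objective equals the expected welfare of $\pi$, and then apply this to the optimal online policy. Since the benchmark is the supremum over online policies, this shows the LP value is at least the online benchmark. If the optimum is only approached, we apply the construction to a sequence of near-optimal policies.

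\textbf{Construction.} Let $\bx_i^\pi$ denote the random bundle that $\pi$ gives player $i$. Define $y_{i,k,\bz} \coloneqq \Pr[V_i = v_{i,k},\ \bx^\pi_i = \bz]$. Formally, $y_{i,k,\cdot}$ is the sub-probability measure $B \mapsto \Pr[V_i=v_{i,k}, \bx^\pi_i\in B]$ on $[0,1]^m$, and sums over $\bz$ are read as integrals against it. With this definition:
\begin{itemize}
\item The objective is $\sum_{i,k} \Ebb[v_{i,k}(\bx^\pi_i)\1\{V_i=v_{i,k}\}] = \Ebb[\sum_i V_i(\bx_i^\pi)]$.
\item The first constraint holds because $y_{i,k,\cdot}$ has total mass $\Pr[V_i = v_{i,k}] = p_{i,k}$.
\item Nonnegativity is immediate.
\end{itemize}

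\textbf{Online feasibility.} This is the only real step. For fixed $i,k,j$, the left-hand side is $\sum_{\bz} z_j y_{i,k,\bz} = \Ebb[x^\pi_{i,j}\1\{V_i=v_{i,k}\}]$, and the sum subtracted on the right equals $\Ebb[\sum_{\ell<i} x^\pi_{\ell,j}]$. Every sample path satisfies the supply constraint, so
\[
x^\pi_{i,j} \le 1-\sum_{\ell<i} x^\pi_{\ell,j} \eqqcolon R_{i,j}.
\]
Multiplying by the indicator and taking expectations gives $\Ebb[x^\pi_{i,j}\1\{V_i=v_{i,k}\}] \le \Ebb[R_{i,j}\1\{V_i=v_{i,k}\}]$.

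The main obstacle is the independence step. In an online policy, $R_{i,j}$ is a function of $V_1,\dots,V_{i-1}$ and of the policy's internal randomness, which we may take independent of the future valuations. Since $V_i$ is drawn independently of $V_1,\dots,V_{i-1}$, the variable $R_{i,j}$ is independent of the event $\{V_i=v_{i,k}\}$. Hence $\Ebb[R_{i,j}\1\{V_i=v_{i,k}\}] = p_{i,k}\,\Ebb[R_{i,j}] = p_{i,k}\bigl(1-\sum_{\ell<i}\sum_r\sum_{\bz} z_j y_{\ell,r,\bz}\bigr)$, where the last equality uses $\Ebb[x^\pi_{\ell,j}] = \sum_r \sum_{\bz} z_j y_{\ell,r,\bz}$. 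This is exactly the second constraint.

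Two points need care when writing this up. First, we must state precisely that online means the decision for player $i$ is measurable with respect to $(V_1,\dots,V_i)$ and independent randomness. Second, the measure-theoretic reading of the sums over the continuum of bundles has to be made explicit. With these, the LP value is at least $\Ebb[\sum_i V_i(\bx^\pi_i)]$ for every online policy $\pi$, which proves the lemma.
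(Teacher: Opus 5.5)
Your proposal is correct and follows the paper's proof essentially step for step. Both define $y_{i,k,\mathbf{z}}$ as the joint probability that player $i$ has type $v_{i,k}$ and receives bundle $\mathbf{z}$. Both multiply the pathwise supply constraint by the type indicator and use the independence of $V_i$ from earlier allocations to obtain the online constraint. Your treatment of near-optimal policies and your measure-theoretic reading of the sums only make explicit what the paper takes for granted (that an optimal policy exists and that sums over bundles are meaningful).
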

\begin{proof}
    Let \(\OPT\) be an optimal online policy.
    For every \(i \in [n]\) and \(\bz \in [0,1]^m\), let \(Y_{i, \bz}\) be the event that player \(i\)
    receives bundle \(\bz\) from \(\OPT\).
    We claim that
    \[
    y_{i, k, \bz} \coloneqq \Pr[Y_{i, \bz} \cap V_i = v_{i, k}] \qquad \forall i \in [n], k \in [d_i], \bz \in [0,1]^m
    \]
    is feasible and has objective value equal to
    \(\Ebb[\OPT]\).

    Let \(i \in [n]\) and \(k \in [d_i]\).
    It holds that,
    \[\textstyle
    \sum_{\bz \in [0,1]^m} \Pr[Y_{i, \bz} \cap V_i = v_{i, k}] ~=~ \Pr[V_i = v_{i, k}] ~=~ p_{i,k},
    \]
    hence the first constraint is satisfied.
    For the online constraint, let \(j \in [m]\) and
    note that
    for any realization of the valuations (and possibly of any randomness of \(\OPT\)) it must hold that
    \[\textstyle
    \sum_{\bz \in [0,1]^m} z_j \1_{Y_{i, \bz}}
    ~\leq~ 
    1 - \sum_{\ell < i} \sum_{\bz \in [0,1]^m} z_j \1_{Y_{\ell, \bz}}
    ~=~
    1 - \sum_{\ell < i} \sum_{r = 1}^{d_\ell} \sum_{\bz \in [0,1]^m} z_j \1_{Y_{\ell, \bz}}\1_{V_\ell = v_{\ell,r}},
    \]
    where in the inequality we used that each item has a unit supply and in the equality we used that only one function is sampled by player \(\ell\).
    Consequently, it holds that
    \[\textstyle
    \sum_{\bz \in [0,1]^m} z_j \1_{Y_{i, \bz}} \1_{V_i = v_{i, k}}
    \leq
    \1_{V_i = v_{i, k}}
    \Big(
    1 - \sum_{\ell < i} \sum_{r = 1}^{d_\ell} \sum_{\bz \in [0,1]^m} z_j \1_{Y_{\ell, \bz}}\1_{V_\ell = v_{\ell,r}}
    \Big).
    \]
    By taking expectation on both sides and using that \(Y_{\ell, \bz}\) and \(\ones_{V_\ell = v_{\ell, r}}\) are independent of \(\ones_{V_i = v_{i, k}}\), we obtain
    \[\textstyle
    \sum_{\bz \in [0,1]^m} z_j y_{i, k, \bz}
    \leq
    p_{i,k}
    \Big(
    1 - \sum_{\ell < i} \sum_{r = 1}^{d_\ell} \sum_{\bz \in [0,1]^m} z_j y_{\ell, r, \bz}
    \Big).
    \]
    Thus, the \(y\) constructed is feasible.

    Finally, we have that for any realization
    \[\textstyle
    \OPT = \sum_{i = 1}^n \sum_{k = 1}^{d_i} \sum_{\bz \in [0,1]^m} \ones_{Y_{i, \bz}} \ones_{V_i = v_{i, k}} v_{i, k}(\bz),
    \]
    and by taking expectation on both sides, it holds that
    \[\textstyle
    \Ebb[\OPT] = \sum_{i = 1}^n \sum_{k = 1}^{d_i} \sum_{\bz \in [0,1]^m} y_{i, k, \bz} v_{i, k}(\bz),
    \]
    which is the objective value of \(y\).
\end{proof}

\offlinerelax*
\begin{proof}
It is enough to show that for each feasible solution of a program one can construct a feasible solution to the other one with better objective value.

Let \(y_{i, k, \bz}\) be a feasible solution for \eqref{eq:confLP}. Set
\[ \textstyle
\hat x_{i, k, j} \coloneqq \sum_{\bz \in [0,1]^m} z_j y_{i, k, \bz}
\qquad \forall i \in [n], k \in [d_i], j \in [m].
\]
We first verify that the constructed \(\hat x\) is feasible.
For every \(i \in [n]\), \(k \in [d_i]\), and \(j \in [m]\), 
\[\textstyle
\hat x_{i, k, j} = \sum_{\bz \in [0,1]^m} z_j y_{i, k, \bz}
\leq
p_{i, k} \left(1 - \sum_{\ell < i} \sum_{r = 1}^{d_\ell} \sum_{\bz \in [0,1]^m} z_j y_{\ell, r, \bz}\right)
=
p_{i, k} \left(1 - \sum_{\ell < i} \sum_{r = 1}^{d_\ell} x_{\ell, r, j}\right) .
\]
Regarding the objective value, for every \(i \in [n]\) and \(k \in [d_i]\), we have
\begin{align*}\textstyle
    \sum_{\bz \in [0,1]^m } y_{i, k, \bz} v_{i, k}(\bz)
    &= \textstyle
    p_{i, k}
    \sum_{\bz \in [0,1]^m } \frac{y_{i, k, \bz}}{p_{i, k}} v_{i, k}(\bz)\\
    &\textstyle \leq
    p_{i, k}
    v_{i, k}
    \Big( \sum_{\bz \in [0,1]^m} \frac{y_{i, k, \bz}}{p_{i, k}} \bz
    \Big)&&\text{by Jensen's inequality}\\
    & \textstyle =
    p_{i,k} v_{i, k} (\vc{\hat x_{i, k}}/p_{i, k}).
\end{align*}
For the Jensen's inequality step, we can assume that \(\sum_{\bz}y_{i,k,\bz} = p_{i,k}\), by increasing the value assigned to empty bundle \(\bz = \vc{0}\).

For the other direction, let \(x_{i, k, j}\) be a feasible solution for \eqref{eq:onLPm}.
Set
\[
\hat y_{i, k, \bz } \coloneqq p_{i,k} \ones[\bz = \vc{x_{i,k}}/p_{i,k}]
\qquad \forall i \in [n], k \in [d_i], \bz \in [0,1]^m.
\]
Note that \(\vc{x_{i,k}}/p_{i, k} \in [0,1]^m\).
The first constraint holds trivially, since there is only one vector \(\bz\) in the support of \(\vc{\hat y_{i, k}}\).
For the online constraint,
\[\textstyle
\sum_{\bz \in [0,1]^m} z_j \hat y_{i, k, \bz} =
\frac{x_{i, k, j}}{p_{i, k}} p_{i, k} = x_{i, k, j},
\]
hence, it is also satisfied.
Finally, it follows from the definition that
\[\textstyle
\sum_{\bz \in [0,1]^m} \hat y_{i, k, \bz} v_{i, k}(\bz)
=
p_{i, k} v_{i, k}(\vc{x_{i, k}}/p_{i,k})
\]
for every \(i \in [n]\) and \(k \in [d_i]\).
\end{proof}

\section{\texorpdfstring{\good Functions}{}}
\label{sec:appendixgoodf}

\subsection{\texorpdfstring{$\f(z) = 1 - \tfrac{1}{z+1}$}{z/(z+1)}}

Observe that \(\f(1) = \nicefrac{1}{2}\), we now prove it is a scaling function.

It is easy to see that \(\f(z) \in \Reals_+\) for every \(z \in \Reals_+\).
The function \(\f\) is concave since \(\f''(z) = \frac{-2}{(z+1)^3} < 0\).
Furthermore, \(\f(z)\) is clearly less than \(1\) and
\(\f(z)\) can be written as \(\tfrac{z}{z + 1} \leq z\).

It is left to prove the inequality
\[
(1 - \mu) \left(1 - \frac{1}{\frac{z-\f(z)}{\mu} + 1}\right) + \mu\left(1 - \frac{1}{\frac{z}{\mu} + 1}\right)
\geq
1 - \frac1{z+1},
\]
for every \(z \in \Reals_+\) and \(\mu \in (0,1]\).
This is equivalent to
\[
(1 - \mu) \left(1 - \frac{\mu}{{z-\f(z)} + \mu}\right) + \mu\left(1 - \frac{\mu}{z + \mu}\right)
\geq
1 - \frac1{z+1},
\]
which can be simplified to
\[
\frac{(1-\mu)(z-\f(z))}{z - \f(z) + \mu}
+
\frac{ z\mu }{z + \mu}
\geq
\frac{z}{z+1}.
\]

Because \(\f(z) \geq 0\), it is sufficient to prove
\[
\frac{(1-\mu)(z-\f(z))}{z + \mu}
+
\frac{ z\mu }{z + \mu}
\geq
\frac{z}{z+1}.
\]
Multiplying both sides by \((z +\mu)(z+1)\), we get
\[
(1 - \mu)\left(z - 1 + \frac{1}{z+1}\right)(z+1)
+
z\mu(z+1) \geq z(z + \mu).
\]
We claim that the last inequality holds with equality.
Staring from the LHS, we have
\begin{align*}
(1 - \mu)\left(z - 1 + \frac{1}{z+1}\right)(z+1)
+
z\mu(z+1)
&=
(1 - \mu) ((z-1)(z+1) + 1) + z^2\mu + z\mu\\
&=
(1 - \mu) z^2 + z^2\mu +z\mu\\
&=
z^2 + z\mu
=z(z + \mu).
\end{align*}

\subsection{\texorpdfstring{$\f(z) = 1-e^{-z}$}{1-e-z}}

Observe that \(\f(1) = 1 - \nicefrac{1}{e}\), we now prove it is a scaling function.

The function only maps to nonnegative values since \(e^{-z} \leq 1\) for every \(z \in \Reals_+\).
The function is concave since \(\f''(z)= - e^{-z} \leq 0\).
Moreover, it clearly is less than or equal to \(1\), and it is less than \(z\) by using the standard inequality \(1 + z \leq e^{z}\) for every \(z \in \Reals\).

We now focus on proving the second constraint.
     Let \(z \in \Reals_+\) and \(\mu \in (0,1]\).
     We wish to prove that
     \[
     (1 - \mu)
     \left(1 - \exp\left(\frac{1 - e^{-z} - z}{\mu}\right)\right) + \mu(1 - e^{-z/\mu})
     \geq
     1-e^{-z},
     \]
     which can be simplified to
     \[
     (1 - \mu)
     \exp\left(\frac{1 - e^{-z} - z}{\mu}\right) + \mu e^{-z/\mu}
     \leq
     e^{-z}.
     \]
     Multiplying both sides by \(e^{z/\mu}\), we get
     \[
     (1 - \mu)
     \exp\left(\frac{1 - e^{-z}}{\mu}\right) + \mu 
     \leq
     \exp\left(z\left(\frac1{\mu} - 1\right)\right).
     \]
     At \(z = 0\), both sides equal \(1\).
     We shall prove that the RHS has a bigger derivative with respect to \(z\) for all \(z \geq 0\).
     Taking the derivative on both sides, our target inequality becomes
     \[
     \left(\frac1{\mu} - 1\right)
     \exp\left(\frac{1 - e^{-z}}{\mu} - z\right)
     \leq
     \left(\frac1{\mu} - 1\right)\exp\left(z\left(\frac1{\mu} - 1\right)\right),
     \]
     which is true as long as
     \[
     \frac{1 - e^{-z}}{\mu} - z
     \leq
     \frac{z}{\mu} - z,
     \]
     which holds since \(1 - e^{-z} \leq z\).

\subsection{\texorpdfstring{$\f(z) = 1 - \tfrac1{1+z+z^2/2 + z^3/2}$}{1-1/(1+z+z2/2 + z3/2)}}

Observe that \(\f(1) = 1 - \tfrac{1}{3} = \tfrac{2}{3}\). Our goal now is to prove that \(\f\) is a \good function.

Define \(p(z) = 1 + z + \tfrac{z^2}{2} + \tfrac{z^3}{2}\) and \(\f(z) = 1 - \tfrac1{p(z)}\).
Since \(p(z) \geq 1\) for all \(z \in \Reals_+\),
we have \(\f(z) \in \Reals_+\) for all \(z \in \Reals_+\).
The first step is to prove it is concave.
We have that
\[
\f''(z) = \frac{p''(z)p(z) - 2(p'(z))^2}{p(z)^3}.
\]
Since \(p(z)^3 \geq 0\), to show that \(\f\) is concave, it suffices to prove that
\[
2(p'(z))^2 \geq p''(z)p(z) \qquad \forall z \in \Reals_+.
\]
This is equivalent to
\[
2(1 + z + \tfrac{3}{2}z^2)^2 \geq (1 + 3z)(1 + z + \tfrac{z^2}{2} + \tfrac{z^3}{2}),
\]
which simplifies to checking 
\[
6z^4 + 8z^3 + 9z^2 + 2 \geq 0
\]
that is clearly true for $z\geq 0$. 

We now prove that \(\f(z) \leq \min\{z, 1\}\) for all \(z \in \Reals_+\), it is clearly less than \(1\) since \(p(z) > 0\).
It is less than \(z\) if and only if
\[
p(z) - 1 \leq z p(z)
\]
which is equivalent to
\[
z + \frac{z^2}{2} + \frac{z^3}{2} \leq z + z^2 + \frac{z^3}{2} + \frac{z^4}{2}.
\]
The last inequality holds since after canceling terms we are left with \(z^2 + z^4 \geq 0\).

The last step is to show that for every \(z \in \Reals_+\) and \(\mu \in (0,1]\), it satisfies
\[
(1 - \mu) \f \left(\frac{z - \f(z)}{\mu}\right)
+
\mu \f\left(\frac{z}{\mu}\right) \geq \f(z).
\]
By expanding \(\f\) and canceling terms, we obtain it is sufficient to prove
\[
\frac{1-\mu}{p\left(\frac{z-\f(z)}{\mu}\right)}
+
\frac{\mu}{p \left(\frac{z}{\mu}\right)}
\leq
\frac1{p(z)}.
\]
Define \(q(z) \coloneqq \tfrac12 z^2 (z^2 + 1)\)
and note that
\[
z - \f(z)
=
z - 1 + \frac{1}{1 + z + z^2/2 + z^3/2}
=
\frac{
(z-1)(1 + z + z^2/2 + z^3/2) + 1
}{1 + z + z^2/2 + z^3/2}
=
\frac{q(z)}{p(z)}.
\]
Define \(\alpha \coloneqq 1/\mu\), it is sufficient to prove
\begin{equation}
\label{eq:C1}
F_1(z, \alpha) \coloneqq 
\frac{\alpha}{p(z)}-
\frac{1}{p \left(\alpha z\right)}
-
\frac{\alpha - 1}{p\left(\frac{\alpha q(z)}{p(z)}\right)}
\geq 
0
\qquad \forall z \geq 0 \text{ and }\forall \alpha \geq 1.
\end{equation}

Define \(r(z) \coloneqq p^3(z) + \alpha q(z)p^2(z) + \frac{1}{2} \alpha^2 q^2(z) p(z) + \frac12 \alpha^3 q(z)^3 \) and note that
\[
r(z) = p^3(z) \left(1 + \alpha \frac{q(z)}{p(z)} +
\frac{\alpha^2 q^2(z)}{2 p(z)^2} + \frac{\alpha^3 q^3(z)}{2 p^3(z)}\right)
=
p^3(z)p\left(\frac{\alpha q(z)}{p(z)}\right).
\]
Hence,
\[
F_1(z, \alpha)
=
\frac{\alpha}{p(z)}-
\frac{1}{p \left(\alpha z\right)}
-
\frac{(\alpha - 1)p^3(z)}{r(z)}
=
\frac{
\alpha p(\alpha z) r(z)
- p(z) r(z)
- (\alpha - 1)p^4(z)p(\alpha z)
}{p(z)p(\alpha z) r(z)}.
\]
Since \(p(z) \geq 1\) and \(r(z) \geq 1\) for all \(z \in \Reals_+\) we have that the denominator is always positive, thus it suffices to analyze the numerator.
We start by simplifying the expression:
\begin{align*}
\alpha p(\alpha z) - p(z)
&=
(\alpha - 1) + (\alpha^2 - 1)z
+ (\alpha^3 - 1) \frac{z^2}{2}
+ (\alpha^4 - 1) \frac{z^3}{2}\\
&=
(\alpha - 1)
\left(1 + (\alpha + 1)z
+ (\alpha^2 + \alpha + 1) \frac{z^2}{2}
+ (\alpha^3 + \alpha^2 + \alpha + 1) \frac{z^3}{2}\right).
\end{align*}
Thus, by factoring out the term \(\alpha - 1\), which is always non-negative since \(\alpha \geq 1\), we obtain that it is sufficient to prove
\[
F_2(z, \alpha) \coloneqq
r(z)\left(1 + (\alpha + 1)z
+ (\alpha^2 + \alpha + 1) \frac{z^2}{2}
+ (\alpha^3 + \alpha^2 + \alpha + 1) \frac{z^3}{2}\right)
-p^4(z) p(\alpha z) \geq 0.
\]
It holds that
\[
\begin{aligned}
F_2(z, \alpha)
&= \frac{z^3 \alpha^2}{32}
\Big(
z^{12}\alpha^{4}
+ z^{12}\alpha^{3}
+ z^{12}\alpha^{2}
+ 2z^{11}\alpha^{3}
+ 3z^{10}\alpha^{4}
+ 2z^{11}\alpha^{2} \\
&+ 4z^{10}\alpha^{3}
- 2z^{11}\alpha
+ 9z^{10}\alpha^{2}
+ 7z^{9}\alpha^{3}
+ 3z^{8}\alpha^{4}
- 5z^{10}\alpha \\
&+ 12z^{9}\alpha^{2}
+ 7z^{8}\alpha^{3}
- 16z^{9}\alpha
+ 29z^{8}\alpha^{2}
+ 8z^{7}\alpha^{3}
+ z^{6}\alpha^{4} \\
&+ z^{9}
- 32z^{8}\alpha
+ 32z^{7}\alpha^{2}
+ 6z^{6}\alpha^{3}
+ 4z^{8}
- 52z^{7}\alpha \\
&+ 43z^{6}\alpha^{2}
+ 3z^{5}\alpha^{3}
+ 10z^{7}
- 75z^{6}\alpha
+ 40z^{5}\alpha^{2}
+ 2z^{4}\alpha^{3} \\
&+ 24z^{6}
- 82z^{5}\alpha
+ 30z^{4}\alpha^{2}
+ 37z^{5}
- 76z^{4}\alpha
+ 18z^{3}\alpha^{2} \\
&+ 52z^{4}
- 56z^{3}\alpha
+ 8z^{2}\alpha^{2}
+ 60z^{3}
- 28z^{2}\alpha
+ 48z^{2} \\
&- 8z\alpha
+ 36z
+ 16
\Big).
\end{aligned}
\]
See this \href{https://sagecell.sagemath.org/?z=eJxlkE1OAzEMhfeRcger3SQuMGLSFWrZAHs6BxgUqSkaqc14fpAgp8dOQylikcT6np-flSU8h0MXA1B__Ir9qfNHGLv4rlVzt0k3_hG28HqRGlbMbme10mr53zlpRdx_DytIctq6qvPrqlqrgaVftBY0MqLWMfA4ILUi-dyCMLRyUyauEKfVXAKMXN5iutRnL5OScU3PU0TLq8j2T_2JPuYA4ZPGME1dH7WSmue7Gs2MI9zycmskk7Yek7WVYTtKhkzgXh_3Yf9WTI2RIkvEPzibxUvpuIp4WNgf-Y_ffgOam2ud&lang=sage&interacts=eJyLjgUAARUAuQ==}{link} for a computer verification of the last equality.
Since \(z, \alpha \geq 0\) it is enough to consider the expression inside the parenthesis,
we call it \(F_3(z,\alpha).\)

Define
\[
\begin{aligned}
F_4(z,y)
\coloneqq{}&
z^4\!\left(3y^4+8y^3+67y^2+7y\right)
+z^5\!\left(19y^3+56y^2+33y\right) \\
&+z^6\!\left(3y^4+7y^3+68y^2+36y\right)
+z^7\!\left(16y^3+33y^2+59y\right) \\
&+z^8\!\left(y^4+2y^3+39y^2+29y\right)
+z^9\!\left(5y^3+8y^2+37y+4\right) \\
&+z^{10}\!\left(10y^2+8y+11\right)
+z^{11}(9y+2)
+3z^{12}.
\end{aligned}
\]
and
\[
\begin{aligned}
F_5(z,y)
\coloneqq{}&
8(y^2-y+2)
+2z\!\left(y^3+9y^2-6y+14\right) \\
&+z^2\!\left(y^4+3y^3+36y^2-20y+28\right)
+z^3\!\left(10y^3+49y^2-10y+22\right) \\
&+z^4\!\left(11z^4-2z^3-z^2-2z+8\right)
+F_4(z,y).
\end{aligned}
\]

We have that
\[
F_3(z, \alpha) = F_5(z, (\alpha - 1)z).
\]
Again, see this \href{https://sagecell.sagemath.org/?z=eJxtksFu2zAMhu8B8g5cdpHduLAlW7GDdbsMOa5rDjsVAbzVKQRkTue02Oxh7z5SpBKnqS8SyI-_ftJcX38Y5tDPof4IN_B1v-vb_U9X79aufVR3d3Mw0XQynaxyzKphk8fKxP0mhyso8TR42gVeNF7wRBbCdwXDpohVVglXWOGMuQTtSXcRdMvA20t-gcJWQNLzYFFdgmWsWFYHuhJav0FXsSqEO76OfvDIX6NZihZSgUrPZNkFlMWq8jl9njIxJjWPtqDRep8aEngFk_GBcuSJrSdg-bl8jA0bHVo1oVUrvE7ZQ3leYKQDYvOgnQmrz1n88dgMniSHp0nwwQRv1P0YXeW-q_dw27lH19Y7aP48dc3h4PYtNmuwWWo9rr1VuZrTVXPLNDqO-1Glwh8zhOUhY8QUZfyYJM5bSb_1JFV6pQQKgWiQWpjAl6KJK-YTvC0c599NGxhs22MvFdYYzSA9p4WjGpKyUkPGS_IgAIWM5LX4LIRFj8ggvCiYIDgVIAwrF1iTskW41EyQWCqAbDPGSc1ylF6gdowA3pKfj-WotBt-jSU1P2-Jkh1_pVdxVWtePNqLzPImfGl-ny0B3XENVsX14eX7Qf3tl6BqGncUD__89lDVt6Zz2x6aXy_1zj3308mD226brml_NFi8VqhStw_KiyWwMpGvfOpc-6xm_AS-YeDTbA7j0htIPem24_A7DC95jUXivv18TC9n0Th3qov-A-8LN-Q=&lang=sage&interacts=eJyLjgUAARUAuQ==}{link} for a computer verification of this expression.
Finally, we prove that \(F_5(z, y) \geq 0\)
for all \(z, y \in \Reals_+\).

The polynomial \(F_4(z, y)\) is trivially non-negative when \(z, y \geq 0\) since all coefficients are nonnegative.
We now analyze term by term of \(F_5\):
\begin{itemize}
    \item \(y^2 - y + 2\),

    it is a parabola with positive leading coefficient and negative determinant, hence always positive. 
    
    \item \(y^3 + 9y^2 - 6y + 14\),
    
    it is at least \(9y^2 - 6y + 14\), which is a parabola with positive leading coefficient and negative determinant, hence always positive.

    \item \(y^4+3y^3+36y^2-20y+28\),

    it is at least \(36y^2-20y+28\), which is a parabola with positive leading coefficient and negative determinant, hence always positive.

    \item \(10y^3+49y^2-10y+22\),

    it is at least \(49y^2-10y+22\), which is a parabola with positive leading coefficient and negative determinant, hence always positive.

    \item \(11z^4-2z^3-z^2-2z+8\),

    if \(z \leq 1\), then
    \(11z^4-2z^3-z^2-2z+8 \geq 0 -2 -1 -2 + 8 = 3 > 0\). If \(z \geq 1\), then
    \(11z^4-2z^3-z^2-2z+8 \geq 11z^4-2z^4-z^4-2z^4 + 8 =6z^4 + 8 > 0\).
\end{itemize}

We conclude that \(F_5(z, y), F_4(z, y), F_3(z, \alpha), F_2(z, \alpha), F_1(z, \alpha)\) are nonnegative for all values of \(z, y, \alpha\) in their domain. Consequently, our main inequality is proved.

\section{Integrality Gap}
\label{sec:appendixIG}
In this section, we provide an instance showing the integrality gap is at most \(3/4\) and overview a construction leading to better bounds.

Consider an instance with \(N\) players and \(1\) item, each player has two functions in their distribution support.
For player \(i\), with probability \(1/2\) they sample the function \(v_{i,1}(x) \coloneqq \min \{x, 2^{-i+1}\}\) and with probability \(1/2\) they sample \(v_{i,2}(x) = 0\).
The LP solution \(x_{i, 1, 1} = 2^{-i}\) and \(x_{i,2,0} = 0\) for every \(i \in [N]\) is feasible, satisfying the online constraint with equality, and achieves objective value
\[
\sum_{i = 1}^N \frac12 \min\left\{\frac{2^{-i}}{p_i},2^{-i+1} \right\} = \sum_{i = 1}^N 2^{-i} = 1 - \frac1{2^{N}}.
\]
On the other hand, for the online algorithm, note that for every player that samples the nonzero function it is optimal to give the item until the cap is reached.
Therefore, we have
\[
\E[\OPT] = \Pr[V_1 = v_{1, 1}] \times 1 +
\Pr[V_1 = 0] \times \E[\OPT | V_1 = 0].
\]
Note that if \(V_1 = 0\), \(\OPT\) will assign \(0\) to the first player, and every other player that samples a nonzero function will receive up to its limit. Thus,
\[
\E[\OPT] = \frac12 + \frac12 \sum_{i = 2}^N \frac12 \times\frac1{2^{i-1}}
=
\frac12 + \frac14\left(1 - \frac{1}{2^{N-1}}\right)
=
\frac34 - \frac1{2^{N+1}}.
\]
Thus, by taking \(N \to \infty\), we obtain an integrality gap of \(3/4\).

We remark that there are finite instances achieving integrality gap exactly \(3/4\). However, the previous example leads to the following generalization.
Let \(m \geq 1\) be an integer, consider an instance with
\(N = m^2\) players and \(1\) item.
Each player has two functions in their distribution support.
Player \(i\) samples the function \(v_{i, 1}(x) = \min \{x, (1-m^{-1})^{i-1}\}\) with probability \(1/m\) and \(v_{i,2} = 0\) with the remaining probability.
Consider the feasible solution \(x_{i, 1} = \tfrac{1}{m} \left(1 - \tfrac1{m}\right)^{i - 1}\) and \(x_{i,2} = 0\)
for every \(i \in [N]\).
Then, the objective value of \(x\) goes to \(1\) as \(m \to \infty\).
Like previously, as soon as a player arrives the optimal algorithm should greedily assign as much as possible until it reaches the cap.
One can show that when \(m \to \infty\), the problem converges to a Poisson Process where an arrival at time \(t\)
provides a reward of value \(\min\{R(t), e^{-t}\}\) where \(R(t)\) is the remainder of the item or \(1\) minus how much was assigned in the previous events. Such process has an expected value of \(1 - e^{-\gamma}/2 \approx 0.719\) where \(\gamma\) is the Euler-Mascheroni constant.

{\small
\bibliography{sample-bibliography}
\bibliographystyle{alpha}
}

\clearpage
\end{document}